\theoremstyle{remark}
\newtheorem{theorem}{ {Theorem}}
\newtheorem{corollary}{ {Corollary}}
\newtheorem{proposition}{{Proposition}}
\newtheorem{definition}{{Definition}}
\newtheorem{lemma}{ {Lemma}}
\newtheorem{remark}{ {Remark}}
\definecolor{RedBlue}{rgb}{0.8,0,0.5}
\definecolor{RedBlueGreen}{rgb}{0.8,0.6,0.5}
\definecolor{YellowOrange}{rgb}{0.4,0.4,0}
\definecolor{OliveGreen}{rgb}{0,0.6,0}
\DeclarePairedDelimiter\ceil{\lceil}{\rceil}
\DeclareMathOperator{\rank}{rank}
\newcommand{\greatleq}{%
	\mathrel{\ooalign{\raisebox{.6ex}{$>$}\cr\raisebox{-.6ex}{$\leq$}}}
}
\newcommand{\alert}[1]{\textcolor{black}{#1}}
\newcommand{\Alert}[1]{\textcolor{black}{#1}}
\begin{document}
	
%\title{\alert{Download Time in Cache-Assisted Broadcast-Relay Wireless Networks: A Cache-Storage Latency Tradeoff}}

\title{\alert{Cache-Assisted Broadcast-Relay Wireless Networks: A Delivery-Time Cache-Memory Tradeoff}}

\author{Jaber~Kakar,~\IEEEmembership{Member,~IEEE,}
		Alaa~Alameer,~\IEEEmembership{Member,~IEEE,}
        Anas~Chaaban,~\IEEEmembership{Senior Member,~IEEE,}
        Aydin~Sezgin,~\IEEEmembership{Senior Member,~IEEE,}
        and~Arogyaswami~Paulraj,~\IEEEmembership{Fellow,~IEEE}% <-this % stops a space
%\thanks{This paper was presented in part at the IEEE International Conference on Communications 2017 \cite{Kakar}.}
}

\markboth{Draft}%
{Shell \MakeLowercase{\textit{et al.}}: Bare Demo of IEEEtran.cls for IEEE Communications Society Journals}

\makeatletter
\newcommand*{\rom}[1]{\expandafter\@slowromancap\romannumeral #1@}
\makeatother

\maketitle

\begin{abstract}
An emerging trend of next generation communication systems is to provide network edges with additional capabilities such as storage resources in the form of caches to reduce file delivery latency. To investigate this aspect, we study the fundamental limits of a cache-aided broadcast-relay wireless network consisting of one central base station, $M$ \alert{cache-equipped} transceivers and $K$ receivers from a latency-centric perspective. We use the normalized delivery time (NDT) to capture the per-bit latency for the worst-case file request pattern at high signal-to-noise ratios (SNR), normalized with respect to a reference interference-free system with unlimited transceiver cache capabilities. \alert{The objective is to design the schemes for cache placement and file delivery in order to minimize the NDT. To this end, we establish a novel converse (for arbitrary $M$ and $K$) and two types of achievability schemes applicable to both time-variant and  invariant channels. The \emph{first} scheme is a general \emph{one-shot} scheme for any $M$ and $K$ that synergistically exploits both multicasting (coded) caching and distributed zero-forcing opportunities. Apart from the obvious advantage of low signaling complexity, we show that the proposed one-shot scheme (i) attains gains attributed to both individual and collective transceiver caches %observable in the seminal work of Maddah-Ali Niesen on coded caching, 
(ii) is NDT-optimal for various parameter settings, particularly at higher cache sizes. The \emph{second} scheme, on the other hand, designs beamformers to facilitate both subspace interference alignment and zero-forcing \alert{at lower cache sizes}. Exploiting both schemes, we are able to characterize for various special cases of $M$ and $K$ which satisfy $K+M\leq 4$ the \emph{optimal} tradeoff between cache storage and latency. The tradeoff illustrates that the NDT is the preferred choice to capture the latency of a system rather than the commonly used sum degrees-of-freedom (DoF). In fact, our optimal tradeoff refutes the popular belief that increasing cache sizes translates to increasing the achievable sum DoF. As such, we identify and discuss cases where increasing cache sizes decreases both the delivery time and the achievable DoF.} 
%This is facilitated through establishing a novel converse (for arbitrary $M$ and $K$) and an achievability scheme on the NDT. Our achievability scheme is a synergistic combination of multicasting, zero-forcing beamforming and interference alignment. Further, \alert{in order to decrease} signaling complexity, we also propose a general \emph{one-shot} scheme feasible for any $M$ and $K$ that maintains both local and global caching gains observable in the seminal work of Maddah-Ali Niesen on coded caching. In this regard, we characterize its performance, including its optimality, for various parameter settings through a multiplicative gap comparison with the lower bounds.               
\end{abstract}
%-- one-shot and alignment-based schemes -- on the NDT that synergistically integrate multicasting, zero-forcing beamforming and subspace interference alignment
% Note that keywords are not normally used for peerreview papers.
\begin{IEEEkeywords}
Caching, interference alignment, degrees-of-freedom, latency, delivery time.
\end{IEEEkeywords}

\IEEEpeerreviewmaketitle

\section{Introduction}
\label{sec:intro}

In the last decade, mobile usage in wireless networks has shifted from being connection-centric driven (e.g., phone calls) to content-centric (e.g., HD video) behaviors \cite{Bastug}. In this context, integrating \emph{content caching} in heterogeneous networks (HetNet) represents a viable solution for highly content-centric next generation (5G) mobile networks. \Alert{Specifically, caching the most popular contents in  HetNet \emph{edge nodes}, e.g., eNBs and relays, alleviates backhaul traffic, reduces latency and ameliorates quality of service of mobile users. For example, edge caches were used to balance backhaul costs against transmission power costs which results in optimal sparse beamforming solutions \cite{Tao16}. Further, in \cite{Rezvani17} caching was deployed to minimize the weighted average latency subject to proportional fairness and ergodic resource allocation constraints. In these previous works, the cache was placed only at the base stations. However, it is expected that future networks will be heterogeneous in nature, vastly deploying relay nodes (RN) (e.g., fixed RNs in LTE-A \cite{network_m2_2011} or mobile RNs in form of drones \cite{Kakar_Thesis,KakarUAV}) endowed with content cache capabilities. In this work, we assume that RNs not only provide files to mobile users but also have their own requests from the central base station in a HetNet scenario.} 

A simplistic HetNet modeling this aspect is shown in Fig. \ref{fig:HetNet}. In this model, $M$ RNs act as cache-aided transceivers. Thus, aspects of both transmitter and receiver caching in RNs \alert{are} captured through this network model enabling a low \emph{delivery time} of requested files by $M$ RNs and $K$ user equipments (UE).\footnote{We use the words \emph{delivery time} and \emph{latency} interchangeably.} \alert{Delivery time refers} to the timing overhead required to satisfy all file demands of requesting nodes in the network. \alert{Such type of model is of importance from an \emph{online cache update} perspective in which RNs refresh their cached contents while simultaneously satisfying the UEs file demands in collaboration with the donor eNB (DeNB).} In this work, we are interested in studying the fundamental \alert{delivery-time cache-memory tradeoff} of this particular network. % for arbitrary instances of $M$ and $K$.

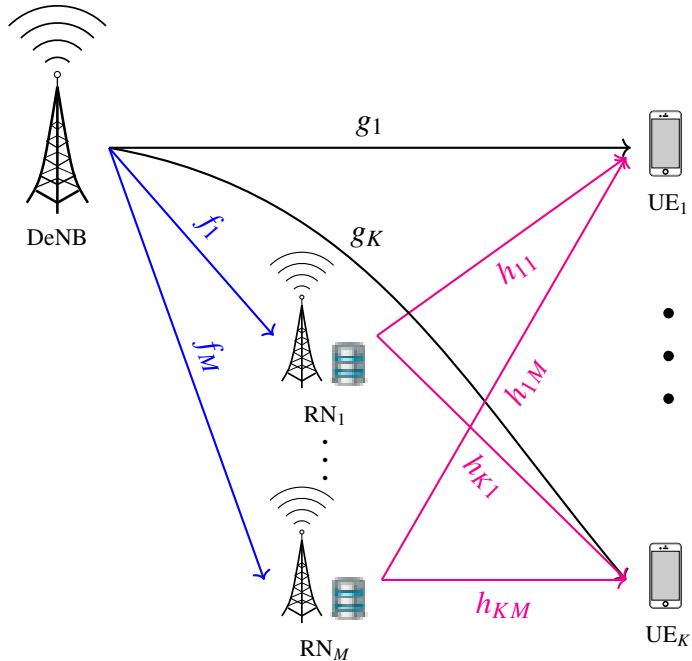
\begin{figure}[h]
	\begin{center}
	\vspace{1em}
	\begin{tikzpicture}[scale=0.8]
	\SymMod
	\end{tikzpicture}
	\vspace{-3.5em}
	\caption{\small A transceiver cache-aided HetNet consisting of one DeNB, $M$ RNs and $K$ UEs. \alert{These nodes are connected through the wireless links $f_i,g_k$ and $h_{ij}$, $i=1,\ldots,M$, $j=1,\ldots,K$. Each RN is equipped with a finite size cache.}}	
	\label{fig:HetNet}
	\end{center}
\end{figure}

In prior work, it was shown that both receiver (Rx) and transmitter (Tx) caching can offer significant latency reduction. Rx caching was first studied in \cite{Maddah-Ali2} for a shared \alert{error-free broadcast} channel with one server and multiple cache-enabled receivers. The authors show that their \emph{coded caching} approach of popular content exploits multicast opportunities and consequently reduces latency. \alert{Coded caching has received considerable attention for various related settings of \cite{Maddah-Ali2}. This includes the rate-memory tradeoff under uncoded cache placement \cite{Wan2016,YuISIT}, decentralized caching under homogeneous \cite{Maddah-Ali_decentral} and heterogeneous cache sizes \cite{Amiri_decentral}, caching with distinct file sizes \cite{Zhang_file_15}, online caching \cite{Pedarsani16}, caching with non-uniform demands \cite{Ji14} and multiple requests \cite{Wei2017} amongst others. Further, coded caching concepts have been applied to device-to-device networks with \cite{Awan15} and without secure delivery \cite{Ji16}, multi-server networks \cite{Shariatpanahi16} and hierarchical networks \cite{Karamchandani16}.}       

On the other hand, the impact of Tx caching on latency has mainly been investigated by analyzing the inverse degrees-of-freedom (DoF) metric of Gaussian interference networks \cite{Soheil}. To this end, the authors of \cite{Maddah_Ali} developed an interference alignment scheme characterizing the \alert{inverse DoF} as a function of the cache storage size for a 3-user Gaussian interference network. The caches are prefetched to allow transmitter cooperation so that interference coordination techniques are applicable. \alert{In \cite{avik}, the authors introduce the normalized delivery time (NDT) as a performance metric which is proportional to the inverse DoF. The first lower bounds on the NDT as a delivery time metric were derived in the same paper for a cache-aided interference channel with an arbitrary number of edge nodes and users.} With these bounds, the optimality of schemes presented in \cite{Maddah_Ali} for certain regimes of cache sizes was shown under uncoded prefetching of the cached content. \alert{These concepts have been recently applied to Fog radio access networks (F-RAN) that consist of a centralized cloud server, cache-assisted edge nodes and mobile users. The NDT of F-RANs has been first fully characterized for two edge nodes and two mobile users \cite{Tandon}. Later on for the setting of arbitrary number of edge nodes and receivers a constant factor characterization of $2$ has been established in \cite{Sengupta17_TIT}. The effect of channel strength and fading on the delivery time of partially connected F-RANs has been investigated in \cite{Azimi16,KakarArxiv,KakarICC} on the basis of the binary fading model \cite{Vahid14} and the linear deterministic model \cite{Avestimehr}.} 

\alert{Recently, the effect of Tx-Rx caching on the delivery time of interference networks was studied in two new lines of research. The \emph{first} being, where \emph{distinct} nodes, i.e., transmitters and receivers are equipped with caches. For this setting, the authors in \cite{Naderializadeh} establish one-shot linear delivery schemes, which avoid channel extension, and show their optimality within a factor of $2$ of the lower bounds on the delivery time. In \cite{Hachem16}, an architecture that separates physical and network layers is proposed and shown to be approximately DoF-optimal for Tx-Rx cache enabled interference networks. In \cite{Xu17}, achievability schemes on \emph{cooperative X-multicast channels} are used to show the multiplicative NDT-optimality of interference channels with caches at both ends of the network. The \emph{second} line of research focuses on the latency-influence of Tx-Rx caching at a \emph{single} node, i.e., \emph{transceiver caching}. Such type of caching is of importance from an \emph{online cache update} perspective. The first paper studying transceiver caching from a channel strength point of view is \cite{conference214}. However, in this paper the authors only characterize a single RN and UE setting. Our paper examines the second line of research but as opposed to \cite{conference214} with an arbitrary number of RNs and UEs. This examination is of interest to understand the compatibility of multicasting (observable in Rx-caching) and interference coordination techniques (observable in Tx-caching) in cache-aided \emph{transceiver} networks.} 

In this paper, we study the fundamental limits on the delivery time for a \emph{transceiver} cache-aided HetNet consisting of one donor eNB (DeNB), \alert{$M$ RNs and $K$ UEs}. \alert{The RNs are equipped with a cache memory of $\mu NL$ bits with $NL$ being the entire library size composed of $N$ files with file size $L$.} We measure the performance through \alert{the latency-centric metric \emph{normalized delivery time per bit} (NDT)} (cf. formal definition of NDT in Eq. \eqref{eq:NDT} in Section \ref{sec:Sym_Model}). This metric, first introduced in \cite{avik}, indicates the worst-case per-bit latency incurred in the wireless network with respect to a reference interference-free system without cache capacity restrictions %. Similarly to the DoF, it is a 
\alert{in the high signal-to-noise ratio (SNR) regime.} %metric. 
The main contributions of this paper are as follows: 
\begin{itemize}[leftmargin=*]
	\item In Section \ref{sec:lw_bd}, we develop a novel class of information theoretic lower bounds on the NDT under the assumption of perfect channel state information (CSI) and uncoded prefetching of the cached content. 
	\item \alert{We show that the optimal schemes for the \alert{extreme} cases of no caching and full caching are DeNB broadcasting and joint DeNB-RN zero-forcing beamforming, respectively.} 
	\item In Section \ref{cha_one_shot}, we propose a generalized \emph{one-shot} scheme that \alert{integrates multicasting schemes used in Rx caching with ZF beamforming} typically deployed in the context of Tx caching. Further, we evaluate its performance with respect to the lower bounds and show its effectiveness at higher cache capacities. \alert{Despite the low complexity of the one-shot scheme, we identify regimes as a function $(\mu,K,M)$ for which it is indeed NDT-optimal. Moreover,} we show that caching more than $\frac{\ceil{\nicefrac{(M-1)}{2}}}{M}$ \alert{fractions of a file} attains (at most) a multiplicative gap of $\frac{8}{3}$ with respect to the optimal NDT.
	\item \alert{In another scheme, we design precoders that synergistically interlace subspace interference alignment and zero-forcing. This design exploits spatially correlated file fractions through balancing zero-forcing opportunities and alignment opportunities. 
	%in terms of uncached and cached file fractions. 
	The scheme is NDT-optimal at lower cache sizes for both time-variant and invariant channels requiring \emph{finite} signal dimensions (e.g., time).}
	\item With the existence of both schemes, we are able to \emph{completely} characterize the \alert{latency-memory tradeoff} for the settings of (a) $M=1$ RNs and $K\in\{1,2,3\}$ UEs, (b) $M=2$ RNs and $K\in\{1,2\}$ UEs and (c) $M=3$ RNs and $K=1$ UEs. 
	%To this end, we establish NDT-optimal schemes that synergistically design precoders facilitating zero-forcing (ZF) beamforming, multicasting and interference alignment. 
	%Our schemes are optimal for both time-variant and invariant channels requiring \emph{finite} signal dimensions (time, frequency, etc.).
	\item Along with our results, we discuss the relationship between achievable (sum) DoF and NDT. To this end, we assess the results from both a rate (e.g., DoF), and latency (e.g., NDT) perspective. \alert{In particular, our optimal latency-memory tradeoff for $K+M\leq 4$ refutes the popular belief that increasing cache sizes translates to increasing the achievable sum DoF. In fact, there are cases where an increase in the cache size decreases the delivery time but also the achievable DoF.}      
\end{itemize}

%The rest of the paper is organized as follows.  In  Section \ref{sec:Sym_Model}, we  introduce the system model. The main results on the DTB, including achievability and converse, for serial and parallel cloud-edge transmissions are  presented in sections \ref{sec:ser_trans} through \ref{sec:ser_trans_ub} and \ref{sec:par_trans} through \ref{sec:par_upp_bound}, respectively. Finally, Section \ref{sec:conclusion} concludes the paper. The appendix of this paper is
%devoted to give further details on lower and upper bounds.

\textbf{Notation:} For any two integers $a$ and $b$ with $a\leq b$, we define $[a:b]\triangleq\{a,a+1,\ldots,b\}$ \alert{and we denote $[1,b]$ simply as $[b]$.} % When $a=1$, we simply write $[b]$ for $\{1,\ldots,b\}$. 
\alert{We use $\mathbf{a}_{t_1}^{t_2}$ and $\mathbf{A}_{t_1}^{t_2}$ with $t_1\leq t_2$ to refer to a vector $\mathbf{a}[t]$ and  a matrix $\mathbf{A}[t]$ concatenated across $t\in[t_1:t_2]$. When $t_1=1$, we simply write $\mathbf{a}^{t_2}$ and $\mathbf{A}^{t_2}$, respectively.} The superscript $(\cdot)^{\dagger}$ represents the transpose of a matrix. \alert{We use $\otimes$ to denote the Kronecker product.}%Furthermore, we define the function $(x)^{+}\triangleq\max\{0,x\}$ and the \emph{modified} \alert{modulo} operator $c=a\Mod b$ for integers $a$ and $b$ as $c=a$ if $a\leq b$ and $c=a\Modreg{b}$ if $a>b$. 

%The definition of the "modified modulo" operation should be reformulated so that it is unambiguous. For example, you could state that the notation "a MOD b" denotes that unique integer in [b] which is congruent to a modulo b, or else that the notation "a MOD b" denotes that unique integer c in [b] such that b divides (c-a). You should also remove the set notation (curly brackets) on b in the definition of the MOD notation.

\section{System Model and Latency Metric}
\label{sec:Sym_Model}

In  this  section, we first outline the system model of the cache-assisted \alert{broadcast-relay wireless network} illustrated in Fig. \ref{fig:HetNet}. Then, we introduce the normalized delivery time per bit (NDT) metric, along with its operational meaning to provide additional context on the adopted model and performance metric.

The network under study consists of $M$ causal full-duplex RNs and a donor eNB (DeNB) which serves $K$ UEs with its desired content over a shared wireless channel. Simultaneously, each RN also requests information from the DeNB. At every transmission interval, we assume that RNs and UEs request \alert{one file each} from the set $\mathcal{W}$ of $N$ popular files, whose elements are all of $L$ bits in size. The transmission interval terminates when the requested files have been delivered. The system model, notation and main assumptions for a \emph{single} transmission interval are summarized as follows:
\begin{itemize}[leftmargin=*]
	\item Let $\mathcal{W}=\{W_1,\ldots,W_{N}\}$ denote the library of popular files, where each file $W_n$ is of size $L$ bits. Each file $W_n$ is chosen uniformly at random from $[2^{L}]$\alert{, where $N=2^{L}$}. UEs and RNs request files $W_{d_u}$, $\forall u\in[K]$, and $W_{d_r}$, $\forall r\in[K+1:K+M]$, from the library $\mathcal{W}$, respectively. The demand vector $\mathbf{d}=(d_1,\ldots,d_{K+M})\in[N]^{K+M}$ denotes the request pattern of RNs and UEs. %Hereby, the first $K$ elements represent the demand of UEs and the remaining $M$ elements the demand of RNs. 
	\alert{This vector is shared among all nodes.}  
	\item The RNs are endowed with a cache capable of storing $\mu NL$ bits, where $\mu\in[0,1]$ corresponds to the \emph{fractional cache size}. It denotes how much
	content can be stored at each RN \alert{relative to the size of the entire library $\mathcal{W}$}. 
	\item The DeNB has access to all $N$ popular files of $\mathcal{W}$.  
	\item Global CSI for \alert{the single-antenna setting} at time instant $t$ is summarized by the channel vectors $\mathbf{f}[t]=\{f_{m}[t]\}_{m=1}^{m=M}\in\mathbb{C}^{M}$ and $\mathbf{g}[t]=\{g_{k}[t]\}_{k=1}^{k=K}\in\mathbb{C}^{K}$ and the channel matrix $\mathbf{H}[t]=\{h_{km}[t]\}_{k=1,m=1}^{k=K,m=M}\in\mathbb{C}^{K\times M}$. Here, $f_m$ and $g_k$ represent the complex channel coefficients from DeNB to RN$_m$ and UE$_k$, respectively, while $h_{km}$ is the channel from RN$_m$ to UE$_k$. We assume that all channel coefficients are drawn i.i.d. from a continuous \alert{random} distribution.
\end{itemize}

Communication over the wireless channel occurs in two consecutive phases, (a) \emph{placement phase} followed by (b) \emph{delivery phase}. These are detailed next, along with the key performance metric termed as \emph{normalized delivery time per bit} (NDT). 
\vspace{.5em}

\paragraph{Placement phase} 
%During this phase, every RN is given full access to the database of $N$ files. The cached content at RN$_m$ is generated through its individual caching function. 
\alert{During this phase, each RN caches content from the library $\mathcal{W}$ by requesting this content from the DeNB using the caching function defined next.}

\begin{definition}(Caching function)\label{def_cache_fct} 
	RN$_m$, $\forall m=1,\ldots,M$, maps each file $W_n\in\mathcal{W}$ to its local \emph{file cache content} \alert{as} 
	\begin{equation}
	S_{m,n}=\phi_{m,n}(W_{n}),\qquad\forall n=1,\ldots,N\nonumber,
	\end{equation} where $\phi_{m,n}(\cdot)$ is the caching function. 
	All $S_{m,n}$ are concatenated to form the total cache content 
	\begin{equation}
	S_m=(S_{m,1},S_{m,2},\ldots,S_{m,N})\nonumber
	\end{equation} 
	at RN$_m$.
\end{definition}
\vspace{.5em}	
\alert{Hereby, we assume symmetry in caching, i.e., each file $W_n,\forall n\in[N],$ is cached with at most $\mu L$ number of bits.\footnote{For instance, in cases, where the files are requested in a non-uniform fashion \cite{Niesen17}, asymmetric caching across files is of relevance.} In consequence,} the entropy $H(S_{m,n})$ of each component $S_{m,n}$, $n=1,\ldots,N$, is upper bounded by $\nicefrac{\mu NL}{N}=\mu L$. The definition of the caching function presumes that every file $W_i$ is subjected to individual caching functions. Thus, permissible caching policies allow for intra-file coding but avoid coding across files known as inter-file coding. Moreover, the caching policy is typically kept fixed over long transmission intervals. Thus, it is indifferent to the UEs request pattern and of channel realizations.     
\vspace{.5em}	
\paragraph{Delivery phase} 
In this phase, a transmission policy at DeNB and all RNs is applied to satisfy the given requests $\mathbf{d}$ under the current channel realizations $\mathbf{f},\mathbf{g}$ and $\mathbf{H}$. Throughout the remaining definitions, we denote the number of channel uses required to satisfy all file demands by $T$. \alert{This time depends on the demand vector $\mathbf{d}$ and the channel realizations $\mathbf{f},\mathbf{g}$ and $\mathbf{H}$, i.e., $T=T(\mathbf{d},\mathbf{f},\mathbf{g},\mathbf{H})$.} \alert{In the sequel, we exploit the lowercase subscripts $s,r$ and $u$ for notations concerning DeNB, RNs and UEs, respectively.} 
\vspace{.5em}
\begin{definition}(Encoding functions)\label{def_enc_fct} The DeNB encoding function at time instant $t\in[T]$
	\begin{equation}
	\psi_{s}^{[t]}:[2^{NL}]\times [N]^{M+K}\times\mathbb{C}^{Mt}\times\mathbb{C}^{Kt}\times\mathbb{C}^{Kt\times M}\rightarrow \mathbb{C}\nonumber
	\end{equation} 
	determines the DeNBs transmission signal %$x_{s}[t]=\psi_{s}^{[t]}(\mathcal{W},\mathbf{d},\mathbf{f}_{t=1}^{t},\mathbf{g}_{t=1}^{t},\mathbf{H}_{t=1}^{t})$ 
	\alert{$x_{s}[t]=\psi_{s}^{[t]}(\mathcal{W},\mathbf{d},\mathbf{f}^{t},\mathbf{g}^{t},\mathbf{H}^{t})$} subjected to an average power constraint of $P$. The encoding function of the causal \emph{full-duplex} RN$_m$ at time instant $t\in[T]$ is defined by
	\begin{align}
	\psi_{r,m}^{[t]}:[2^{\mu NL}]\times \mathbb{C}^{t-1}\times [N]^{M+K}\times\mathbb{C}^{Mt}\times\mathbb{C}^{Kt}\times\mathbb{C}^{Kt\times M}\rightarrow \mathbb{C},\nonumber
	\end{align} 
	which determines the \alert{symbol $x_{r,m}[t]=\psi_{r,m}^{[t]}(S_m,\mathbf{y}_{r,m}^{t-1},\mathbf{d},\mathbf{f}^{t},\mathbf{g}^{t},\mathbf{H}^{t})$} while satisfying the average power constraint given by the parameter $P$.
\end{definition}
\vspace{.5em}
Hereby, the \alert{symbols} $x_{s}[t]$ and $x_{r,m}[t]$ are transmitted over $t\in[T]$ channel uses. For any time instant $t$, $\psi_{r,m}^{[t]}$ accounts for the simultaneous reception and transmission through incoming and outgoing wireless links at RN$_m$. To be specific, at the $t$--th channel use the encoding function $\psi_{r,m}^{[t]}$ maps the cached content $S_m$, the received signal $\mathbf{y}_{r,m}^{t-1}$ (see Eq. \eqref{eq:Gaus_mod_RN}), the demand vector $\mathbf{d}$ and global CSI to the \alert{symbol} $x_{r,m}[t]$.

After transmission, the received signals at UE$_k$ is given by
\begin{equation}\label{eq:Gaus_mod}
y_{u,k}[t]=
g_{k}[t]x_{s}[t]+\sum_{m=1}^{M}h_{km}[t]x_{r,m}[t]+z_{u,k}[t],\forall t\in[T],
\end{equation}
where $z_{u,k}[t]$ denotes complex i.i.d. Gaussian noise of zero mean and unit power. The received signal at RN$_{m}$ is given by
\begin{equation}\label{eq:Gaus_mod_RN}
y_{r,m}[t]=
f_{m}[t]x_{s}[t]+z_{r,m}[t],\forall t\in[T],
\end{equation}	
where $z_{r,m}[t]$ is additive zero mean, unit-power i.i.d. Gaussian noise. The desired files are decoded using the following functions.
\vspace{.5em}
\begin{definition}(Decoding functions)\label{def_dec_fct} 
	The decoding operation at UE$_k$ follows the mapping
	\begin{equation}
	\eta_{u,k}:\mathbb{C}^{T}\times [N]^{K+M}\times\mathbb{C}^{MT}\times\mathbb{C}^{KT}\times\mathbb{C}^{KT\times M}\rightarrow [2^{L}]\nonumber 
	\end{equation} 
	to provide an estimate \alert{$\hat{W}_{d_k}=\eta_{u,k}(\mathbf{y}_{u,k}^{T},\mathbf{d},\mathbf{f}^{T},\mathbf{g}^{T},\mathbf{H}^{T})$} of the requested file $W_{d_k}$.
	%The decoding function $\eta_{u,k}$ takes as its arguments $\mathbf{f}[t]$, $\mathbf{g}[t]$, $\mathbf{H}[t]$, $\forall t\in[T]$, the available demand pattern $\mathbf{d}$ and the channel outputs $\mathbf{y}_{u,k}^{T}$
	%=\eta_{u,k}\big(\mathbf{y}_{u,k}^{T},\mathbf{d},\mathbf{h}\big)$ 
	In contrast to decoding at UE$_k$, all RNs explicitly leverage their cached content according to
	\begin{align}
	\eta_{r,m}:\mathbb{C}^{T}\times [2^{\mu NL}]\times [N]^{K+M}\times\mathbb{C}^{MT}\times\mathbb{C}^{KT}\times\mathbb{C}^{KT\times M}\rightarrow [2^{L}]\nonumber 
	\end{align} 
	to generate \alert{$\hat{W}_{d_r}=\eta_{r,m}(\mathbf{y}_{r,m}^{T},S_m,\mathbf{d},\mathbf{f}^{T},\mathbf{g}^{T},\mathbf{H}^{T})$} as an estimate of the requested file $W_{d_r}$.
\end{definition}

\alert{The reliability measure of a jointly proposed placement and delivery scheme is governed by its worst-case error probability defined as
\begin{equation}\label{eq:error_prob}
P_e=\max_{\mathbf{d}\in [N]^{K+M}}\max_{j\in[K+M]}\mathbb{P}(\hat{W}_{d_j}\neq W_{d_j})
\end{equation} which is taken over error probabilities of $M$ RNs and $K$ UEs for all possible demands.} A proper choice of caching, encoding and decoding functions that satisfy the reliability condition; that is, the worst-case error probability $P_e$  
%\begin{equation}\label{eq:error_prob}
%P_e=\max_{\mathbf{d}\in [N]^{K+M}}\max_{j\in[K+M]}\mathbb{P}(\hat{W}_{d_j}\neq W_{d_j})
%\end{equation} 
approaches $0$ as \alert{$T\rightarrow\infty$}, is called a \emph{feasible policy}. \alert{For strictly positive rates $T\rightarrow\infty$ is congruent with $L\rightarrow\infty$}.\footnote{\alert{This is due to the fact that $N=2^{L}$ files are chosen uniformly at random from the index set $[2^{TR}]$ with $R$ being the rate. In consequence, $L=TR$ which shows the equivalence of $T\rightarrow\infty$ and $L\rightarrow\infty$.}} Now we are ready to define the delivery time per bit and its normalized version.
\vspace{.5em}
\begin{definition}(Delivery time per bit \cite{avik}) 
	The delivery time per bit (DTB) %for a given request pattern $\mathbf{d}$ and channel realization $\mathbf{f},\mathbf{g}$ and $\mathbf{H}$ 
	is defined as 
	\begin{equation}\label{eq:DTB}
	\Delta(\mu,P)=\max_{\mathbf{d}\in [N]^{K+M}}\limsup_{L\rightarrow\infty}\frac{\mathbb{E}[T(\mathbf{d},\mathbf{f},\mathbf{g},\mathbf{H})]}{L},
	\end{equation} 
	where the expectation is over the channel realizations $\mathbf{f},\mathbf{g}$ and $\mathbf{H}$.
\end{definition}  
\vspace{.5em}

In the definition above, $T$ represents the \alert{delivery} time \cite{Liu2011}. The normalization of the expected delivery time by the file size $L$ gives insight about the \alert{per-bit latency}. In this context, the DTB measures the per-bit latency, i.e., the latency incurred per-bit when transmitting the requested files through the wireless channel, within a single transmission interval for the \emph{worst-case} request pattern of RNs and UEs as $L\rightarrow\infty$. The DTB depends on the fractional cache size $\mu$ and the power level $P$.   

In analogy to the degrees-of-freedom metric \cite{Etkin08}, %,KakarMDPI}, 
the normalized delivery time per bit (NDT) is a high-SNR metric that relates the DTB to that of a point-to-point reference system. 
\vspace{.5em}
\begin{definition} (Normalized delivery time \cite{avik}) 
	The NDT is defined as 
	\begin{equation}\label{eq:NDT}
	\delta(\mu)=\lim_{P\rightarrow\infty}\frac{\Delta(\mu,P)}{1/\log(P)}.
	\end{equation} 
	The minimum NDT $\delta^{\star}(\mu)$ is \alert{the infimum of $\delta(\mu)$ over all feasible policies}.
\end{definition}  
\vspace{.5em}
The NDT compares the \emph{delivery time per bit} achieved by the feasible coding scheme for the worst-case demand scenario to that of a baseline interference-free system in the high SNR regime. The \alert{feasible} scheme, on the one hand, allows for reliable transmission of one file of $L$ bits to each Rx on average in $\mathbb{E}[T(\mathbf{f},\mathbf{g},\mathbf{H})]$ channel uses, i.e., $1$ bit in $\mathbb{E}[T(\mathbf{f},\mathbf{g},\mathbf{H})]/L$ channel uses. The baseline system (e.g., a point-to-point channel), on the other hand, can transmit $\log(P)$ bits to a single Rx in one channel use, i.e., $1$ bit in $1/\log(P)$ channel uses \alert{in the worst case}. Therefore, the resulting NDT $\delta(\mu)$ indicates that the worst-case delivery time for one bit of the cache-aided network at fractional cache size $\mu$ is $\delta(\mu)$ times larger \alert{than} the time needed by the baseline system.   

From \cite[Lemma 1]{KakarArxiv}, it readily follows that the NDT is a convex function in $\mu$. This means that a cache-aided network shown in Fig. \ref{fig:HetNet} operating at fractional cache size $\mu=\alpha\mu_1+(1-\alpha)\mu_2$ for any $\alpha\in [0,1]$ achieves \alert{less (or equal) NDT than the \emph{convex combination} $\alpha\delta(\mu_1)+(1-\alpha)\delta(\mu_2)$} through applying known feasible schemes applicable at fractional cache sizes $\mu_1$ and $\mu_2$ on distinct $\alpha$ and $1-\alpha$-fractions of the files, respectively. This strategy is known as \emph{memory sharing}.
\section{Main Results}
\label{sec:main_res}

In this section, we state our main results on the minimum NDT for the cache-enabled \alert{broadcast-relay} wireless network of Fig. \ref{fig:HetNet} for $M$ RNs and $K$ UEs. Hereby, our main results are presented in Theorems \ref{theorem_lower_bound}--\ref{th:opt_NDT_tradeoff}. \alert{They include, respectively, a novel lower bound, an upper bound (achievability) on the NDT and a complete NDT-tradeoff characterization for $K+M\leq 4$.} Further, we formulate multiple corollaries that evaluate the performance of the scheme presented in Theorem \ref{th:one_shot_ach_NDT} with respect to the lower bound of Theorem \ref{theorem_lower_bound} \alert{in terms of a multiplicative gap given by
\begin{equation*}
\frac{\delta_\text{ach}(\mu)}{\delta_{\text{LB}}(\mu)}.
\end{equation*} Hereby, $\delta_\text{ach}(\mu)$ and $\delta_{\text{LB}}(\mu)$ denote, respectively, an upper and lower bound on the NDT.}     

\begin{theorem}[Lower bound on NDT]\label{theorem_lower_bound}
	For the transceiver cache-aided network with one DeNB, $M$ RNs each endowed with a cache of fractional cache size $\mu\in[0,1]$, $K$ UEs and a file library of $N\geq K+M$ files, the optimal NDT is lower bounded under perfect CSI at all nodes by
	\begin{align}\label{eq:NDT_lw_bound}
	\delta^{\star}&(\mu)\geq\max\Big\{1,\max_{\substack{\ell\in[\bar{s}:M],\\s\in[\min\{M+1,K\}]}}\delta_{\text{LB}}(\mu,\ell,s)\Big\},
	\end{align} where $\bar{s}=M+1-s$ and 
	\begin{align}\label{eq:NDT_lw_bound_inner_comp}
	&\hspace{-.25cm}\delta_{\text{LB}}(\mu,\ell,s)=\frac{K+\ell-\mu(\bar{s}\big(K-s+\frac{(\bar{s}-1)}{2}\big)+\frac{\ell}{2}(\ell+1))}{s}.
	\end{align}
\end{theorem}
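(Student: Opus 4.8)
The plan is to derive \eqref{eq:NDT_lw_bound} from an information-theoretic converse: for each admissible pair $(\ell,s)$ I would prove the single scalar inequality $s\,\delta^{\star}(\mu)\ge K+\ell-\mu\big(\bar s(K-s+\tfrac{\bar s-1}{2})+\tfrac{\ell}{2}(\ell+1)\big)$ by a genie-aided cut-set bound, Fano's inequality and a high-SNR degrees-of-freedom count; the outer maximization over $\ell\in[\bar s:M]$ and $s\in[\min\{M+1,K\}]$, together with the trivial bound $\delta^{\star}\ge 1$ (delivering even one file needs at least one point-to-point channel time), then gives the stated result. First I would fix a worst-case demand $\mathbf d$ whose $K+M$ requested indices are all distinct, which is possible precisely because $N\ge K+M$, and work with an arbitrary feasible policy delivering in $T$ channel uses with vanishing error probability. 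I would invoke the uncoded-prefetching assumption so that each cache component $S_{m,n}$ is a function of $W_n$ alone and satisfies $H(S_{m,n})\le\mu L$; only the cache components of the $K+M$ requested files are then relevant.

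Next I would fix a set of $s$ UEs and $\ell$ RNs. The admissibility constraints $s\le\min\{M+1,K\}$ and $\ell\ge\bar s=M+1-s$ are exactly what force $\ell+s\ge M+1$, the size needed for the chosen receivers to form a cut against all $M+1$ transmit streams $x_s,x_{r,1},\dots,x_{r,M}$. The two structural facts I would exploit are: (i) the $s$ UE received sequences are generic linear mixtures of these $M+1$ streams and hence carry sum-DoF at most $s$, so $I(\text{files};\text{UE outputs})\le s\,T\log P+o(\log P)$; and (ii) each relay output $y_{r,m}[t]=f_m[t]x_s[t]+z_{r,m}[t]$ is a \emph{degraded scalar} observation of the single DeNB stream, whose direction already lies in the span seen by the UEs, so adjoining the $\ell$ relay observations does not raise the cut's sum-DoF above $s$, yet together with the relay caches they suffice to decode the $\ell$ relay files $\hat W_{d_r}=\eta_{r,m}(\mathbf y_{r,m}^{T},S_m,\cdots)$. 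Because each $x_{r,m}[t]$ is a deterministic causal function of its cache and past input, decoding a file lets the super-decoder rebuild $x_{r,m}^{T}$ and peel it off, propagating the argument to the next stage.

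I would then chain the mutual-information inequalities, writing $(K+\ell)L=H(\text{the }K+\ell\text{ files})$ and splitting it into the cut term bounded by $s\,T\log P$, conditional-entropy terms driven to zero by Fano, and the relevant cache entropy. Revealing the caches in a fixed order—so that the $i$-th decoded relay file is credited by an accumulating pool of $i$ cache components, and similarly for the $K-s$ UE files not used as primary receivers—is what turns the per-file credit $\mu L$ into the triangular sums: $\tfrac{\ell}{2}(\ell+1)$ from ordering the $\ell$ relay files, $\tfrac{\bar s(\bar s-1)}{2}$ from ordering the $\bar s$ streams that complete the cut, and $\bar s(K-s)$ from the cache help available to the $K-s$ remaining UEs. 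Dividing by $s$ and letting $L\to\infty$ and then $P\to\infty$ converts the count of $K+\ell$ decoded files minus the accumulated cache credit into the claimed bound on $s\,\delta^{\star}(\mu)$.

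The step I expect to be the main obstacle is the cache bookkeeping inside this ordered chain: producing \emph{exactly} the coefficient $\bar s(K-s+\tfrac{\bar s-1}{2})+\tfrac{\ell}{2}(\ell+1)$ requires choosing the decoding order and the cache-revelation schedule so that the telescoping mutual-information terms line up, and simultaneously certifying that the degraded relay observations never inject more than the one DeNB-stream DoF into the cut—otherwise the effective denominator would exceed $s$ and the bound would weaken. A secondary subtlety is the relays' dual role, being receivers that must decode their own files while also being transmitters whose signals ferry cached content to the UEs; I would handle this by reconstructing $x_{r,m}^{T}$ immediately after each relay file is decoded and carrying it as known side information through all later stages, so that the causal encoding functions do not obstruct the entropy telescoping.
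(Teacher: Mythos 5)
Your architecture is the same as the paper's: distinct worst-case demands (using $N\ge K+M$), a genie-aided bound for each admissible pair $(\ell,s)$ with $\ell+s\ge M+1$, Fano plus a high-SNR count of $sT\log P$, causal reconstruction of relay signals, cache bookkeeping yielding $\bar s(K-s)+\tfrac{\bar s(\bar s-1)}{2}+\tfrac{\ell(\ell+1)}{2}$ (an equivalent regrouping of the paper's $\bar s(K+\ell-s)+\tfrac{(\ell-\bar s)(\ell-\bar s+1)}{2}$), then maximization over $(\ell,s)$ together with $\delta^{\star}\ge 1$. The genuine gap sits exactly where you predicted: your structural fact (ii). You adjoin the relay outputs $\mathbf{y}^{T}_{r,[1:\ell]}$ to the cut and justify "no extra DoF" by claiming the relay observation direction already lies in the span seen by the UEs. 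That is false: relay $m$ observes only the $x_s$ coordinate, i.e., the direction $(1,0,\dots,0)$ in the $(M+1)$-dimensional transmit space, whereas the $s$ UE rows $(g_k,h_{k1},\dots,h_{kM})$ span a generic $s$-dimensional subspace that almost surely excludes that direction when $s\le M$. The relay outputs therefore carry one additional stream's worth of information ($x_s$ up to noise) beyond the UE outputs; this is not a technicality, since the paper's own achievable schemes have the DeNB send RN-intended symbols that are zero-forced at (or aligned away from) the UEs, so the extra $\Theta(T\log P)$ term is real. A cut built literally as you describe yields denominator $s+1$, i.e., a strictly weaker bound than \eqref{eq:NDT_lw_bound_inner_comp}; and your own hedge that the relays "never inject more than the one DeNB-stream DoF" concedes exactly the one dimension you cannot afford, since the theorem needs them to inject none.

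The repair, which is what the paper does, is to keep the relay received signals out of the genie set entirely: the cut is $\{\mathbf{y}^{T}_{u,[1:s]},S_{[1:\ell]}\}$, so the only term scaling as $\log P$ is $sT\log(P)$, and the caches enter only through their bounded entropies $H(S_{i,j})\le\mu L$. The relay observations are then derived, not granted: at $t=1$ each $x_{r,m}[1]$, $m\in[1:\ell]$, is a function of $S_m$ alone; the $s$ UE equations pin down the at most $s$ remaining unknowns $x_s[1]$ and $x_{r,[\ell+1:M]}[1]$; this reconstructs the noiseless relay inputs $f_m[1]x_s[1]$, and induction on $t$ propagates the argument. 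The unreconstructible noises are handled by conditioning on them (they are independent, so this is free) and paying only $T\epsilon_P\log(P)+L\epsilon_L$ via Fano and the Markov chain in the paper's steps (f) and (g). Note also a mechanical slip: the relay transmit signals are rebuilt from caches and reconstructed past inputs, not "after each relay file is decoded" --- file decoding is the conclusion of the induction, not its engine, and using it as the engine would be circular. With the cut corrected this way your bookkeeping plan does land on the stated coefficient, so the proposal is repairable, but as written its central counting step fails.
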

\begin{proof}
	The proof of Theorem \ref{theorem_lower_bound} will be given in Section \ref{sec:lw_bd}. \alert{To provide some insight into the lower bound presented in Theorem \ref{theorem_lower_bound}, however, we outline a short sketch of the proof. Particularly, we summarize the ideas when deriving the two terms in Eq. \eqref{eq:NDT_lw_bound}.} 
		
	\alert{First, we find the bound $\delta_{\text{LB}}(\mu,\ell,s)$ by exploiting the following main observation in the high SNR regime (where noise becomes negligible). That is, given the channel outputs of any $s$ UEs (e.g., of UE$_1$, UE$_2$, $\ldots$, UE$_s$ denoted by $\mathbf{y}_{u,[1:s]}^{T}$),} in addition to the cached content of $\ell$ RNs \alert{(e.g, cached contents of RN$_1$, RN$_2$, $\ldots$, RN$_\ell$ represented by $S_{[1:\ell]}$)} such that $s+\ell\geq M+1$ enables the decoding of all $K$ files requested by the UEs as well as $\ell$ files desired by the RNs. This is due to the fact that with this information set, all $M+1$ transmit signals \alert{consisting of} the DeNB signal $x_s$ and the RNs transmit signals $x_{r,m},\forall m\in[M]$, can be reproduced. This in turn, allows the reconstruction of the following channel outputs: On the one hand, the remaining $K-s$ channel outputs of the UEs and on the other hand $\ell$ outputs of the RNs. With the availability of $K$ UE channel ouputs as well $\ell$ RN channel outputs and cached contents, $K+\ell$ files in total become decodable. 
	
	\alert{Second, the unity lower bound follows from the fact that the NDT is bounded from below by the performance of the reference interference-free system with an NDT of $1$}. The maximum over these two lower bounds concludes the proof of Theorem \ref{theorem_lower_bound}.                 
\end{proof}

Before establishing the achievability at fractional cache sizes in the range $\mu\in(0,1)$, we consider two special corner points at fractional cache sizes $\mu=0$ and $\mu=1$ for \emph{arbitrary} $M$ and $K$. These are the cases where the RN has either \emph{zero-cache} ($\mu=0$) or \emph{full-cache} ($\mu=1$) capabilities. In the following lemma, we expound the \emph{optimal NDT} for these two points.  
\vspace{.5em}
\begin{lemma}\label{corr_mu_0_and_1}
	For the transceiver cache-aided network with one DeNB, $M$ RNs each endowed with a cache of fractional cache size $\mu$, $K$ UEs and a file library of $N\geq M+K$ files, the optimal NDT is
	\begin{equation}\label{eq:opt_NDT_mu_0}
	\delta^{\star}(\mu)=K+M\:\:\text{ for }\mu=0,
	\end{equation} 
	achievable via DeNB broadcasting to $M$ RNs and $K$ UEs, and 
	\begin{equation}
	\label{eq:opt_NDT_mu_1}
	\delta^{\star}(\mu)=\max\Bigg\{1,\frac{K}{M+1}\Bigg\}\:\:\text{ for }\mu=1,
	\end{equation} 
	achievable via zero-forcing beamforming for an $(M+1,K)$ MISO\footnote{In MISO broadcast channels, we use the notation, $(a,b)$ for integers $a$ and $b$ to denote a broadcast channel with $a$ transmit antennas and $b$ single antenna receivers.} broadcast channel. 
\end{lemma}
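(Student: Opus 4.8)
The plan is to establish each of the two equalities by a matching pair of bounds: an achievability (upper) bound exhibited via an explicit transmission scheme, and a converse (lower) bound obtained by specializing the general bound of Theorem~\ref{theorem_lower_bound}. Since that theorem is already available, each converse reduces to choosing the right pair $(\ell,s)$ and simplifying $\delta_{\text{LB}}(\mu,\ell,s)$ in \eqref{eq:NDT_lw_bound_inner_comp}. The whole argument is therefore organized as two independent corner-point analyses.

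For the zero-cache point $\mu=0$, achievability is immediate: a plain orthogonal (TDMA) scheme in which the DeNB transmits the $K+M$ requested files one after another delivers each file at full point-to-point DoF to its intended node---the UEs directly over $g_k$ and the RNs over $f_m$---so that $\delta=K+M$ is achievable, with no relaying even required. For the matching converse I would specialize Theorem~\ref{theorem_lower_bound}: setting $\mu=0$ collapses \eqref{eq:NDT_lw_bound_inner_comp} to $\delta_{\text{LB}}(0,\ell,s)=(K+\ell)/s$, and evaluating at $s=1$ (whence $\bar{s}=M$ forces $\ell=M$) gives $\delta_{\text{LB}}(0,M,1)=K+M$. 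The two bounds coincide, proving \eqref{eq:opt_NDT_mu_0}.

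For the full-cache point $\mu=1$, each RN stores the entire library, so the RNs' own demands are met with zero delivery cost, and the DeNB together with all $M$ RNs share every UE-requested file and can act as a coordinated $(M+1)$-antenna transmitter. Achievability then follows from zero-forcing beamforming over the resulting $(M+1,K)$ MISO broadcast channel, which under perfect CSI attains sum-DoF $\min\{M+1,K\}$ and hence delivers the $K$ UE files in normalized time $K/\min\{M+1,K\}=\max\{1,K/(M+1)\}$; the case $K\le M+1$ degenerates to $\delta=1$. For the converse, the unit term in \eqref{eq:NDT_lw_bound} already gives $\delta^{\star}\geq 1$, while for $K\geq M+1$ I would evaluate $\delta_{\text{LB}}(1,\ell,s)$ at $s=M+1$ (legitimate since then $\min\{M+1,K\}=M+1$), for which $\bar{s}=0$ and the bound simplifies to $\bigl(K+\ell-\tfrac{\ell(\ell+1)}{2}\bigr)/(M+1)$; since $\ell-\tfrac{\ell(\ell+1)}{2}\le 0$ on $\ell\in[0:M]$ with equality at $\ell=0$, the choice $\ell=0$ yields $\delta_{\text{LB}}(1,0,M+1)=K/(M+1)$, matching the achievable value and proving \eqref{eq:opt_NDT_mu_1}.

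The routine parts here are the substitutions into Theorem~\ref{theorem_lower_bound} and the two DoF computations. The step that needs the most care is the $\mu=1$ achievability: I must justify that the distributed DeNB--RN antennas genuinely realize an $(M+1,K)$ MISO broadcast channel---that the full caches, together with the shared demand vector and perfect CSI, let the RNs synthesize the required zero-forcing coefficients coherently---and that the simultaneous (trivial) satisfaction of the RNs' own demands consumes no additional channel resources. Verifying the $\min\{M+1,K\}$ sum-DoF of this MISO broadcast channel and checking the degenerate regime $K\le M+1$ completes the argument.
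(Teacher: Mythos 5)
Your proposal is correct and follows essentially the same route as the paper: the converse specializes Theorem~\ref{theorem_lower_bound} with exactly the same choices the paper uses ($\delta_{\text{LB}}(0,M,1)=K+M$ at $\mu=0$, and $\delta_{\text{LB}}(1,0,M+1)=K/(M+1)$ together with the unity bound at $\mu=1$), while achievability is DeNB unicasting/TDMA at $\mu=0$ and cooperative DeNB--RN zero-forcing over the $(M+1,K)$ MISO broadcast channel at $\mu=1$, just as in the paper. The care point you flag (that full caches, the shared demand vector, and perfect CSI let the RNs act coherently as extra transmit antennas) is handled implicitly the same way in the paper's proof.
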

\begin{proof}
	For the proof, it suffices to find a cache transmission policy that matches the lower bound in Thoerem \ref{theorem_lower_bound} for $\mu=0$ and $\mu=1$, respectively. On the one hand, if $\mu=0$, we note that $\delta_{\text{LB}}(0,M,1)=K+M$. On the other hand, if $\mu=1$, we observe that $\delta_{\text{LB}}(1,0,M+1)=\nicefrac{K}{(M+1)}$ if $M+1\leq K$ and $\delta_{\text{LB}}(1,\ell,s)<1$ if $M+1>K$. Next, we consider the achievability at $\mu=0$ and $\mu=1$. For these two fractional cache sizes, the network in Fig. \ref{fig:HetNet} reduces to a SISO broadcast channel (BC) with $K+M$ users for $\mu=0$ and an $(M+1,K)$ MISO broadcast channel for $\mu=1$. The approximate \emph{per-user} rate (neglecting $o(\log(P))$ bits) for these two channels are known to be $\frac{1}{(K+M)}\log(P)$ \cite{Jafar07} (achievable through unicasting each user's message) \alert{for $\mu=0$} and $\frac{1}{K}\min\{M+1,K\}\log(P)$ (achievable through zero-forcing beamfoming) \cite{Weingarten06} \alert{for $\mu=1$}, respectively. Equivalently, each user needs the reciprocal per-user rate of signaling dimensions (e.g., channel uses in time) %or frequency) 
	to retrieve one desired bit \alert{reliably}. Thus, the approximate DTB becomes, respectively, $\frac{(K+M)}{\log(P)}$ and $\frac{K}{\min\{M+1,K\}\log(P)}$. Normalizing the delivery time per bit by the point-to-point reference DTB $\frac{1}{\log(P)}$ generates the NDTs $K+M$ and $\max\{1,\nicefrac{K}{(M+1)}\}$. This establishes the NDT-optimality at these fractional cache sizes.
\end{proof}
%\vspace{.5em}
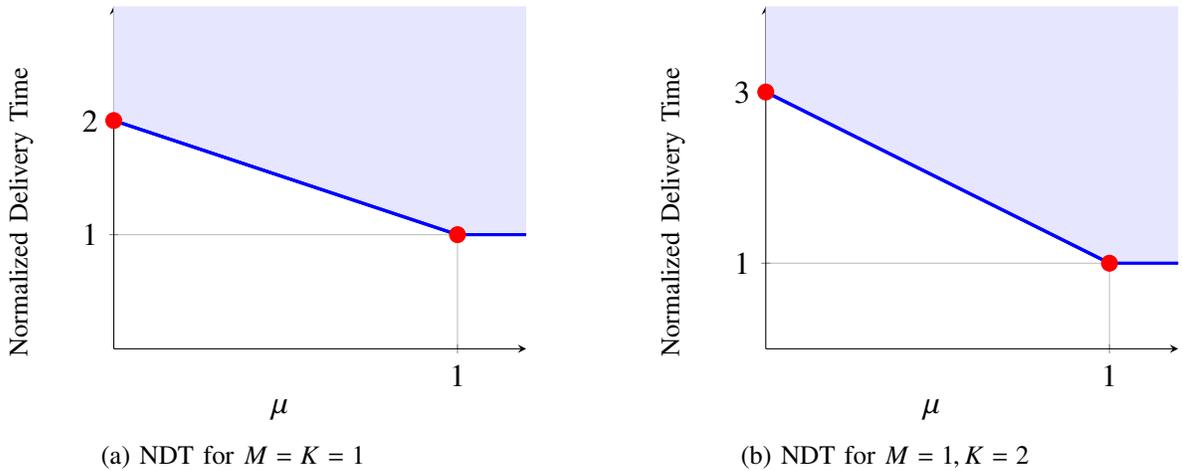
\begin{figure*}
	\centering
	\begin{subfigure}[b]{0.475\textwidth}
		\centering
		\begin{tikzpicture}[scale=0.8]
		\PlotNDTMOneKOne
		\end{tikzpicture}
		\caption[NDT for $M=K=1$]{\small NDT for $M=K=1$}
		\label{fig:NDT_M_K_1_1}
	\end{subfigure}
	\hfill
	\begin{subfigure}[b]{0.475\textwidth}  
		\centering 
		\begin{tikzpicture}[scale=0.8]
		\PlotNDTMOneKTwo
		\end{tikzpicture}
		\caption[NDT for $M=1,K=2$]{\small NDT for $M=1, K=2$} 
		\label{fig:NDT_M_K_1_2}       
	\end{subfigure}
	\caption[NDT as a function of $\mu$ for $M=1$ and $K\leq 2$]
	{\small Optimal NDT as a function of $\mu$ for $M=1$ and $K\leq 2$} 
	\label{fig:NDT_M_K_1_LEQ_2}
\end{figure*} 
\begin{remark}
	From Lemma \ref{corr_mu_0_and_1}, we infer that the caching problem for the system illustrated in Fig. \ref{fig:HetNet} establishes the behavior of the network in terms of delivery time between the two extremes -- SISO BC with $K+M$ users and an $(M+1,K)$ MISO BC. This analysis will reveal what kind of schemes other than simple unicasting and zero-forcing will be optimal for $0<\mu<1$.
\end{remark}
\begin{remark}[\alert{Optimality of Memory Sharing of Zero-cache and Full-cache Schemes}]\label{remark:memory_sharing_opt}\alert{Memory sharing of optimal schemes at extreme points of zero-cache ($\mu=0$) and full-cache ($\mu=1$) may be optimal. Or in other words, the optimal NDT of intermediate points at fractional cache sizes $0<\mu<1$ may be achievable through successively time-sharing between unicasting and zero-forcing on $(1-\mu)$ and $\mu$ fractions of the files, respectively. This implies that treating uncached and cached file fractions \emph{independently} by applying two \emph{separate} delivery schemes -- unicasting and zero-forcing -- can be delivery time optimal. However, as shown in Fig. \ref{fig:NDT_M_K_1_LEQ_2}, this only happens for cases where $M=1$ and $K\leq 2$. In these cases, the lower bound $\delta_{\text{LB}}(\mu,1,1)$ coincides with the achievable NDT.} 
\end{remark}

\alert{Intuitively, the observation of Remark \ref{remark:memory_sharing_opt} makes sense because of the following two reasons. On the one hand, when $M=1$, there are (with respect to $(1-\mu)$ fractions of uncached information on each file) (i) no multicasting opportunities on the DeNB-RN broadcast channel and (ii) no zero-forcing opportunities on the RN-UE channel. On the other hand, when $K\leq 2$ joint DeNB-RN zero-forcing beamforming guarantees (with respect to the cached fractions of the files all $K$ UEs desire) that every UE receives in \emph{each} channel use desired information on its requested file since $K\leq M+1=2$. Thus, in conclusion, for these cases applying unicasting and joint DeNB-RN beamforming successively on $(1-\mu)$ and $\mu$ file fractions is NDT-optimal. However, this is in general not true for arbitrary instances of $K$ and $M$ as we shall see next.}            

\alert{So far we have discussed cases where successively applying unicasting and joint DeNB-RN beamforming on uncached and cached file fractions is NDT-optimal at fractional cache sizes $0<\mu<1$. However, in general, (with the exception of $M=1, K\leq 2$) delivery schemes that treat uncached and cached file fractions independently are \emph{suboptimal} for arbitrary $K$ and $M$. To this end, we propose a general \emph{one-shot} (OS) scheme that treats uncached and cached file fractions \emph{jointly} by exploiting multicasting and zero-forcing opportunities. Our proposed one-shot scheme is with the exception of the full-duplex requirement at the RNs simple in implementation because all receiving nodes are able to decode their \emph{desired} symbols on a \emph{single} channel use basis. In other words, these schemes explicitly preclude symbol decoding over multiple channel uses. The next theorem specifies the achievable NDT for these one-shot schemes.}       
\vspace{.5em}
\begin{theorem}[Achievable One-Shot NDT]\label{th:one_shot_ach_NDT}
	For $N\geq K+M$ files, $K$ UEs and $M$ RNs each with a cache of (fractional) size $\mu\in\{0,\nicefrac{1}{M},\nicefrac{2}{M},\nicefrac{3}{M},\ldots,\nicefrac{(M-1)}{M},1\}$,
	%\begin{align}\label{eq:NDT_OS_ach}
	%\delta^{\star}(\mu)\leq\delta_{\text{OS}}(\mu)\triangleq\max\Bigg\{\delta_{\text{MAN}}(\mu),\frac{K+\delta_{\text{MAN}}(\mu)\boldsymbol{1}_{K>\mu M}}{\min\{K,1+\mu M\}}\Bigg\}
	%\end{align} 
	\begin{align}\label{eq:NDT_OS_ach}
	\delta_{\text{OS}}(\mu)\triangleq\max\Bigg\{\delta_{\text{MAN}}(\mu),\frac{K+\delta_{\text{MAN}}(\mu)\boldsymbol{1}_{K>\mu M}}{\min\{K,1+\mu M\}}\Bigg\}
	\end{align} is achievable, where $\boldsymbol{1}_{K>\mu M}$ is the indicator function and $\delta_{\text{MAN}}(\mu)$ is the achievable Maddah-Ali Niesen (MAN) NDT given by
	\begin{equation*}
	\delta_{\text{MAN}}(\mu)=M\cdot(1-\mu)\cdot\frac{1}{1+\mu M} 
	\end{equation*} such that $\delta^{\star}(\mu)\leq\delta_{\text{OS}}(\mu)$. For arbitrary $\mu\in[0,1]$, the lower convex envelope of these points is achievable.  
\end{theorem}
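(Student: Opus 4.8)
The plan is to construct an explicit scheme at each integer cache point $\mu M=t\in\{0,1,\dots,M\}$ and then obtain arbitrary $\mu$ by memory sharing. For placement I would use the symmetric uncoded Maddah--Ali--Niesen (MAN) rule: split each file $W_n$ into $\binom{M}{t}$ equal subfiles $W_{n,\mathcal{T}}$ indexed by the $t$-subsets $\mathcal{T}\subseteq[M]$, and let RN$_m$ store $W_{n,\mathcal{T}}$ whenever $m\in\mathcal{T}$. Each RN then caches a fraction $\binom{M-1}{t-1}/\binom{M}{t}=t/M=\mu$ of every file, so the memory constraint holds with equality, and---critically for the one-shot property---every subfile of a UE-requested file sits at a \emph{fixed} set of exactly $t$ RNs.

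Delivery would run two mechanisms over the \emph{same} channel uses. To serve the $M$ RN demands I would send the MAN coded symbols $\bigoplus_{m\in\mathcal{S}}W_{d_{r_m},\mathcal{S}\setminus\{m\}}$, one per $(t+1)$-subset $\mathcal{S}$, as \emph{common} messages from the DeNB over the single-antenna DeNB$\to$RN broadcast channel; since every RN observes $x_s$ through its own link $f_m$, a common message is decodable at $\log P$ bits per channel use, and a direct count gives that this consumes exactly $\delta_{\text{MAN}}(\mu)=M(1-\mu)/(1+\mu M)$ in NDT. To serve the $K$ UE demands I would, for each $t$-subset $\mathcal{T}$, let the DeNB together with the $t$ RNs in $\mathcal{T}$ act as a $(1+\mu M)$-antenna distributed transmitter that zero-forces the subfiles $\{W_{d_k,\mathcal{T}}\}_{k}$, delivering an interference-free stream to $\min\{K,1+\mu M\}$ UEs per channel use. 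The RNs can perform this in one shot precisely because the MAN placement guarantees they cache exactly the subfiles indexed by the subsets containing them, so delivering all UE files this way costs a leading $K/\min\{K,1+\mu M\}$ in NDT.

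Because the two mechanisms overlap on the same channel uses, the overall NDT is the \emph{maximum} of the two durations rather than their sum, which yields the outer $\max\{\cdot,\cdot\}$ of \eqref{eq:NDT_OS_ach}. The remaining subtlety is the correction $+\delta_{\text{MAN}}(\mu)\boldsymbol{1}_{K>\mu M}$ in the UE term. During the channel uses in which the DeNB emits a MAN codeword, its single antenna is spent on RN content, which appears as interference at every UE; the cooperating RNs must therefore devote beamforming dimensions to null this broadcast interference rather than to inject fresh UE streams. I would show by a dimension count that this suppression is absorbed for free exactly when $K\le\mu M$---in which case the $1+\mu M$ cooperating antennas dominate $K$, the indicator vanishes, and the UE term collapses to $1$---whereas for $K>\mu M$ reconstructing the affected UE symbols inflates the UE delivery by an additive $\delta_{\text{MAN}}(\mu)$.

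I would then verify the corners against Lemma~\ref{corr_mu_0_and_1}: at $\mu=0$ the formula returns $\max\{M,K+M\}=K+M$ (pure DeNB broadcast) and at $\mu=1$ it returns $\max\{1,K/(M+1)\}$ (pure joint zero-forcing), both of which are optimal, so the scheme interpolates correctly between the two extremes. For arbitrary $\mu\in[0,1]$ the claim follows by memory sharing between the two bracketing integer cache points, which achieves their linear interpolation; combined with convexity of the NDT (\cite[Lemma~1]{KakarArxiv}) this certifies achievability of the lower convex envelope of the points $\delta_{\text{OS}}(t/M)$. The step I expect to be the main obstacle is the precise bookkeeping of the coupled delivery: proving that the causal full-duplex RNs always hold exactly the cached and just-decoded symbols, together with the instantaneous CSI, needed to \emph{simultaneously} reconstruct their own demanded file and cooperatively zero-force at the unintended UEs within a \emph{single} channel use, and pinning down the exact $+\delta_{\text{MAN}}(\mu)\boldsymbol{1}_{K>\mu M}$ overhead that this coupling incurs.
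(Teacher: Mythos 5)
Your construction follows the same route as the paper's own proof: MAN placement at the RNs, a first phase in which the DeNB serves $1+\mu M$ RNs MAN-style while the RN caches are simultaneously exploited to null the DeNB's interference at, and deliver fresh symbols to, a subset of UEs, followed by a residual phase of joint DeNB--RN zero-forcing; your rate accounting (UEs served at rate $\min\{K,\mu M\}$ per channel use during the overlap and at rate $\min\{K,1+\mu M\}$ afterwards) is exactly what produces the correction $+\delta_{\text{MAN}}(\mu)\boldsymbol{1}_{K>\mu M}$ and the outer max in \eqref{eq:NDT_OS_ach}, and memory sharing handles arbitrary $\mu$ as in the paper.

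However, one step as written would fail. You propose to transmit the MAN symbols as finite-field XORs $\bigoplus_{m\in\mathcal{S}}W_{d_{r_m},\mathcal{S}\setminus\{m\}}$ sent as \emph{common messages}, and then to have the cooperating RNs ``null this broadcast interference'' at the UEs. No relay can do this: every RN in $\mathcal{S}$ is missing exactly one constituent of that XOR (its own desired subfile), and RNs outside $\mathcal{S}$ are missing more, so no transmitter other than the DeNB can reconstruct the DeNB's waveform, and distributed beamforming cannot cancel a signal that the cooperating nodes cannot reproduce. The paper avoids this by keeping the MAN structure in the analog domain: the DeNB sends the superposition $x_s[t]=\sum_{m\in\mathcal{S}_R}\nu_{\eta_{K+m,\mathcal{S}_R\setminus\{m\},\mathcal{S}_U}}[t]\,\eta_{K+m,\mathcal{S}_R\setminus\{m\},\mathcal{S}_U}$ of \emph{individually precoded} Gaussian symbols. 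Each constituent symbol is cached at the $\mu M$ RNs in $\mathcal{S}_R\setminus\{m\}$, so it can be pre-canceled \emph{per symbol} at the UEs through the joint choice of $\nu$ and the RN beamformers (the relevant nullspace has dimension $1+\mu M-\psi\geq 1$, which forces $\psi\leq\mu M$ and is precisely the rate loss behind your correction term), while each RN in $\mathcal{S}_R$ still decodes MAN-style by subtracting its $\mu M$ known constituents. A second, minor point: splitting each file only into $\binom{M}{\mu M}$ subfiles is too coarse for the scheduling to close; the paper splits each file into $\binom{K}{\psi}\binom{M}{\mu M}$ symbols, with an extra index running over UE subsets $\mathcal{U}\subset[K]$, $|\mathcal{U}|=\psi$, so that across the $\binom{K}{\psi}\binom{M}{1+\mu M}$ channel uses of the first phase every RN and every UE receives the same (integer) number of desired symbols.
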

\begin{figure*}
	\centering
	\begin{subfigure}[b]{0.475\textwidth}
		\centering
		\vspace{10.5em}
		\hspace{-2em}
		\begin{tikzpicture}[scale=1]
		\SymModSchemePhaseOne
		\end{tikzpicture}
		\vspace{-14em}
		\caption{\small Scheme for the first phase ($t\in[T_1]$) with per-phase NDT $\delta_1(\mu)=\delta_{\text{MAN}}(\mu)$.} 
		\label{fig:OS_Phase1}
	\end{subfigure}
	\hfill
	\begin{subfigure}[b]{0.475\textwidth}  
		\centering 
		\vspace{10.5em}
		\hspace{-2em}
		\begin{tikzpicture}[scale=1]
		\SymModSchemePhaseTwo
		\end{tikzpicture}
		\vspace{-14em}
		\caption{\small Scheme for the second phase ($t\in[T_1+1:T_1+T_2]$) with per-phase NDT $\delta_2(\mu)=\frac{1}{\min\{K,1+\mu M\}}$.}  
		\label{fig:OS_Phase2}       
	\end{subfigure}
	\caption
	{\small Illustration of the proposed one-shot scheme with $M=4$ RNs, $K=2$ UEs and $\mu M=2$ for the worst-case demand scenario. On the one hand, in each channel use of the first phase [cf. (a)], the MAN scheme is used on the SISO DeNB-RN broadcast channel to convey desired symbols of (any combination of) $1+\mu M$ RNs. In the worst-case scenario, where UEs request other files, these symbols represent interference which are zero-forced through cooperative DeNB-RN interference cancelation at (any combination) of $\min\{K,\mu M\}$ UEs. Simultaneously, the scheme exploits RN caches by providing the same UEs with their desired symbols. After $T_1$ channel uses, of the first phase, the demand of the RNs is satisfied. On the other hand, the second phase [cf. (b)] is devoted to communicate, if necessary, the remaining file symbols of the UEs by applying cooperative DeNB-RN zero-forcing beamforming.} 
	\label{fig:OS_Scheme}
\end{figure*}
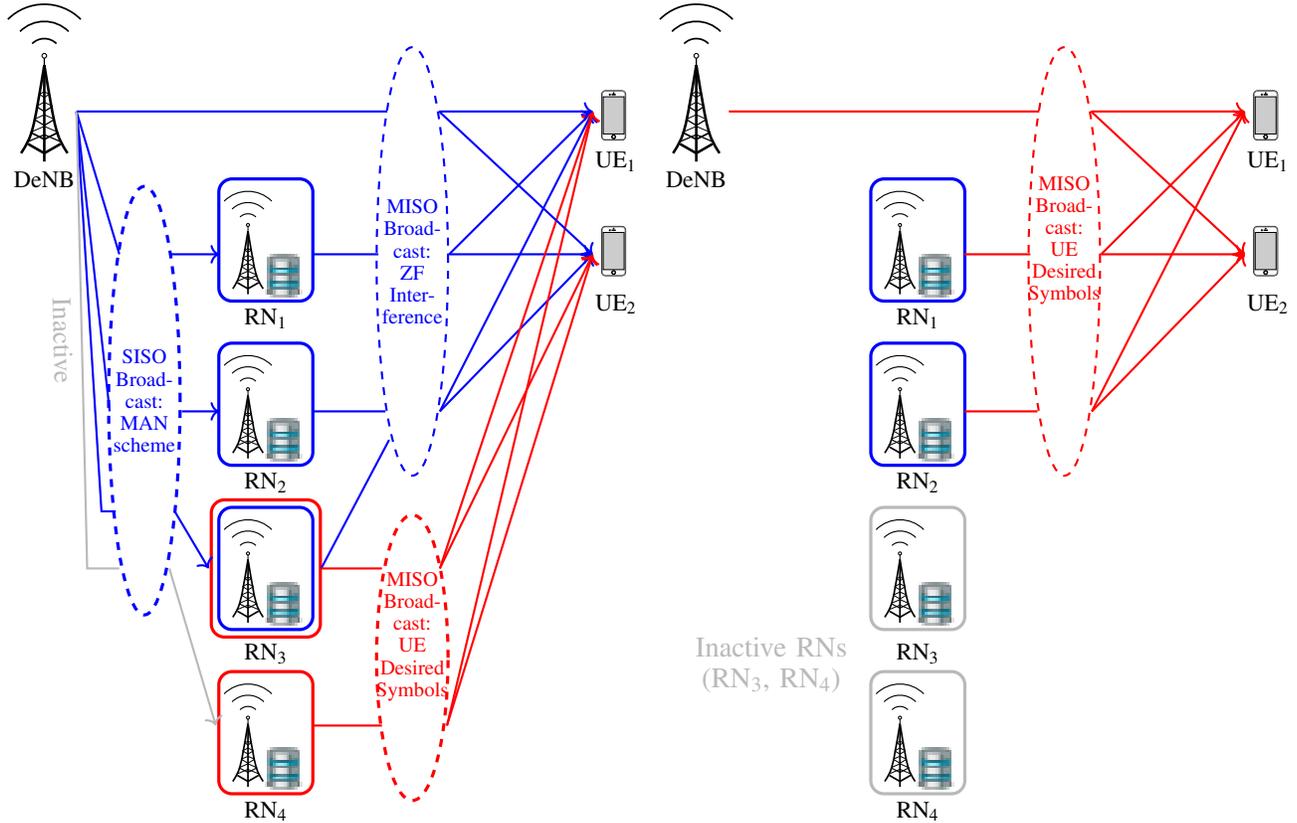 
\begin{proof}
	Details on the scheme are presented in Section \ref{cha_one_shot}. Nevertheless, we use this paragraph to highlight the main idea of the scheme. \alert{Recall that the file length is denoted by $L$. Assuming Gaussian signaling, the file is comprised of $L'=L/\log(P)$ symbols.} \alert{The scheme we develop (potentially) consists of two phases requiring $T_1$ and $T_2$ channel uses, respectively, to send uncached $(1-\mu)L'$ Gaussian symbols (each symbol carrying approximately $\log(P)$ bits) to each RN and also $L'$ symbols to each UE}. 
	
	In every channel use of the \emph{first phase} \alert{depicted in Fig. \ref{fig:OS_Phase1}, beamforming facilitates the integration of the MAN scheme \cite{Maddah-Ali2} with zero-forcing beamforming to (i) pre-cancel interference caused by applying the MAN scheme on the DeNB-RN broadcast channel at the UEs and (ii) convey desired symbols to UEs. Precisely, the MAN scheme is applied on the DeNB-RN broadcast channel} to provide each RN in a subset $\mathcal{S}_R\subset[M]$ \alert{with} $|\mathcal{S}_R|=1+\mu M$ RNs with a desired symbol. \emph{Simultaneously}, the full-duplex capabilities at the RNs are exploited by conveying \alert{to} each UE in the subset $\mathcal{S}_U\subset[K]$ \alert{with} $|\mathcal{S}_U|=\min\{K,\mu M\}$ UEs in total with a desired symbol by zero-forcing the contribution of all interfering symbols that $|\mathcal{S}_R|=1+\mu M$ RNs in $\mathcal{S}_R$ desire. \alert{Recall that the first phase consumes $T_1$ channel uses. We show that $T_1$ channel uses suffice in sending each RN$_m$, $\forall m\in[M]$, the remaining $(1-\mu)L'$ symbols of its requested file. Simultaneously, in $T_1$ channel uses each UE$_k$, $\forall k\in[K]$, receives $\tilde{L}$ symbols of its desired file, with $\tilde{L}$ being proportional to $|\mathcal{S}_U|=\min\{K,\mu M\}$. Thus, we may encounter cases where it is either feasible or infeasible to communicate all $L'$ symbols of each requested file to the respective UEs in $T_1$ channel uses ($\tilde{L}\geq L'$ or $\tilde{L}<L'$).}    
	
	\alert{Only in the case of missing symbols ($\tilde{L}<L'$) that all $K$ UEs still require after $T_1$ channel uses, additional $T_2>0$ channel uses are required in phase two to deliver the remaining desired symbols as shown in Fig. \ref{fig:OS_Phase2}.} To this end, in every channel use \emph{cooperative DeNB-RN zero-forcing beamforming} is deployed to send one symbol in total to $\psi'=\min\{K,1+\mu M\}$ UEs. The decoding at the RNs and UEs does not involve symbol decoding over multiple channel uses. Instead, decoding occurs on a one-shot, or single channel use, basis. In conclusion, the achievable NDT becomes either $\frac{T_1}{L'}$ if \alert{$T_2=0$} or $\frac{T_1+T_2}{L'}$ if \alert{$T_2>0$} with $L'$ being the number of symbols per file.     
\end{proof}
%\begin{table}
\begin{center}
	\begin{tabular}{ |l|l|c|c|l|}
		\hline
		\multirow{2}{*}{Region Name} & \multirow{2}{*}{Definition} & \multicolumn{2}{c|}{Channel limitation} & \multirow{2}{*}{Achievable NDT} \\\cline{3-4} & & RN side & UE side &  \\ \hline  \multirow{2}{*}{Region A} & $K\leq\mu M<M<\frac{1}{1-2\mu},\mu\leq\frac{1}{2}$ & \multirow{2}{*}{--} & \multirow{2}{*}{\checkmark} & \multirow{2}{*}{$\delta_{\text{OS}}^{(\text{A})}(\mu)=1$} \\\cline{2-2}  & $K\leq\mu M\leq M,M>\frac{1}{1-2\mu},\mu>\frac{1}{2}$ & & & \\ \hline
		Region B & $K\leq\mu M,\frac{1}{1-2\mu}\leq M,\mu\leq\frac{1}{2}$ & \multirow{2}{*}{\checkmark} & \multirow{2}{*}{--} & \multirow{2}{*}{$\delta_{\text{OS}}^{(\text{B,E})}(\mu)=\delta_{\text{MAN}}(\mu)$} \\ \cline{1-2} Region E & $\mu M <K\leq\mu M\cdot\delta_{\text{MAN}}(\mu)\leq M$ & & & \\ \hline
		Region C & $\mu M<M<K$ & \multirow{2}{*}{--} & \multirow{2}{*}{\checkmark} & \multirow{2}{*}{$	\delta_{\text{OS}}^{(\text{C,D})}(\mu)=\frac{K+\delta_{\text{MAN}}(\mu)}{1+\mu M}$} \\ \cline{1-2} Region D & $\mu M\cdot\max\Big\{1,\delta_{\text{MAN}}(\mu)\Big\}<K\leq M$ & & & \\ \hline
	\end{tabular}
	\captionof{table}{\small \alert{Definition of $(\mu,K,M)$ region triplets and their achievable one-shot NDT. The achievable one-shot NDT in Region A coincides with the lower bound and is thus NDT-optimal.}}
	%\caption{}
	\label{tab:def_reg_ach_NDT}
\end{center}
%\end{table}
\alert{The delivery time of the proposed one-shot scheme is devoted to both RNs and UEs. It is intuitive to expect cases where the delivery of requested files by the UEs may take longer than the delivery of uncached file fractions by the RNs. For instance, we expect that for $M\ll K$, irrespective of the fractional cache size, the file delivery to UEs through the interference channel represents the % bottleneck 
channel limitation from a delivery time perspective. However, finding the exact areas (including the transition) as a function of $\mu, K$ and $M$ where either RN or UE file delivery through broadcast or interference channel represents the bottleneck from a latency perspective for one-shot schemes is of interest. To this end, we conclude from Theorem \ref{th:one_shot_ach_NDT} that the functional behavior of the achievable one-shot NDT changes for different region triplets $(\mu,K,M)$ as follows.} Specifically, when neglecting the discretization of the fractional cache size $\mu$ to values  $\{0,\nicefrac{1}{M},\nicefrac{2}{M},\nicefrac{3}{M},\ldots,\nicefrac{(M-1)}{M},1\}$, Table \ref{tab:def_reg_ach_NDT} specifies how the one-shot NDT expression \eqref{eq:NDT_OS_ach} simplifies for the given region triplets. %\vspace{0.5em}
%\begin{itemize}
%	\item $\text{Region A}:\begin{cases}
%	K\leq\mu M<M<\frac{1}{1-2\mu},\mu\leq\frac{1}{2}&(\text{Region A}_1)\\
%	K\leq\mu M\leq M,M>\frac{1}{1-2\mu},\mu>\frac{1}{2}&(\text{Region A}_2)
%	\end{cases}\quad\qquad\qquad\text{ with NDT}$ 
%	\begin{align}\label{eq:ach_OS_NDT_RegA}
%	\delta_{\text{OS}}^{(\text{A})}(\mu)=1,
%	\end{align}
%	\item $\text{Region Pair (B,E)}:\begin{cases}
%	K\leq\mu M,\frac{1}{1-2\mu}\leq M,\mu\leq\frac{1}{2}&(\text{Region B})\\\mu M <K\leq\mu M\cdot\delta_{\text{MAN}}(\mu)\leq M&(\text{Region E})
%	\end{cases}\qquad\text{ with NDT}$
%	\begin{align}\label{eq:ach_OS_NDT_RegBE}
%	\delta_{\text{OS}}^{(\text{B,E})}(\mu)=\delta_{\text{MAN}}(\mu),
%	\end{align}
%	\item $\text{Region Pair (C,D)}:\begin{cases}
%	\mu M<M<K&(\text{Region C})\\\mu M\cdot\max\Big\{1,\delta_{\text{MAN}}(\mu)\Big\}<K\leq M&(\text{Region D})
%	\end{cases}\quad\text{ with NDT}$
%	\begin{align}\label{eq:ach_OS_NDT_RegCD}
%	\delta_{\text{OS}}^{(\text{C,D})}(\mu)=\frac{K+\delta_{\text{MAN}}(\mu)}{1+\mu M}.
%	\end{align}
%\end{itemize}
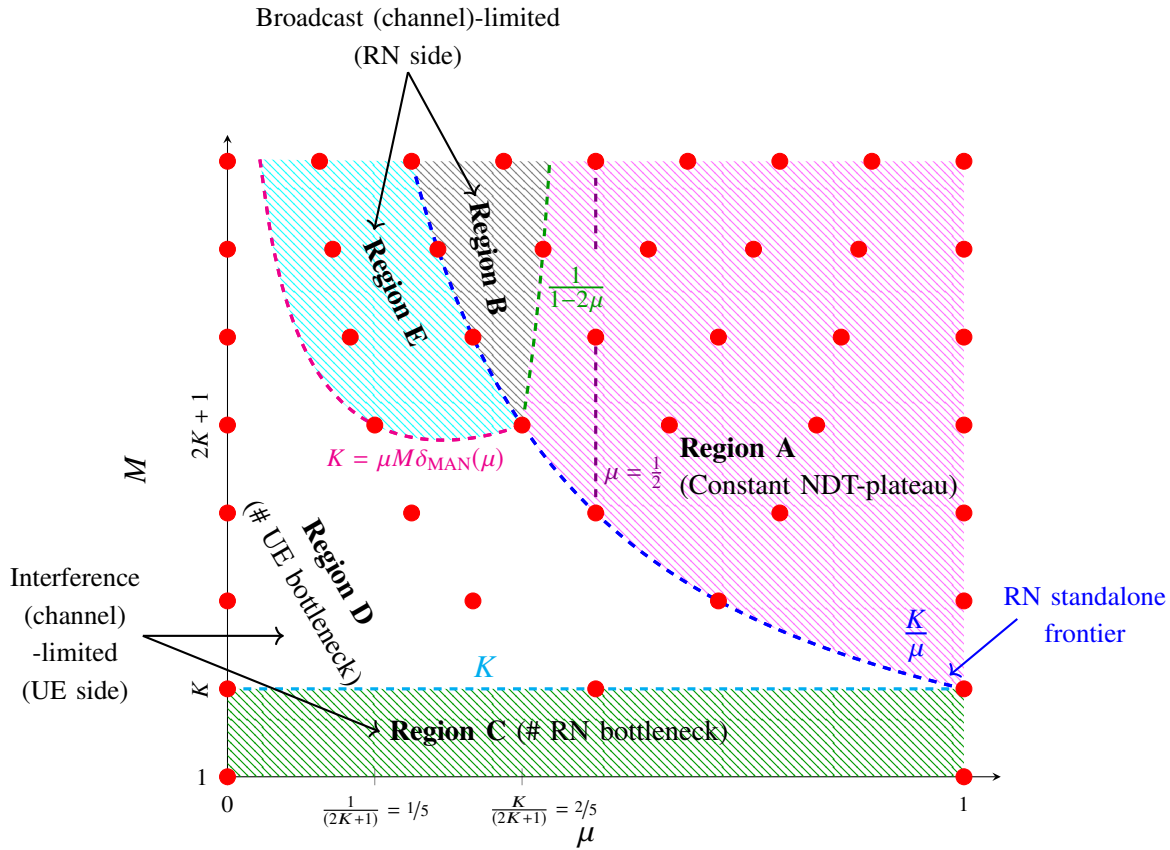
\begin{figure}
	\centering
	\vspace{-10em}
	\begin{tikzpicture}[scale=1.5]
	%\hspace{15em}
	\PlotMMuRegionsTwo
	\end{tikzpicture}
	\caption[Plot of all regions]{\small 2D $(\mu,M)$-plot of all Regions A, B, C, D and E for constant $K$ $(K=2)$. The labels on the graph indicate the functional relationship at the borders of neighboring regions. The discrete points illustrate the fractional cache sizes $\mu\in\Big\{0,\frac{1}{M},\ldots,\frac{M-1}{M},1\Big\}$ for which the achievable one-shot NDT expression $\delta_{\text{OS}}(\mu)$ in Eq. \eqref{eq:NDT_OS_ach} actually hold. \alert{The annotations to the regions specify the main characteristics of the respective region. The channel limitations specify which channel -- broadcast or interference channel -- is characteristic for the delivery time overhead. The RN standalone frontier, where $\mu M=K$ holds, represents scenarios for which all $K$ UEs can be served by any subset of $\mu M$ RNs \emph{without} the need of the DeNB.}}
	\label{fig:Region_plot_advanced}
\end{figure} \alert{The regions of Table \ref{tab:def_reg_ach_NDT} are illustrated in Fig. \ref{fig:Region_plot_advanced} for constant $K$ ($K=2$).} We state two interesting observations on these regions in the following remarks.

\begin{remark}
	Interestingly, when $M\geq 2K+1$, we see that for $\mu\geq\nicefrac{1}{M}$, the achievable one-shot NDT does not depend on $K$, i.e., the number of UEs. Instead, the NDT is solely dependent on the number of RNs $M$. Consequently, our one-shot NDT behaves identical to the achievable NDT of the MAN scheme. This is due to the fact that %the bottleneck in 
	the delivery time is governed by the delivery of uncached file fragments to the RNs through the broadcast channel. Then, this problem reduces to the initial receiver-based, single server coded caching problem of Maddah-Ali and Niesen. For this setting, we recall that the MAN NDT for $M$ receivers consists of the local and global caching gains (captured by the factors $1-\mu$ and $\frac{1}{1+\mu M}$ in $\delta_{\text{MAN}}(\mu)$, respectively) \cite{Maddah-Ali2}. In conclusion, we refer, respectively, to the attainable NDT as the DeNB-to-RN) \emph{(broadcast-limited NDT} and \alert{cases where $M/K>2$ as cases of \emph{high} $M/K$ (with respect to the number of UEs)}.  
\end{remark}
\begin{remark}
	As opposed to the previous remark, we observe that in the one-shot scheme the interference channel to the UEs functions as the bottleneck from a delivery time perspective as long as $M<2K+1$. \alert{We call these instances, all with respect to the number of UEs, as cases of \emph{moderate} $M/K$ when $1\leq M/K\leq 2$ and \emph{low} $M/K$ when $M/K<1$.} We then name the attainable NDT of $\frac{K+\delta_{\text{MAN}}(\mu)}{1+\mu M}$ the \emph{interference-limited NDT}.    
\end{remark}

In the following four corollaries, we state the relations of the NDT lower bound from Theorem \ref{theorem_lower_bound} and the one-shot scheme upper bound of Theorem \ref{th:one_shot_ach_NDT} of the aforementioned regions for discretized $\mu$.
\begin{corollary}[One-Shot NDT Optimality]\label{corr:OS_NDT_Opt} The one-shot scheme %presented in Section \ref{cha_one_shot} 
	is optimal (i.e., it coincides with the lower bound) \alert{achieving the minimum NDT given by} $\delta^{\star}(\mu)=1$, when the triplet $(\mu,K,M)$ satisfies any of the following conditions:
	\begin{enumerate}
		\item[1a)] $K<M,\mu\geq\frac{K}{M},\mu M\geq \Bigl\lceil{\frac{M-1}{2}}\Bigr\rceil,\mu\leq\frac{1}{2}$; or
		\item[1b)] $K\leq M,\mu\geq\frac{K}{M}$ and $\mu>\frac{1}{2}$.
	\end{enumerate} 
\end{corollary}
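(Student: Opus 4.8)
The plan is to combine the universal lower bound $\delta^{\star}(\mu)\geq 1$ of Theorem~\ref{theorem_lower_bound} with the achievability of Theorem~\ref{th:one_shot_ach_NDT}, and to show that under either condition 1a) or 1b) the one-shot expression $\delta_{\text{OS}}(\mu)$ in Eq.~\eqref{eq:NDT_OS_ach} collapses to exactly $1$. Since $\delta^{\star}(\mu)\geq 1$ holds unconditionally, it then suffices to verify that $\delta_{\text{OS}}(\mu)=1$: together with $\delta^{\star}(\mu)\leq\delta_{\text{OS}}(\mu)$ from Theorem~\ref{th:one_shot_ach_NDT}, a squeeze yields $\delta^{\star}(\mu)=1$.

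First I would dispose of the indicator and the interference-limited term. Both conditions assume $\mu\geq\frac{K}{M}$, equivalently $\mu M\geq K$, so the event $K>\mu M$ is false and $\boldsymbol{1}_{K>\mu M}=0$. Moreover $1+\mu M\geq 1+K>K$, hence $\min\{K,1+\mu M\}=K$, and the second argument of the maximum reduces to $\frac{K}{K}=1$. Thus the entire task is to confirm that the remaining argument, the MAN term, does not exceed this value.

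Next I would bound the MAN term. A direct manipulation shows that $\delta_{\text{MAN}}(\mu)=\frac{M(1-\mu)}{1+\mu M}\leq 1$ is equivalent to $M(1-\mu)\leq 1+\mu M$, i.e. to $\mu M\geq\frac{M-1}{2}$. Under 1a), the hypothesis $\mu M\geq\lceil\frac{M-1}{2}\rceil\geq\frac{M-1}{2}$ delivers this at once, recalling that $\mu M$ is an integer at the achievable discrete cache points $\mu\in\{0,\nicefrac{1}{M},\ldots,\nicefrac{(M-1)}{M},1\}$. Under 1b), the assumption $\mu>\frac{1}{2}$ gives $\mu M>\frac{M}{2}\geq\frac{M-1}{2}$. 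In either case $\delta_{\text{MAN}}(\mu)\leq 1$, so $\delta_{\text{OS}}(\mu)=\max\{\delta_{\text{MAN}}(\mu),1\}=1$, which combined with the lower bound concludes the proof.

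The computation is entirely elementary, so there is no genuine analytical obstacle; the only points demanding care are the integrality of $\mu M$ that makes the ceiling in 1a) tight, and the bookkeeping that verifies the two stated parameter regimes correspond precisely to Region~A of Table~\ref{tab:def_reg_ach_NDT}, where both arguments of the maximum are pinned at $1$ rather than strictly exceeding it. I would therefore cross-check the inequalities $K\leq\mu M\leq M$ and $\mu M\geq\frac{M-1}{2}$ against the Region~A definitions to ensure the hypotheses are sharp and not merely sufficient.
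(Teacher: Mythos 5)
Your proof is correct and takes essentially the same route as the paper: the paper's one-line argument simply invokes Region A of Table~\ref{tab:def_reg_ach_NDT}, where the one-shot NDT equals $1$ and matches the unity lower bound of Theorem~\ref{theorem_lower_bound}, which is exactly the computation you carry out explicitly from Eq.~\eqref{eq:NDT_OS_ach} (indicator vanishes, $\min\{K,1+\mu M\}=K$, and $\delta_{\text{MAN}}(\mu)\leq 1$ iff $\mu M\geq\frac{M-1}{2}$) before squeezing. If anything, your explicit check is slightly more careful than the paper's phrasing, since the boundary case $\mu M=\bigl\lceil\frac{M-1}{2}\bigr\rceil=\frac{M-1}{2}$ for odd $M$ (where $\delta_{\text{MAN}}(\mu)=1$ exactly, technically the Region A/B border) is covered automatically by your inequality rather than by membership in Region A.
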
	
\begin{proof} The proof follows from the fact that all discretized $\mu$ values inside Region A attain an NDT $\delta_{\text{OS}}^{(A)}(\mu)=1$ that matches the lower bound. 
\end{proof}
%\vspace{0.5em}
\begin{remark}
	With this corollary, we make the observation that at \emph{high} $M/K$ a fractional cache size of \emph{approximately} $\nicefrac{1}{2}$ is sufficient of achieving the lowest attainable NDT of $1$. \alert{In other words, this shows that when $M/K>2$ prefetching half of each file (from the entire library of files) and applying the one-shot scheme is \emph{delivery time optimal}.} Caching more than that will \emph{not} reduce the delivery time any further.         
\end{remark}
The remaining three corollaries state the multiplicative gap of the one-shot scheme with respect to the lower bound. 
%The proofs of these corollaries will be given in multiple subsection of \ref{cha_gap}. 
\vspace{0.5em}
\begin{corollary}[\alert{Gap of NDT for High $M/K$ ($M/K>2$)}]\label{corr:Gap_OS_NDT_High_RN_Quantity} The multiplicative gap between the one-shot scheme presented in Section \ref{cha_one_shot} and the lower bound on the NDT for $M\geq 2K+1$ is \alert{upper bounded by} %within 
	\begin{enumerate}
		\item[1a)] $\frac{M-1}{2}$ for $\mu\leq\frac{1}{M}$,
		\item[1b)] $\frac{M-\theta}{1+\theta}$ for any $\mu\in\Big[\mu'(\theta),\frac{\ceil{\nicefrac{(M-1)}{2}}}{M}\Big]$, where $\mu'(\theta)=\frac{\ceil{\theta}}{M}$ and $\theta\in\Big[1,\frac{M-3}{2}\Big]$.
	\end{enumerate}	
	\begin{proof} \alert{The proof of this corollary is given in the appendix. Specifically, details of the proof can be found in subsections \ref{subsec:be_reg_pair_gap} and \ref{subsec:de_reg_pair_gap}.}  
	\end{proof}
\end{corollary}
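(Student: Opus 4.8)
The plan is to bound the multiplicative gap $\delta_{\text{OS}}(\mu)/\delta_{\text{LB}}(\mu)$ separately on the two stated $\mu$-ranges, after first simplifying the achievable side. For $M\geq 2K+1$ I would show that the one-shot expression of Theorem~\ref{th:one_shot_ach_NDT} collapses to $\delta_{\text{OS}}(\mu)=\max\{\delta_{\text{MAN}}(\mu),1\}$ at every discrete point $\mu\in\{\nicefrac{1}{M},\ldots,\ceil{\nicefrac{(M-1)}{2}}/M\}$. Writing $x=\mu M$, the interference-limited branch of \eqref{eq:NDT_OS_ach} is dominated exactly when $K\leq \mu M\,\delta_{\text{MAN}}(\mu)$, i.e. when $h(x)\triangleq x(M-x)/(1+x)\geq K$; since $h$ is unimodal on $[1,K]$ with $h(1)=\tfrac{M-1}{2}\geq K$ and $h(K)=\tfrac{K(M-K)}{K+1}\geq K$ (both equivalent to $M\geq 2K+1$), this holds throughout, so the broadcast-limited value $\delta_{\text{MAN}}(\mu)$ governs (regions B and E), and near the upper endpoint it is replaced by the still smaller value $1$ (region A). This is the region bookkeeping carried out in subsections~\ref{subsec:be_reg_pair_gap} and \ref{subsec:de_reg_pair_gap}.

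For part 1b) I would then use only $\delta_{\text{LB}}(\mu)\geq 1$ together with the monotonicity of $\delta_{\text{MAN}}$. Since $\delta_{\text{MAN}}$ decreases in $\mu$, for every $\mu\geq\mu'(\theta)=\ceil{\theta}/M$,
\begin{equation*}
\frac{\delta_{\text{OS}}(\mu)}{\delta_{\text{LB}}(\mu)}\leq\max\{\delta_{\text{MAN}}(\mu),1\}\leq\max\Big\{\delta_{\text{MAN}}(\ceil{\theta}/M),\,1\Big\}=\max\Big\{\tfrac{M-\ceil{\theta}}{1+\ceil{\theta}},\,1\Big\}.
\end{equation*}
As $t\mapsto(M-t)/(1+t)$ is decreasing and $\ceil{\theta}\geq\theta$, the first entry is at most $\tfrac{M-\theta}{1+\theta}$; the constraint $\theta\leq\tfrac{M-3}{2}$ forces $\tfrac{M-\theta}{1+\theta}\geq 1$, which absorbs the maximum and yields the claimed bound.

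For part 1a) the unit lower bound is too weak, because $\delta_{\text{OS}}(0)=K+M$ by Lemma~\ref{corr_mu_0_and_1}. Instead I would instantiate Theorem~\ref{theorem_lower_bound} at $s=1,\ \ell=M$ (so $\bar s=M$), which simplifies to the linear bound $\delta_{\text{LB}}(\mu,M,1)=K+M-\mu M(K+M-1)$, decaying from $K+M$ at $\mu=0$ to $1$ at $\mu=\nicefrac{1}{M}$. On $[0,\nicefrac{1}{M}]$ the achievable NDT is the memory-sharing (lower-convex-envelope) line joining $(0,K+M)$ and $(\nicefrac{1}{M},\tfrac{M-1}{2})$, so with $C=K+M$, $D=\tfrac{M-1}{2}$, and $x=\mu M\in[0,1]$ the gap is at most $R(x)=\tfrac{C+x(D-C)}{C+x(1-C)}$. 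A short derivative computation shows that the numerator of $R'(x)$ equals $C(D-1)$, which is $\geq 0$ whenever $M\geq 3$; hence $R$ is non-decreasing and is maximized at $x=1$, where $R(1)=D=\tfrac{M-1}{2}$.

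The main obstacle is part 1a): the trivial bound fails, so the crux is to select the single cut $(\ell,s)=(M,1)$ whose linear decay tracks the rate at which the memory-shared achievability falls, and then to certify through the sign of $R'$ that the worst case sits at the endpoint $\mu=\nicefrac{1}{M}$ rather than in the interior. A secondary but pervasive difficulty, shared by both parts, is the region casework for $M\geq 2K+1$, namely verifying that the achievability indeed reduces to $\max\{\delta_{\text{MAN}}(\mu),1\}$ across the D/E and B/E transitions (and the final slip into region A), which is precisely what the referenced subsections handle in detail.
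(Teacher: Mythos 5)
Your proposal is correct and follows essentially the same route as the paper: part 1b) mirrors the paper's (B,E)-region argument (bound the decreasing $\delta_{\text{MAN}}$ by its value at the left endpoint $\theta/M$ and use the trivial lower bound $\delta^{\star}(\mu)\geq 1$), and part 1a) mirrors the paper's (D,E)-transition argument (memory sharing between $(0,K+M)$ and $(\nicefrac{1}{M},\nicefrac{(M-1)}{2})$ against the cut $\delta_{\text{LB}}(\mu,M,1)=K+M-\mu M(K+M-1)$, with the ratio maximized at $\mu=\nicefrac{1}{M}$). Your explicit derivative computation showing the numerator of $R'(x)$ equals $C(D-1)$ is just a restatement of the paper's add-and-subtract step, and your unimodality check of $h(x)=x(M-x)/(1+x)$ is a slightly more explicit version of the region bookkeeping the paper delegates to Table \ref{tab:def_reg_ach_NDT}.
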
	
\vspace{0.5em}
\begin{corollary}[\alert{Gap of NDT for Moderate $M/K$ ($1\leq M/K\leq 2$)}]\label{corr:Gap_OS_NDT_Moderate_RN_Quantity} The multiplicative gap between the one-shot scheme presented in Section \ref{cha_one_shot} and the lower bound on the NDT at \emph{moderate} $M/K$ is 
	\begin{enumerate}
		\item $1$ for $K=1$
		and within (or upper bounded by)
		\item[2a)] $\frac{K}{2}+\frac{M-1}{4}$ for $\mu\leq\frac{1}{M}$ and $K\geq 2$,
		\item[2b)] $\frac{dK}{M+1}+\frac{d(d-1)}{M+1}$ for any $\mu\in\Big[\mu'(\kappa_d),\frac{K}{M}\Big]$, where $\mu'(\kappa_d)=\frac{\ceil{\kappa_d}}{M}$, $\kappa_d=\frac{M+1-d}{d}$ and $d\in\Big[\frac{M+1}{K},\frac{M+1}{2}\Big]$ for $K\geq 2$.
	\end{enumerate}	
\end{corollary}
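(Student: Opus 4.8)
The plan is to bound the multiplicative gap $\delta_{\text{OS}}(\mu)/\delta^{\star}(\mu)$ throughout the moderate regime $1\le M/K\le 2$ (i.e.\ $K\le M\le 2K$), where by Table~\ref{tab:def_reg_ach_NDT} the active branch of the maximum in \eqref{eq:NDT_OS_ach} is the interference-limited (Region~D) expression $\delta_{\text{OS}}^{(\text{C,D})}(\mu)=\frac{K+\delta_{\text{MAN}}(\mu)}{1+\mu M}$ with $\delta_{\text{MAN}}(\mu)=\frac{M(1-\mu)}{1+\mu M}$. I would invoke only two consequences of Theorem~\ref{theorem_lower_bound}: the universal cut $\delta^{\star}(\mu)\ge 1$, and the single linear cut $\delta^{\star}(\mu)\ge\delta_{\text{LB}}(\mu,M,1)$, which simplifies to a line decreasing from $\delta_{\text{LB}}(0,M,1)=K+M$ to $\delta_{\text{LB}}(\nicefrac{1}{M},M,1)=1$. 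For the base case $K=1$ (so $M\in\{1,2\}$) I would simply evaluate $\delta_{\text{OS}}(\mu)$ from Theorem~\ref{th:one_shot_ach_NDT} at every admissible $\mu$ and verify it meets one of these two cuts, giving a gap of exactly $1$; this is consistent with $(M,K)\in\{(1,1),(2,1)\}$ lying in the fully characterized regime $K+M\le 4$.

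For part~2a I would evaluate achievability at the first nonzero grid point, $\delta_{\text{OS}}^{(\text{C,D})}(\nicefrac{1}{M})=\frac{K+\nicefrac{(M-1)}{2}}{2}=\frac{K}{2}+\frac{M-1}{4}$, and note that on $[0,\nicefrac{1}{M}]$ the lower convex envelope is exactly the segment joining $(0,K+M)$ and $(\nicefrac{1}{M},\frac{K}{2}+\frac{M-1}{4})$. Dividing this line by the linear lower bound $\delta_{\text{LB}}(\mu,M,1)$ produces a ratio of affine functions, which is monotone on the interval; since both lines meet at $(0,K+M)$ (ratio $1$) while $\delta_{\text{LB}}(\nicefrac{1}{M},M,1)=1$, the maximum ratio is attained at $\mu=\nicefrac{1}{M}$ and equals $\frac{K}{2}+\frac{M-1}{4}$.

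For part~2b I would reparametrize by setting $\mu M=\kappa_d$, i.e.\ $d=\frac{M+1}{1+\mu M}$, and establish the collapsing identities $M-\kappa_d=\frac{(M+1)(d-1)}{d}$ and hence $\delta_{\text{MAN}}=d-1$, which reduce the achievable NDT to $\delta_{\text{OS}}^{(\text{C,D})}=\frac{d(K+d-1)}{M+1}=\frac{dK}{M+1}+\frac{d(d-1)}{M+1}$. Because $\delta_{\text{OS}}^{(\text{C,D})}(\mu)$ is decreasing in $\mu$ (increasing denominator, decreasing numerator) while $\delta^{\star}(\mu)\ge 1$ on the whole subinterval, the gap is maximized at the left endpoint; rounding the cache point up to the admissible grid value $\mu'(\kappa_d)=\nicefrac{\ceil{\kappa_d}}{M}$ only lowers the achievable NDT, so the gap is bounded by the displayed expression. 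Letting $d$ sweep $[\frac{M+1}{K},\frac{M+1}{2}]$ moves $\kappa_d$ over $[1,K-1]$, so the subintervals $[\mu'(\kappa_d),\nicefrac{K}{M}]$ tile $\mu M\in[1,K]$ and cover all of Region~D.

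The main obstacle is the bookkeeping around this parametrization rather than any single hard inequality. I would have to confirm that each evaluation point $\mu M=\kappa_d$ indeed lies in Region~D, so that $\frac{K+\delta_{\text{MAN}}}{1+\mu M}$ is the binding branch of the $\max$ in \eqref{eq:NDT_OS_ach} (in particular checking $K>\mu M\,\delta_{\text{MAN}}(\mu)$ to exclude Region~E, and $1+\mu M\le K$); that the integer rounding $\ceil{\kappa_d}$ keeps $\mu$ inside $[\,\cdot\,,\nicefrac{K}{M}]$; and that consecutive $d$-subintervals overlap so their union has no gaps. The two monotonicity facts (the achievable NDT decreasing, and the affine ratio in~2a being monotone) are the analytic core but are routine once the lower cut is fixed; the subtler point is justifying that $\delta_{\text{LB}}(\mu,M,1)$ is the right cut to divide by on $[0,\nicefrac{1}{M}]$ (so the ratio peaks at exactly $\frac{K}{2}+\frac{M-1}{4}$), whereas the trivial bound $1$ is what binds on $[\mu'(\kappa_d),\nicefrac{K}{M}]$, so that the two parts genuinely yield the stated forms. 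This mirrors the proof of Corollary~\ref{corr:Gap_OS_NDT_High_RN_Quantity}, with the interference-limited expression taking the role played there by $\delta_{\text{MAN}}$.
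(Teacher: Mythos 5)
Your proposal is correct and follows essentially the same route as the paper's proof in the appendix (subsection on the (C,D) region pair): the $K=1$ case via coincidence with $\delta_{\text{LB}}(\mu,M,1)$, part 2a via the memory-sharing segment on $[0,\nicefrac{1}{M}]$ divided by the linear cut $\delta_{\text{LB}}(\mu,M,1)$ with endpoint maximization, and part 2b via monotonicity of $\delta_{\text{OS}}^{(\text{C,D})}$, evaluation at $\mu=\nicefrac{\kappa_d}{M}$, and division by the trivial bound $\delta^{\star}(\mu)\geq\max\{1,\nicefrac{K}{(M+1)}\}$, whose substitution yields exactly $\frac{dK}{M+1}+\frac{d(d-1)}{M+1}$. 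Your reparametrization identities ($\delta_{\text{MAN}}=d-1$, $M-\kappa_d=\frac{(M+1)(d-1)}{d}$) are just a cleaner presentation of the paper's algebraic simplification, and the regional bookkeeping you flag as remaining checks is treated no more explicitly in the paper itself.
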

\begin{proof} \alert{The proof of this corollary is given in subsection \ref{subsec:cd_reg_pair_gap}.}  
\end{proof}	
\vspace{0.5em}
\begin{corollary}[\alert{Gap of NDT for Low $M/K$ ($M/K<1$)}]\label{corr:Gap_OS_NDT_Low_RN_Quantity} The multiplicative gap between the one-shot scheme presented in Section \ref{cha_one_shot} and the lower bound on the NDT at \emph{low} $M/K$ is 
	\begin{enumerate}
		\item $1$ for $(K,M)=(1,2)$
		and upper bounded by 
		\item $1+\frac{1}{K}\Big(\frac{K}{2}-\frac{2}{K}\Big)$ for $M=1,K>2$,
		\item[3a)] $1+\Big(\frac{K}{2}+\frac{M-5}{4}\Big)\cdot\frac{M}{K}\cdot\frac{(K+M+1)}{(K+M-1)}$ for $\mu\leq\frac{1}{M}$ and $M\geq 2$,
		\item[3b)] $d+\frac{d(d-1)}{K}$ for any $\mu\in\Big[\mu'(\kappa_d),1\Big]$, where $\mu'(\kappa_d)=\frac{\ceil{\kappa_d}}{M}$, $\kappa_d=\frac{M+1-d}{d}$ and $d\in\Big[\frac{M+1}{M},\frac{M+1}{2}\Big]$ for $M\geq 2$.
	\end{enumerate}	
\end{corollary}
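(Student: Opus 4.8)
The plan is to compute the multiplicative gap $\delta_{\text{OS}}(\mu)/\delta_{\text{LB}}(\mu)$ separately on the small-cache interval $\mu\leq\nicefrac{1}{M}$ (item 3a) and on each large-cache interval $[\mu'(\kappa_d),1]$ (item 3b), after first collapsing the achievable NDT of Theorem~\ref{th:one_shot_ach_NDT}. Since $M/K<1$ means $M<K$, for every $\mu\in[0,1]$ we have $1+\mu M\leq 1+M\leq K$ and $\mu M<K$, so $\min\{K,1+\mu M\}=1+\mu M$ and $\boldsymbol{1}_{K>\mu M}=1$; moreover the interference term dominates because $\tfrac{K+\delta_{\text{MAN}}(\mu)}{1+\mu M}\geq\tfrac{K}{1+\mu M}>\delta_{\text{MAN}}(\mu)$. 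Hence $\delta_{\text{OS}}(\mu)=\tfrac{K+\delta_{\text{MAN}}(\mu)}{1+\mu M}$ throughout (Regions C and D), and for non-discretized $\mu$ I work with the lower convex envelope of these corner values.

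For item 3b I would invoke the converse of Theorem~\ref{theorem_lower_bound} at the single admissible choice $(\ell,s)=(0,M+1)$ (legitimate since $s\leq\min\{M+1,K\}=M+1$ and $\bar s=0$), which yields the cache-\emph{independent} bound $\delta_{\text{LB}}(\mu,0,M+1)=\tfrac{K}{M+1}$. The gap then equals $\tfrac{(M+1)(K+\delta_{\text{MAN}}(\mu))}{K(1+\mu M)}$, which is decreasing in $\mu$, so its maximum on $[\mu'(\kappa_d),1]$ is attained at the left endpoint $\mu=\mu'(\kappa_d)=\ceil{\kappa_d}/M$. There $1+\mu M=1+\ceil{\kappa_d}\geq 1+\kappa_d=\tfrac{M+1}{d}$ and, since $\delta_{\text{MAN}}$ is decreasing, $\delta_{\text{MAN}}(\mu'(\kappa_d))\leq d-1$; substituting both inequalities gives the claimed $d+\tfrac{d(d-1)}{K}$. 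I would also check that as $d$ sweeps $[\tfrac{M+1}{M},\tfrac{M+1}{2}]$ the endpoints $\mu'(\kappa_d)$ sweep from $\tfrac{M-1}{M}$ down to $\tfrac{1}{M}$, so these intervals together with item 3a tile all of $[0,1]$.

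For item 3a I would instead use the steepest useful converse line $(\ell,s)=(M,1)$, namely $\delta_{\text{LB}}(\mu,M,1)=K+M-\mu M(K+M-1)$, and bound the ratio of the two affine functions (on $[0,\nicefrac1M]$ the achievable envelope is the line joining the corner values $K+M$ and $\tfrac{2K+M-1}{4}$). Evaluating $\delta_{\text{OS}}(\mu)/\delta_{\text{LB}}(\mu,M,1)$ at the critical cache size $\mu_0=\tfrac{K+M+1}{(M+1)(K+M-1)}$ reproduces exactly $1+\big(\tfrac{K}{2}+\tfrac{M-5}{4}\big)\tfrac{M}{K}\tfrac{K+M+1}{K+M-1}$, via the identity $\delta_{\text{OS}}(\nicefrac1M)-1=\tfrac K2+\tfrac{M-5}4$. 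The remaining work is to certify that this single value dominates the gap over all of $[0,\nicefrac1M]$: on $[0,\mu_0]$ the line $(M,1)$ is the active part of the converse envelope and the ratio is monotone, whereas on $[\mu_0,\nicefrac1M]$ a complementary converse line (e.g.\ the $s=2$ line $(\ell,s)=(M,2)$) overtakes $(M,1)$ and pushes the true gap back down below the value at $\mu_0$.

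The two genuinely separate cases are $M=1,\,K>2$ and the corner $(K,M)=(1,2)$. For $M=1$ the only discretized cache sizes are $\mu\in\{0,1\}$, so the achievable NDT is the single line joining $(0,K+1)$ and $(1,\nicefrac K2)$, while the converse is the upper envelope of $\delta_{\text{LB}}(\mu,1,1)=K+1-\mu K$ and $\delta_{\text{LB}}(\mu,1,2)=\tfrac{K+1-\mu}{2}$; the gap is maximized at their kink $\mu=\tfrac{K+1}{2K-1}$, and bounding the value there yields the stated $1+\tfrac1K(\tfrac K2-\tfrac2K)$. The corner $(1,2)$ is settled by matching the achievable line to the converse directly, as in Lemma~\ref{corr_mu_0_and_1}. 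I expect the main obstacle to be not any single evaluation but the piecewise bookkeeping in item 3a: certifying which converse line of Theorem~\ref{theorem_lower_bound} is active on each sub-interval and that the crossover occurs no later than $\mu_0$, together with controlling the ceiling $\ceil{\kappa_d}$ so that the clean form $d+\tfrac{d(d-1)}{K}$ survives the discretization of $\mu$.
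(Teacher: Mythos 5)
Your proposal is correct and, at its core, retraces the paper's own proof (subsections \ref{subsec:cd_reg_pair_gap} and the transition arguments in the appendix): the same collapse $\delta_{\text{OS}}(\mu)=\frac{K+\delta_{\text{MAN}}(\mu)}{1+\mu M}$ in Regions C/D, the same two converse ingredients $\delta_{\text{LB}}(\mu,M,1)=K+M-\mu M(K+M-1)$ and the cache-independent bound $\frac{K}{M+1}$ (which the paper also obtains from Theorem \ref{theorem_lower_bound}), the same evaluation points (your $\mu_0$ is the paper's $\tilde\mu=\frac{K+M+1}{(M+1)(K+M-1)}$ for 3a; $\kappa_d/M$ with $1+\kappa_d=\frac{M+1}{d}$ and $\delta_{\text{MAN}}(\kappa_d/M)=d-1$ for 3b), and the same monotonicity arguments; your item-3b derivation is essentially identical to the paper's case (iii). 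The one place you genuinely deviate is the sub-interval $[\mu_0,\nicefrac{1}{M}]$ in item 3a, where you propose switching to the $(\ell,s)=(M,2)$ converse line and assert, without proof, that it ``pushes the true gap back down below the value at $\mu_0$.'' That claim appears to hold (e.g., for $(K,M)=(4,2)$ the relevant ratio $\frac{6-7.5\mu}{3-2.5\mu}$ is decreasing on $[\nicefrac{7}{15},\nicefrac{1}{2}]$ and at most $\nicefrac{15}{11}<\nicefrac{15}{8}$), but it is exactly the piecewise bookkeeping you flag as the main obstacle, and it is avoidable: the paper simply keeps the constant bound $\frac{K}{M+1}$ on $(\mu_0,\nicefrac{1}{M}]$. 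Since the memory-sharing achievable NDT is decreasing in $\mu$ and $\delta_{\text{LB}}(\mu_0,M,1)=\frac{K}{M+1}$ at the crossing point, the gap on that piece is at most $\delta_{\text{OS}}(\mu_0)\cdot\frac{M+1}{K}$, i.e., precisely the value you already computed at $\mu_0$ --- the same device you use in 3b, requiring no envelope analysis at all. Finally, for $M=1$, $K>2$ you evaluate at the kink $\mu_1=\frac{K+1}{2K-1}$ of the true converse envelope, which yields the tighter value $\frac{3K-4}{2(K-1)}$ and then needs the extra verification $\frac{3K-4}{2(K-1)}\leq 1+\frac{1}{K}\big(\frac{K}{2}-\frac{2}{K}\big)$ (it holds, being equivalent to $(K-2)^2\geq 0$); the paper instead evaluates at $\tilde\mu=\frac{K+2}{2K}$, where the $(\ell,s)=(1,1)$ line equals $\frac{K}{2}$, and lands on the stated expression directly. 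Both routes are sound; replacing the $(M,2)$-line step in 3a by the constant-bound device is the only repair your write-up needs to be complete.
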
	
\begin{proof} \alert{The proof of this corollary is given in subsection \ref{subsec:cd_reg_pair_gap} of the appendix.}  
\end{proof}	
We use the last three corollaries to show that a fractional cache size of approximately $\nicefrac{1}{2}$ generates a \emph{constant} multiplicative gap less than $3$ for arbitrary $K$ and $M$. This is stated in the following corollary. 

\begin{corollary}[Constant Gap]\label{corr:const_gap} For fractional cache sizes $\mu\geq\mu_{C}\triangleq\frac{\ceil{\nicefrac{(M-1)}{2}}}{M}$ the multiplicative gap between the one-shot scheme presented in Section \ref{cha_one_shot} and the optimal NDT is constant \alert{versus $\mu$} and is \alert{bounded by %within 
		a factor of $\nicefrac{8}{3}$.} 
\end{corollary}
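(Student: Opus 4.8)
The plan is to obtain the constant by specializing the three regime-dependent gap bounds of Corollaries~\ref{corr:Gap_OS_NDT_High_RN_Quantity}--\ref{corr:Gap_OS_NDT_Low_RN_Quantity} at a single value of their free layer parameter, chosen so that each bound becomes \emph{independent of} $\mu$ on the whole interval $[\mu_C,1]$. The guiding observation is that $\mu_C=\frac{\ceil{(M-1)/2}}{M}$ is exactly the threshold $\mu'(\kappa_d)=\frac{\ceil{\kappa_d}}{M}$ with $\kappa_d=\frac{M+1-d}{d}$ produced by the choice $d=2$: indeed $\kappa_2=\frac{M-1}{2}$, so $\ceil{\kappa_2}=\ceil{(M-1)/2}$ and hence $\mu'(\kappa_2)=\mu_C$ for every $M$, irrespective of the parity of $M$. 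This single identity ties the generic layered bounds to the specific cache size $\mu_C$, and I would then treat the three operating regimes separately and take the maximum.

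First, for high $M/K$ ($M\geq 2K+1$) one has $K/M\leq\frac{K}{2K+1}<\tfrac12\leq\mu_C$, so every $\mu\geq\mu_C$ satisfies simultaneously $\mu\geq K/M$ and $\mu M\geq\ceil{(M-1)/2}$; by Corollary~\ref{corr:OS_NDT_Opt} the triplet then lies in Region~A and the one-shot scheme is exactly optimal, giving a constant gap of $1$. Second, for moderate $M/K$ ($K\leq M\leq 2K$) I would apply Corollary~\ref{corr:Gap_OS_NDT_Moderate_RN_Quantity}(2b) with $d=2$. Since $\mu'(\kappa_2)=\mu_C$, the bound
\begin{equation*}
\frac{2K}{M+1}+\frac{2}{M+1}=\frac{2(K+1)}{M+1}\leq 2
\end{equation*}
(using $K\leq M$) holds for all $\mu\in[\mu_C,K/M]$, while on $[K/M,1]$ we are back in Region~A with gap $1$; hence the gap is at most $2$ throughout $[\mu_C,1]$.

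Third, the binding regime is low $M/K$ ($K>M$), handled by Corollary~\ref{corr:Gap_OS_NDT_Low_RN_Quantity}. For $M=1$, case~(2) gives $1+\frac1K\big(\frac K2-\frac2K\big)<\tfrac32$. For $M\geq 2$ I would invoke case~(3b) with $d=2$, whose validity interval $[\mu'(\kappa_2),1]=[\mu_C,1]$ covers the entire range and whose bound is the $\mu$-independent quantity $d+\frac{d(d-1)}{K}=2+\frac2K$. Because $K>M\geq 2$ forces $K\geq 3$, this is at most $2+\tfrac23=\tfrac83$. Taking the maximum over the three regimes, $\max\{1,2,\tfrac83\}=\tfrac83$, and noting that each individual bound is constant in $\mu$ on its sub-interval of $[\mu_C,1]$, yields both assertions: the gap is constant versus $\mu$ and bounded by $\tfrac83$.

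I expect the main obstacle to be the careful bookkeeping at the regime boundaries and the small-$M$ edge cases rather than any single hard estimate. Specifically, $d=2$ is admissible in Corollaries~\ref{corr:Gap_OS_NDT_Moderate_RN_Quantity}--\ref{corr:Gap_OS_NDT_Low_RN_Quantity} only when $2\leq\frac{M+1}{2}$, i.e.\ $M\geq 3$, so the case $M=2$ (where one must instead take the endpoint $d=\tfrac32$, yielding the \emph{smaller} bound $\tfrac32+\frac{3}{4K}$) and the case $M=1$ require separate inspection, and one must verify that the transition at $\mu=K/M$ into Region~A is seamless so that the constant bounds genuinely tile all of $[\mu_C,1]$. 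One must also confirm that no $(\ell,s)$ term in the lower bound of Theorem~\ref{theorem_lower_bound} dominates on $(\mu_C,1)$ in a way that would enlarge the gap; this is already subsumed in the derivations of Corollaries~\ref{corr:Gap_OS_NDT_High_RN_Quantity}--\ref{corr:Gap_OS_NDT_Low_RN_Quantity}, so the present argument only needs to fix the uniform parameter $d=2$ and assemble the pieces.
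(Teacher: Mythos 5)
Your proof is correct and is essentially the paper's own argument: the same three-regime split in $M/K$, the same anchoring identity $\mu'(\kappa_2)=\mu_C$, the same specialization $d=2$ in Corollaries~\ref{corr:Gap_OS_NDT_Moderate_RN_Quantity} and \ref{corr:Gap_OS_NDT_Low_RN_Quantity} (yielding $\tfrac{2(K+1)}{M+1}\leq 2$ and $2+\tfrac{2}{K}\leq\tfrac{8}{3}$), and separate inspection of the small-$M$ edge cases; in fact your explicit $d=\tfrac{3}{2}$ treatment of the low-$M/K$ case $M=2$, $K\geq 3$ covers a configuration that the paper's proof silently skips (it only treats $M=1$ and $M\geq 3$ there). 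Two small repairs are needed, neither of which threatens the result. First, your inequality $\tfrac{1}{2}\leq\mu_C$ is false for odd $M$, where $\mu_C=\tfrac{M-1}{2M}<\tfrac{1}{2}$; but all you need in the high-$M/K$ step is $K/M\leq\mu_C$, which follows directly from $M\geq 2K+1$ via $K\leq\tfrac{M-1}{2}\leq\ceil{\tfrac{M-1}{2}}=\mu_C M$, so Region-A membership (hence gap $1$) for every $\mu\geq\mu_C$ stands. Second, at the moderate boundary $M=2K$ the choice $d=2$ lies outside the admissible range $\big[\tfrac{M+1}{K},\tfrac{M+1}{2}\big]=\big[2+\tfrac{1}{K},K+\tfrac{1}{2}\big]$ of Corollary~\ref{corr:Gap_OS_NDT_Moderate_RN_Quantity}(2b); however, there $\mu_C=K/M=\tfrac{1}{2}$, so the interval $[\mu_C,K/M]$ on which you invoke $d=2$ degenerates to a single point that is already in Region A, and the argument goes through — note the paper's proof (which invokes $d=2$ for all $K\geq2$, $M\geq3$) has exactly the same blind spot.
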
	  
\begin{proof}
	At high $M/K$, the NDT optimality at $\mu_{C}$ is shown in Corollary \ref{corr:OS_NDT_Opt}. For this case, the gap is per definition of optimality $1$. \alert{Next, at moderate $M/K$, the gap is, respectively, for $K=1$ and $(K,M)=(2,2)$ $1$ and $\frac{5}{4}$ as stated by Corollary \ref{corr:Gap_OS_NDT_Moderate_RN_Quantity} (points 1 and 2b) with $d=\frac{3}{2}$). Also at moderate $M/K$, when $K\geq 2, M\geq 3$, we use point 2b) with $d=2$ to find the multiplicative gap} 
	\begin{align*}
	\frac{dK}{M+1}+\frac{d(d-1)}{M+1}\Bigg|_{d=2}=\frac{2(K+1)}{M+1}\leq 2
	\end{align*} for fractional cache sizes $\mu\geq\mu'(\kappa_2)=\mu_C=\frac{\ceil{\nicefrac{(M-1)}{2}}}{M}$. Finally, at low $M/K$, the gap for the special cases $(K,M)=(2,1)$ and $M=1,K>2$ are according to 1) and 2) of Corollary \ref{corr:Gap_OS_NDT_Low_RN_Quantity} $1$ and
	\begin{align*}
	1+\frac{1}{K}\Big(\frac{K}{2}-\frac{2}{K}\Big)\leq\frac{3}{2},
	\end{align*} respectively. Additionally, when $M\geq 3$, 2b) of Corollary \ref{corr:Gap_OS_NDT_Low_RN_Quantity} with $d=2$ generates the gap \begin{align*}
	d+\frac{d(d-1)}{K}\Bigg|_{d=2}=2\bigg(1+\frac{1}{K}\bigg)\leq2\bigg(1+\frac{1}{M}\bigg)\leq\frac{8}{3}.
	\end{align*} Combining all cases, we conclude that the gap is bounded from above by $\nicefrac{8}{3}$.   
\end{proof} In conclusion, this corollary shows that one-shot schemes at fractional cache sizes of $\frac{1}{2}$ $\Big(\frac{1}{2}\geq\frac{\ceil{\nicefrac{(M-1)}{2}}}{M}\Big)$ are optimal within a constant \alert{(with respect to the lower bound) multiplicative gap of optimality}. 
%These schemes, apart from the full-duplex capabilities of the RNs, are simpler in implementation and thus practically more applicable than alignment schemes presented in Section \ref{sec:ach_special}.
\alert{The main disadvantage of this scheme becomes visible when considering regions C and D. For these regions, the first and second phase provide (per channel use) only a subset of $\mu M$ and $\min\{K,1+\mu M\}$ UEs with their desired symbols while the remaining UEs observe interference. Interference alignment provides the opportunity of alleviating the effect of undesired symbols. To this end, we establish achievability schemes that involve a novel beamforming design that facilitates (i) multicasting opportunities (when $M\geq 2$) on the DeNB-RN broadcast channel, (ii) (joint) zero-forcing opportunities and (iii) subspace interference alignment. The notion of subspace alignment was first introduced in \cite{Suh08}. The idea is to align interferences into a multi-dimensional subspace instead of a single dimension \cite{MaddahAliAlignment}. Through this beamforming design and using the insights from Lemma \ref{corr_mu_0_and_1} and Theorem \ref{th:one_shot_ach_NDT}, we are able to establish the complete NDT-tradeoff for $K+M\leq 4$. The following theorem specifies this tradeoff.}   
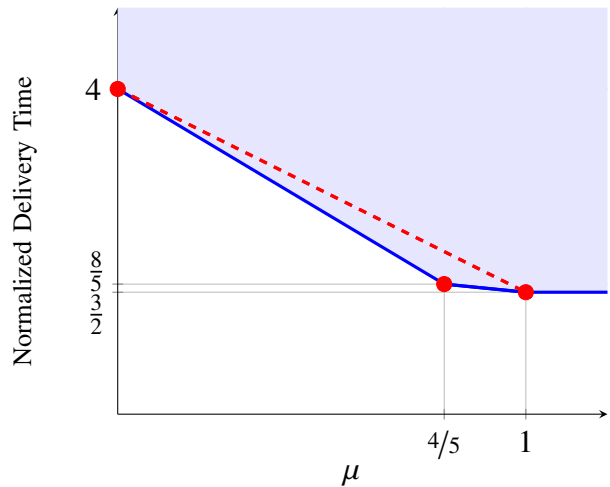
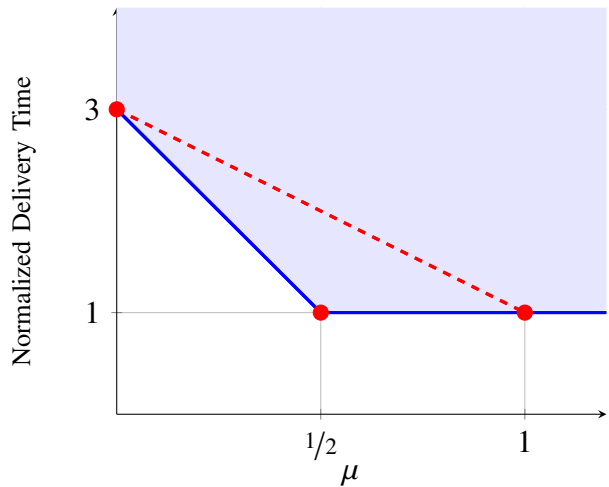
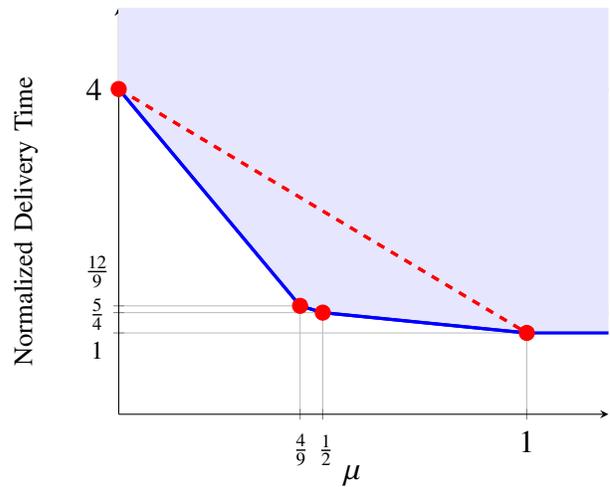
\begin{figure*}
	\centering
	\begin{subfigure}[b]{\textwidth}
		%\hspace{2em}
		\centering
		\begin{tikzpicture}[scale=0.95]
		\PlotNDTMOneKThree
		\end{tikzpicture}
		\caption[NDT for $M=1,K=3$]{\small NDT for $(K,M)=(3,1)$} 
		\label{fig:NDT_M_K_1_3_edit}     
	\end{subfigure} %\newline
	%\hspace{3em}
	\\[2ex]
	\begin{subfigure}[b]{0.475\textwidth}
		\centering
		\vspace{1em}
		\begin{tikzpicture}[scale=0.95]
		\PlotNDTMTwoKOne
		\end{tikzpicture}
		\caption[NDT for $M=2,K=1$]{\small NDT for $(K,M)=(1,2)$}
		\label{fig:NDT_M_K_2_1}
	\end{subfigure}
	\hfill
	\begin{subfigure}[b]{0.475\textwidth}  
		\centering 
		\begin{tikzpicture}[scale=0.95]
		\PlotNDTMTwoKTwo
		\end{tikzpicture}
		\caption[NDT for $M=K=2$]{\small NDT for $(K,M)=(2,2)$} 
		\label{fig:NDT_M_K_2_2}       
	\end{subfigure}
	\caption
	{\small Optimal NDT as a function of $\mu$ for (a) $(K,M)=(3,1)$, (b) $(K,M)=(1,2)$ and (c) $(K,M)=(2,2)$. The dashed line shows the achievable NDT of a time-sharing based unicasting-zero-forcing scheme.} 
	\label{fig:NDT_M_K_2_LEQ_2}
\end{figure*} 
\begin{figure*}
	\centering
	\begin{subfigure}[b]{\textwidth}
		%\hspace{2em}
		\centering
		\begin{tikzpicture}[scale=0.95]
		\PlotDoFMOneKThree
		\end{tikzpicture}
		\caption[DoF for $M=1,K=3$]{\small Achievable DoF for $(K,M)=(3,1)$} 
		\label{fig:DOF_M_K_1_3_edit}     
	\end{subfigure} %\newline
	%\hspace{3em}
	\\[2ex]
	\begin{subfigure}[b]{0.475\textwidth}
		\centering
		\vspace{1em}
		\begin{tikzpicture}[scale=0.95]
		\PlotDoFMTwoKOne
		\end{tikzpicture}
		\caption[DoF for $M=2,K=1$]{\small Achievable DoF for $(K,M)=(1,2)$}
		\label{fig:DOF_M_K_2_1}
	\end{subfigure}
	\hfill
	\begin{subfigure}[b]{0.475\textwidth}  
		\centering 
		\begin{tikzpicture}[scale=0.95]
		\PlotDoFMTwoKTwo
		\end{tikzpicture}
		\caption[DoF for $M=K=2$]{\small Achievable DoF for $(K,M)=(2,2)$} 
		\label{fig:DOF_M_K_2_2}       
	\end{subfigure}
	\caption
	{\small Achievable sum DoF as a function of $\mu$ for (a) $(K,M)=(3,1)$, (b) $(K,M)=(1,2)$ and (c) $(K,M)=(2,2)$. The dashed line shows the DoF of a time-sharing based unicasting-zero-forcing scheme.} 
	\label{fig:DOF_M_K_2_LEQ_2}
\end{figure*}
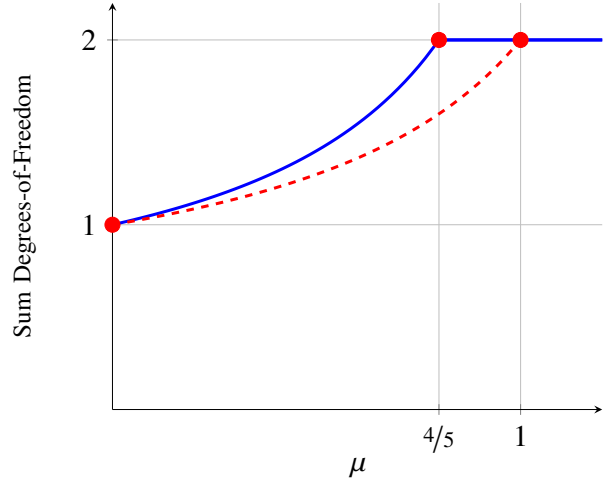
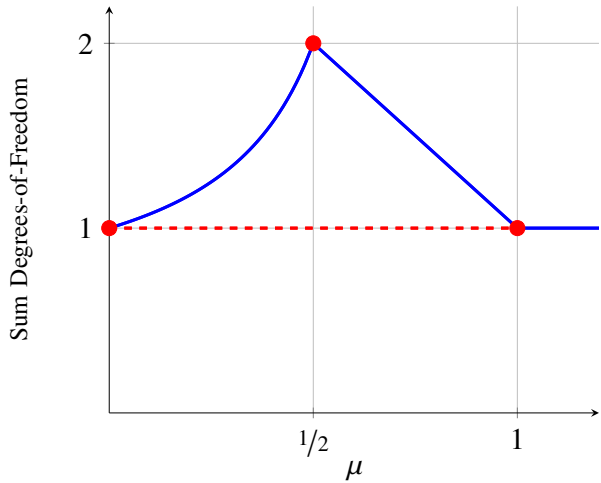
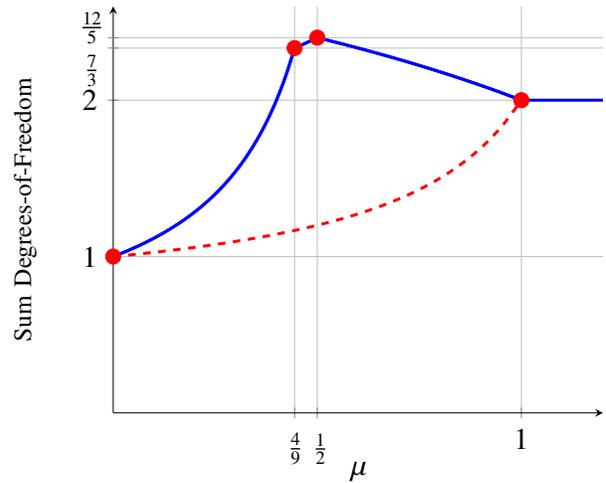 
\begin{theorem}[Optimal NDT Tradeoff]\label{th:opt_NDT_tradeoff}
	The optimal NDT tradeoff for the transceiver cache-aided network with one DeNB, $M$ RNs each endowed with a cache of fractional cache size $\mu\in[0,1]$, $K$ UEs satisfying $K+M\leq 4$ with $N\geq K+M$ \alert{number of files}, is given as 
%\begin{align}\label{eq:NDT_opt_trade_off}
%\delta^{\star}(\mu)=\max\Bigg\{&1,K+M-\mu M(K+M-1),\frac{K+M-\mu\Big(M^2+(K-3)(M-1)\Big)}{2},\nonumber\\&\qquad\frac{K+M-1-\mu\Big((M-1)(K+M-3)\Big)}{2}\Bigg\}.
%\end{align}
\begin{align}\label{eq:NDT_opt_trade_off}
\delta^{\star}(\mu)=\max\Bigg\{&1,K+M-\mu M(K+M-1),\frac{K+M-\mu\Big(M^2+(K-3)(M-1)\Big)}{2},\nonumber\\&\qquad\frac{K+M-1-\mu\Big(\alert{M^{2}+(K-3)(M-1)-M}\Big)}{2}\Bigg\}.
\end{align}
\end{theorem}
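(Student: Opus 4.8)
The plan is to prove \eqref{eq:NDT_opt_trade_off} as a genuine optimality result, establishing a matching converse and achievability for each of the six parameter pairs with $K+M\le 4$, namely $(K,M)\in\{(1,1),(2,1),(3,1),(1,2),(2,2),(1,3)\}$, and then gluing the achievable corner points by convexity of the NDT in $\mu$ (memory sharing). Since the right-hand side of \eqref{eq:NDT_opt_trade_off} is the upper envelope of four affine functions of $\mu$, it is piecewise linear and convex; I will first identify its breakpoints for each $(K,M)$, show that the converse dominates it, and then hit every breakpoint with an explicit scheme.

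For the converse I claim the three non-trivial terms of \eqref{eq:NDT_opt_trade_off} are exactly the instances $\delta_{\text{LB}}(\mu,M,1)$, $\delta_{\text{LB}}(\mu,M,2)$ and $\delta_{\text{LB}}(\mu,M-1,2)$ of Theorem \ref{theorem_lower_bound}, the fourth (unity) term being the reference-system bound. Substituting $(\ell,s)=(M,1)$, $\bar s=M$ into \eqref{eq:NDT_lw_bound_inner_comp} collapses the bracket to $M(K+M-1)$, giving the second term; $(\ell,s)=(M,2)$, $\bar s=M-1$ gives coefficient $M^2+(K-3)(M-1)$, the third term; and $(\ell,s)=(M-1,2)$ gives $M^2+(K-3)(M-1)-M$, the fourth. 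These three algebraic identities are the only real computation in the converse. The admissibility conditions $s\le\min\{M+1,K\}$ and $\ell\in[M+1-s:M]$ make the two $s=2$ bounds available precisely when $K\ge 2$, so for $(2,1),(2,2),(3,1)$ all four terms are legitimate and $\delta^\star(\mu)\ge$ the full maximum. For the $K=1$ instances the $s=2$ expressions are \emph{not} valid converse bounds (there is a single UE output to condition on); there I keep only the $(M,1)$ and unity bounds, so optimality reads as $\max\{1,\,K+M-\mu M(K+M-1)\}$, i.e. \eqref{eq:NDT_opt_trade_off} with the $s>K$ terms discarded.

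For achievability I will match the converse at each breakpoint and invoke memory sharing in between. The endpoints $\mu=0$ (NDT $K+M$) and $\mu=1$ (NDT $\max\{1,K/(M+1)\}$) are supplied by Lemma \ref{corr_mu_0_and_1}. The interior breakpoints at \emph{dyadic} cache sizes come from the one-shot scheme of Theorem \ref{th:one_shot_ach_NDT}: evaluating \eqref{eq:NDT_OS_ach} gives NDT $1$ at $\mu=\tfrac12$ for $(1,2)$, NDT $\tfrac54$ at $\mu=\tfrac12$ for $(2,2)$, and NDT $1$ at $\mu=\tfrac13$ for $(1,3)$, each coinciding with the converse value. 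The remaining breakpoints sit at \emph{non-dyadic} $\mu$ — the point $\mu=\tfrac45$ (NDT $\tfrac85$) for $(3,1)$ and $\mu=\tfrac49$ (NDT $\tfrac43$) for $(2,2)$ — where the first two tools fail (for $M=1$ the one-shot points are only $\mu\in\{0,1\}$, and plain time-sharing overshoots, as the dashed curves in Fig.~\ref{fig:NDT_M_K_2_LEQ_2} show). For these two corners I will invoke the new subspace interference-alignment plus zero-forcing scheme, after which convexity fills the segments joining consecutive breakpoints and the achievable envelope coincides with the converse everywhere.

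The hard part is the construction and analysis of the alignment scheme at those two non-dyadic corners. Concretely, I will split each file into $5$ (resp.\ $9$) sub-symbols with the cache placement, deliver the uncached RN fragments by the MAN multicast on the DeNB--RN broadcast, and precode the remaining symbols over a finite block of channel uses so that joint DeNB--RN beamforming simultaneously zero-forces the cached interference at the UEs while the residual cross-user interference is aligned into a single low-dimensional subspace per UE, following the alignment graphs whose edge labels prescribe which symbols collapse at which UE. The crux is to show that for generic channel realizations the desired-signal space and the aligned-interference space at each UE are linearly independent and of full rank, so every UE decodes its file, and that the total number of expended signal dimensions is exactly $\tfrac85 L'$ (resp.\ $\tfrac43 L'$); the memory-sharing interpolation and the comparison with the converse are then routine. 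A secondary bookkeeping point is the $K=1$ converse noted above: the $s=2$ terms are unavailable, so optimality rests on the $(M,1)$ and unity bounds, which the one-shot scheme already attains (e.g.\ NDT $1$ at $\mu=\tfrac13$ for $(1,3)$).
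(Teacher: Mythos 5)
Your proposal follows essentially the same route as the paper's proof: the converse is read off Theorem~\ref{theorem_lower_bound} via the instances $\delta_{\text{LB}}(\mu,M,1)$, $\delta_{\text{LB}}(\mu,M,2)$, $\delta_{\text{LB}}(\mu,M-1,2)$ plus the unity bound, and achievability is obtained by matching the breakpoints with Lemma~\ref{corr_mu_0_and_1} at $\mu\in\{0,1\}$, the one-shot scheme of Theorem~\ref{th:one_shot_ach_NDT} at $\mu=1/M$ for $(K,M)\in\{(1,2),(2,2),(1,3)\}$, the alignment-plus-zero-forcing constructions at $(\mu,K,M)=(\tfrac{4}{5},3,1)$ and $(\tfrac{4}{9},2,2)$, and memory sharing in between; your breakpoints and NDT values all agree with the paper's.

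The one place where you genuinely depart from the paper is the $K=1$ bookkeeping, and you are right to do so. Theorem~\ref{theorem_lower_bound} only allows $s\in[\min\{M+1,K\}]$, so for $K=1$ the two $s=2$ terms in \eqref{eq:NDT_opt_trade_off} are not valid converse bounds. The paper's proof glosses over this, and the formula as literally written actually fails for $(K,M)=(1,3)$: there the fourth term is $\frac{3-2\mu}{2}$, which at $\mu=\tfrac{2}{5}$ equals $\tfrac{11}{10}$, whereas NDT $1$ is achievable at that point (the one-shot scheme gives $\delta=1$ at $\mu=\tfrac{1}{3}$, and memory sharing with the full-cache scheme keeps the NDT at $1$ for all $\mu\geq\tfrac{1}{3}$ --- this is the paper's own Proposition~\ref{prop_M3}). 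Hence on $\mu\in(\tfrac{4}{13},\tfrac{1}{2})$ the right-hand side of \eqref{eq:NDT_opt_trade_off} strictly exceeds the true optimum for $(1,3)$, and your corrected reading --- discard the $s>K$ terms when $K=1$, leaving $\max\{1,\,K+M-\mu M(K+M-1)\}$ --- is the statement that is actually true and is what the paper proves in Section~\ref{sec:ach_special}. (For $(1,1)$ and $(1,2)$ the invalid terms happen to be dominated by the valid ones, so the formula survives unchanged, consistent with your analysis.) The remaining caveat is that, exactly like the paper's own proof of this theorem, your plan defers the substance of the two interference-alignment constructions; the items you single out as the crux --- generic linear independence of the desired and aligned-interference subspaces over the finite block, and the dimension counts $\tfrac{8}{5}L'$ and $\tfrac{4}{3}L'$ --- are precisely what Section~\ref{sec:ach_special} verifies, so the plan is complete modulo that construction.
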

\begin{proof}
The lower bound on the NDT for this setting readily follows from Theorem \ref{theorem_lower_bound}. Concretely, the last three terms inside the max-expression of \eqref{eq:NDT_opt_trade_off} correspond to %$\delta_{\text{LB}}(\mu,\ell=M,s=1)$, $\delta_{\text{LB}}(\mu,\ell=M,s=2)$ and $\delta_{\text{LB}}(\mu,\ell=M-1,s=1)$
\alert{$\delta_{\text{LB}}(\mu,M,1)$, $\delta_{\text{LB}}(\mu,M,2)$ and $\delta_{\text{LB}}(\mu,M-1,2)$}, respectively. The outer bound (achievability) on the NDT is presented in Section \ref{sec:ach_special}. Shortly, we establish the achievability for (at most) four corner points at \alert{zero-cache and full-cache} fractional cache sizes $\mu=0$, $\mu=1$ as well as \alert{intermediate} cache sizes $\mu'$ in the interval $0<\mu'<1$. The first two corner points are achievable through DeNB broadcasting and cooperative DeNB-RN zero-forcing beamforming (cf. Lemma \ref{corr_mu_0_and_1}). \alert{In the achievability scheme at intermediate fractional cache size $\mu=\nicefrac{1}{M}$, on the other hand, the one-shot scheme of Theorem \ref{th:one_shot_ach_NDT} is optimal for $(K,M)\in\{(1,2),(2,2),(1,3)\}$. This is shown in Figs. \ref{fig:NDT_M_K_2_1} and \ref{fig:NDT_M_K_2_2} for the cases $(K,M)\in\{(1,2),(2,2)\}$. For the cases $(K,M)\in\{(3,1),(2,2)\}$, respectively, we establish the optimal NDT at fractional cache sizes $\nicefrac{4}{5}$ and $\nicefrac{4}{9}$ (cf. Figs. \ref{fig:NDT_M_K_1_3_edit} and \ref{fig:NDT_M_K_2_2}) through optimized precoding design that synergistically integrates subspace alignment %and MAN multicasting 
with zero-forcing beamforming consuming \emph{finite} channel uses.}     
\end{proof}

\alert{In the following remarks, we discuss the results of Theorem \ref{th:opt_NDT_tradeoff} in further detail. In the discussion, we assume that RNs and UEs all request \emph{distinct} files. This represents the worst-case demand scenario.}  

\begin{remark}[\alert{Subspace Interference Alignment}] \alert{It is of interest to discuss why aligning all interferences into \emph{one dimension} is not feasible when $\mu<\frac{1}{M}$. In cases, where $\mu<\frac{1}{M}$ the collection of all $M$ caches cannot hold the entire library of files. Thus, under the placement strategy when each RN caches $\mu L'$ ($L'=L/\log(P)$) independent Gaussian symbols (i.e., no overlaps in file chunks being cached) of a file, we observe that $(1-\mu M)L$ symbols are \emph{only} available at the DeNB. In the following, we argue why multiple subspace dimensions have to be allocated for the interference from both RN and UE perspective.}    

\alert{First, we consider the delivery of file content to all $M$ RNs. We observe that each RN is interested in $(1-\mu)L'$ uncached symbols of its requested file; out of which $(1-\mu M)L'$ symbols are only available at the DeNB. In the worst-case scenario, this amounts to $M(1-\mu M)L$ symbols in total since there are $M$ distinct files that $M$ RNs request. These symbols have to be broadcast to the RNs. However, they represent interference to \emph{all} $K$ UEs. Aligning these symbols at all $K$ UEs to a single signaling dimension is not feasible because alignment at the UEs would make these symbols \emph{indistinguishable} at all $M$ RNs. To preclude this, all $M(1-\mu M)L'$ symbols have to be aligned at \emph{distinct} subspaces.} %(i.e, at least $M(1-\mu)L$ interference dimensions are required). 

\alert{Second, subspace alignment is also necessary when focusing on the delivery of $K$ distinct files with one single file being desired by one UE. A similar line of argument as in the previous paragraph suggests that $K(1-\mu M)L'$ symbols are only available at the DeNB (and not at the $M$ RNs) and have to be conveyed from the DeNB to $K$ UEs. For the sake of reliable decodability at the UEs, all these symbols have to be distinguishable from each other. Since only $(K-1)(1-\mu M)L'$ of those symbols represent interference at a single UE, the interference dimension increases by $(K-1)(1-\mu M)L'$.}

\alert{In conclusion, at least $(K+M-1)(1-\mu M)L'$ interference dimensions are required. For the cases $(K,M)\in\{(3,1),(2,2)\}$, at fractional cache sizes $\nicefrac{4}{5}$ ($L'=5$) and $\nicefrac{4}{9}$ ($L'=9$), the number of interference dimensions is in agreement with $(K+M-1)(1-\mu M)L'=3$.} 
  
\end{remark}

\begin{remark}[\alert{Feasibility for Constant Channels}]
\alert{For completeness, we would like to emphasize that the complete NDT tradeoff of Theorem \ref{th:opt_NDT_tradeoff} is applicable to constant channels as well. This is due the fact that both the one-shot scheme as well as the alignment scheme are feasible for time-invariant channels. In particular, the synergistic beamforming design is feasible for constant channels under the umbrella of \emph{real interference alignment} \cite{Motahari14,Maddah-Ali10}. Whether a two-phase precoding design with constant channels (similar to previous work on relay-aided X-channels \cite{Frank2014}) attains close-to-optimal performance is an interesting extension to work on. However, it is beyond the scope of this paper.}
\end{remark}
\begin{remark}[\alert{Inverse Sum DoF vs. NDT}] \alert{From the optimal NDT tradeoff of Theorem \ref{th:opt_NDT_tradeoff}, we may compute the resulting \emph{achievable} (sum) DoF as follows
	\begin{equation*}
	\text{DoF}\geq K\cdot\underbrace{\frac{1}{\delta^{\star}(\mu)}}_{\text{per-UE DoF}}+M\cdot\underbrace{\frac{(1-\mu)}{\delta^{\star}(\mu)}}_{\text{per-RN DoF}}.
	\end{equation*}
The achievable DoF for $(K,M)\in\{(3,1),(1,2),(2,2)\}$ is shown in Fig. \ref{fig:DOF_M_K_2_LEQ_2}. When comparing NDT and DoF (Fig. \ref{fig:NDT_M_K_2_LEQ_2} vs. Fig. \ref{fig:DOF_M_K_2_LEQ_2}), we clearly see that an increase in sum rate (measured by the DoF) is not necessarily equivalent to a decrease in delivery time (measured by the NDT). In other words, we observe that the NDT metric is not necessarily proportional to the inverse of the sum DoF for fractional cache sizes exceeding $\tilde{\mu}=\frac{4}{5}$ ($\tilde{\mu}=\frac{1}{2}$) when $(K,M)=(3,1)$ ($(K,M)\in\{(1,2),(2,2)\}$). In fact, interestingly there are cases when both NDT and achievable DoF decrease for increasing $\mu$. At first glance, this may seem counterintuitive. However, a closer look reveals that an increasing $\mu$ has the advantage of elevating the per-UE DoF at the cost of a declining per-RN DoF. In consequence, the overall achievable sum DoF may drop, but the increase in the per-UE DoF comes along with a decay in NDT. Ultimately, this observation suggests that the inverse sum DoF can be a misleading metric with respect to the delivery time. This is mainly due to the fact that the DoF metric loses information on the per-user DoF for asymmetric rate allocation scenarios.}        
\end{remark}

\section{Lower Bound (Converse) of the Minimum NDT}
\label{sec:lw_bd}

In this section, we present the proof of the lower bound of the minimum NDT in Theorem \ref{theorem_lower_bound}. The method of the proof extends on the approaches of \cite{avik} and \cite{conference214}.  

In the following, we expound the key idea of the proof. To this end, we introduce the following worst-case considerations. \alert{First, we presume that all $K$ UEs request \emph{distinct} files $W_{d_j}$ ($d_j\neq d_\ell,j,\ell\in[K],j\neq \ell$). Second, for the sake of notational simplicity and without loss of generality, we set the requested files by the $K$ UEs to $W_{[1:K]}=(W_1,W_2,\ldots,W_K)$, or shortly $\mathbf{d}[1:K]=[K]$. Third, in our proof we assume that there at least $K+M$ distinct files $W_{d_k}$ available, i.e., $N\geq K+M$.} Under this UE demand pattern and given channel realizations $\mathbf{f},\mathbf{g}$ and $\mathbf{H}$, we obtain a lower bound on the delivery time $T=T(\mathbf{d},\mathbf{f},\mathbf{g},\mathbf{H})$, and therefore ultimately on the NDT, of any \emph{feasible} scheme. \alert{The UE request pattern is one possible (out of ${{N}\choose{K}}K!$ requests) worst-case scenario due to the following line of argument. Redundancy in the UE request pattern can, if anything, only decrease the NDT further. This is due to the fact that the NDT of a smaller network (with respect to the number of UEs) is in general a lower bound on bigger networks. For instance, when we consider redundancy in the file request pattern at the UE side with respect to one single file, we make the following observation. The transceiver cache-aided network under study with $M$ RNs and $K$ UEs with $\tilde{K},\tilde{K}\leq K$, UEs having a \emph{redundant} request (i.e., requesting the same file) at the UE side behaves similarly (with respect to the NDT performance) to a smaller network with $M$ RNs but only $K-\tilde{K}+1$ UEs.} 

\begin{figure*}%[h]
	\begin{center}
		\vspace{15em}
		\hspace{-6em}
		\begin{tikzpicture}[scale=0.8]
		\SymModLB
		\end{tikzpicture}
		\vspace{-16em}
		\caption{\small Illustration of the proof of converse. The top part of the figure shows how $K+\ell$ unique files become (reliably) decodable in the high SNR regime when a receiver is aware of the information subset $\{\mathbf{y}_{u,[1:s]}^{T},S_{[1:\ell]}\}$ with $s+\ell\geq M+1$.}	
		\label{fig:lw_bound}
	\end{center}
\end{figure*}
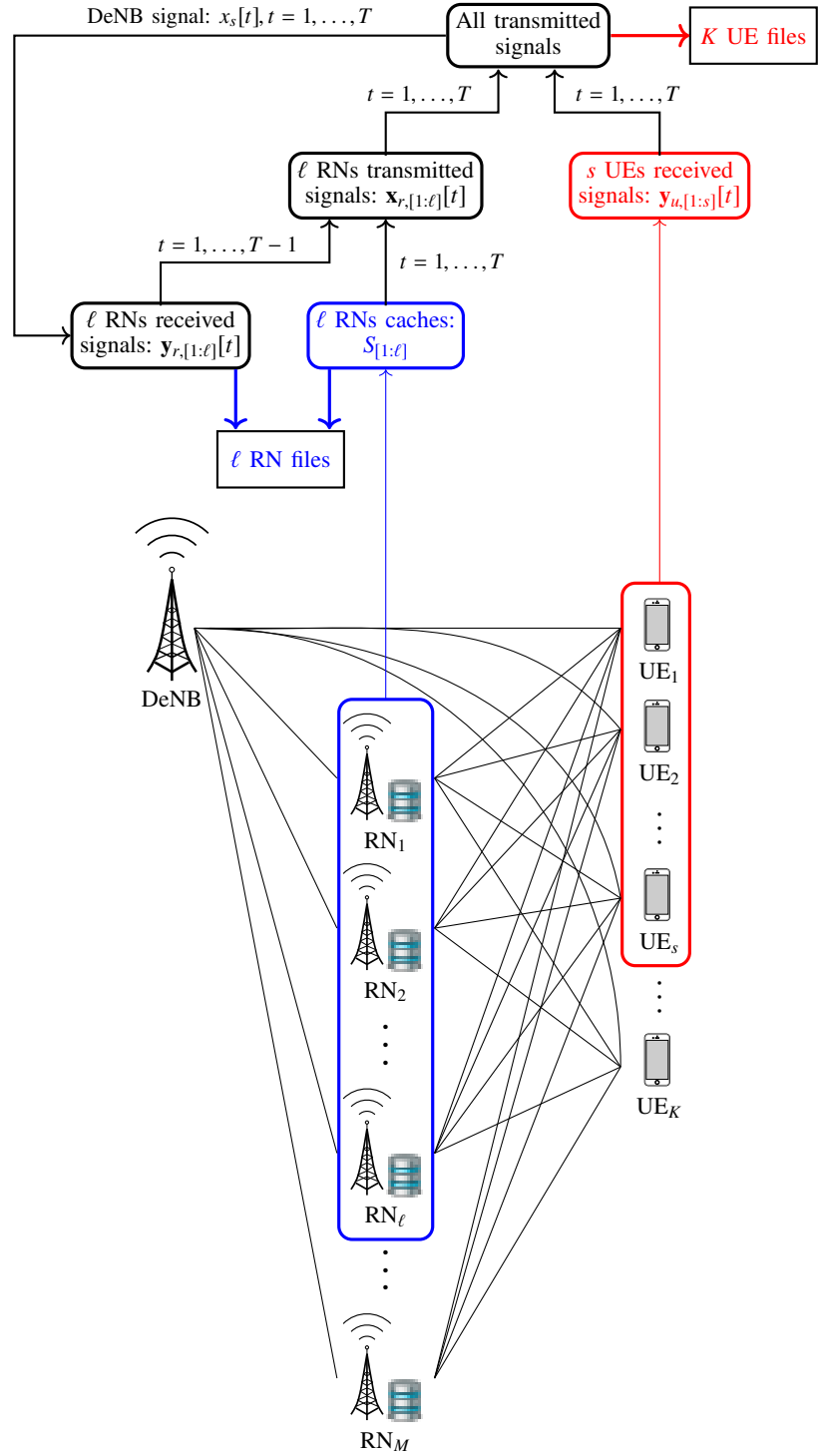
The key idea in establishing the lower bound on the NDT is that $K+\ell$ \alert{\emph{unique}} files, comprising of all $K$ files $W_{[1:K]}$ requested by the UEs and $\ell$ files desired by \alert{at least a} subset of $\ell$ RNs (out of $M$ RNs), e.g., \alert{$W_{[K+1:K+\ell]}$\footnote{%For ease of notation, we used $W_{d_{[K+1:K+\ell]}}$ to denote $W_{d_{K+1}},\ldots,W_{d_{K+\ell}}$. 
For ease of notation, we assume that the $\ell$ files retrieved by the RNs are $W_{[K+1:K+\ell]}$. 
}, can be retrieved in the \emph{high SNR regime} from} 

\begin{itemize}[leftmargin=*]
	\item $s$ output signals of the UEs, e.g., $\mathbf{y}^{T}_{u,[1:s]}$ for $1\leq s\leq \min\{M+1,K\}$, and
	\item $\ell$ cached contents of $\ell$ RNs, e.g., $S_{[1:\ell]}$, where $\bar{s}\leq\ell\leq M$ and $\bar{s}=M+1-s$. 
\end{itemize} We note that since $s+\ell\geq M+1$ holds, we are able to reconstruct all $M+1$ transmit signals ($x_s[t]$ and $x_{r,m}[t],m\in[M]$) at all $T$ time instants of the delivery phase within bounded noise. \alert{This is shown in Fig. \ref{fig:lw_bound}. We would like to emphasize to the reader that through the choice of $\ell$, the bounds account for distinct RN request patterns.}
%\alert{In the converse, we use $\widetilde{W}\triangleq\{W_{[1:K]},W_{d_{[K+1:K+\ell]}}\}$}. 
We start the converse as follows:
%\begin{eqnarray}\label{eq:conv_1}
%(K+\ell)L &= & H\big(W_{[1:K+\ell]}\big) \stackrel{(a)}= H\big(W_{[1:K+\ell]}|W_{[K+\ell+1:N]}\big) \nonumber \\
% &=&I\big(W_{[1:K+\ell]};\mathbf{y}^{T}_{[1:s]},S_{[1:\ell]}|W_{[K+\ell+1:N]}\big)+H\big(W_{[1:K+\ell]}|\mathbf{y}^{T}_{[1:s]},S_{[1:\ell]},W_{[K+\ell+1:N]}\big).
%\end{eqnarray} 
\begin{eqnarray}\label{eq:conv_1}
\hspace{-1em}(K+\ell)L &= & H\big(W_{[1:K+\ell]}\big)= H\big(W_{[1:K+\ell]}|W_{[K+\ell+1:N]}\big) \nonumber \\
&=&I\big(W_{[1:K+\ell]};\mathbf{y}^{T}_{u,[1:s]},S_{[1:\ell]}|W_{[K+\ell+1:N]}\big)+H\big(W_{[1:K+\ell]}|\mathbf{y}^{T}_{u,[1:s]},S_{[1:\ell]},W_{[K+\ell+1:N]}\big)
\end{eqnarray} 
%\begin{eqnarray}\label{eq:conv_1}
%H\big(\widetilde{W}\big)&=& H\big(\widetilde{W}|W_{[K+\ell+1:N]}\big) \nonumber \\
%&=&I\big(W_{[1:K+\ell]};\mathbf{y}^{T}_{u,[1:s]},S_{[1:\ell]}|W_{[K+\ell+1:N]}\big)+H\big(W_{[1:K+\ell]}|\mathbf{y}^{T}_{u,[1:s]},S_{[1:\ell]},W_{[K+\ell+1:N]}\big)
%\end{eqnarray} 
In what follows, we upper bound each summand in Eq. \eqref{eq:conv_1} individually. 
% The first summand is upper bounded according to:
\alert{Using the chain rule of mutual information, the first term in \eqref{eq:conv_1} can be rewritten as shown
below
\begin{align}
\label{eq:conv_term_intermediate_step1}
&I\big(W_{[1:K+\ell]};\mathbf{y}^{T}_{u,[1:s]},S_{[1:\ell]}|W_{[K+\ell+1:N]}\big)\nonumber\\&=I\big(W_{[1:K+\ell]};\mathbf{y}^{T}_{u,[1:s]}|W_{[K+\ell+1:N]}\big)+I\big(W_{[1:K+\ell]};S_{[1:\ell]}|W_{[K+\ell+1:N]},\mathbf{y}^{T}_{u,[1:s]}\big). 
\end{align} Next, we use the non-negativity of mutual information ($I\big(W_{[1:K+\ell]};W_{[1:s]}|W_{[K+\ell+1:N]},\mathbf{y}^{T}_{u,[1:s]},S_{[1:\ell]}\big)$) followed by the chain rule of mutual information to upper bound \eqref{eq:conv_term_intermediate_step1} by
\begin{align}
\label{eq:conv_term_intermediate_step2}
&I\big(W_{[1:K+\ell]};\mathbf{y}^{T}_{u,[1:s]}|W_{[K+\ell+1:N]}\big)+I\big(W_{[1:K+\ell]};S_{[1:\ell]},W_{[1:s]}|W_{[K+\ell+1:N]},\mathbf{y}^{T}_{u,[1:s]}\big)\nonumber\\&= I\big(W_{[1:K+\ell]};\mathbf{y}^{T}_{u,[1:s]}|W_{[K+\ell+1:N]}\big)+I\big(W_{[1:K+\ell]};W_{[1:s]}|W_{[K+\ell+1:N]},\mathbf{y}^{T}_{u,[1:s]}\big)\nonumber\\&\qquad+I\big(W_{[1:K+\ell]};S_{[1:\ell]}|W_{[1:s]\cup[K+\ell+1:N]},\mathbf{y}^{T}_{u,[1:s]}\big). 
\end{align}}
\noindent \alert{Rewriting the three mutual information terms of \eqref{eq:conv_term_intermediate_step2} by their respective differential or discrete entropy terms and bounding them results in the following chain of inequalities.} 
\begin{flalign}
\label{eq:conv_term1}
%&I\big(W_{[1:K+\ell]};\mathbf{y}^{T}_{u,[1:s]},S_{[1:\ell]}|W_{[K+\ell+1:N]}\big)\nonumber\\&=I\big(W_{[1:K+\ell]};\mathbf{y}^{T}_{u,[1:s]}|W_{[K+\ell+1:N]}\big)+I\big(W_{[1:K+\ell]};S_{[1:\ell]}|W_{[K+\ell+1:N]},\mathbf{y}^{T}_{u,[1:s]}\big)\nonumber\\&\leq I\big(W_{[1:K+\ell]};\mathbf{y}^{T}_{u,[1:s]}|W_{[K+\ell+1:N]}\big)+I\big(W_{[1:K+\ell]};S_{[1:\ell]},W_{[1:s]}|W_{[K+\ell+1:N]},\mathbf{y}^{T}_{u,[1:s]}\big)\nonumber\\&= I\big(W_{[1:K+\ell]};\mathbf{y}^{T}_{u,[1:s]}|W_{[K+\ell+1:N]}\big)+I\big(W_{[1:K+\ell]};W_{[1:s]}|W_{[K+\ell+1:N]},\mathbf{y}^{T}_{u,[1:s]}\big)\nonumber\\&\qquad+I\big(W_{[1:K+\ell]};S_{[1:\ell]}|W_{[1:s]\cup[K+\ell+1:N]},\mathbf{y}^{T}_{u,[1:s]}\big)\nonumber\\&=
&h\big(\mathbf{y}^{T}_{u,[1:s]}|W_{[K+\ell+1:N]}\big)-h\big(\mathbf{y}^{T}_{u,[1:s]}|W_{[1:N]}\big)+ H\big(W_{[1:s]}|W_{[K+\ell+1:N]},\mathbf{y}^{T}_{u,[1:s]}\big)\nonumber\\&\qquad-H\big(W_{[1:s]}|W_{[1:N]},\mathbf{y}^{T}_{u,[1:s]}\big)+ H\big(S_{[1:\ell]}|W_{[1:s]\cup [K+\ell+1:N]},\mathbf{y}^{T}_{u,[1:s]}\big)- H\big(S_{[1:\ell]}|W_{[1:N]},\mathbf{y}^{T}_{u,[1:s]}\big)\nonumber\\
&\stackrel{(a)}\leq h\big(\mathbf{y}^{T}_{u,[1:s]}\big)-h\big(\mathbf{z}^{T}_{u,[1:s]}\big)+L\epsilon_L+ H\big(S_{[1:\ell]}|W_{[1:s]\cup [K+\ell+1:N]},\mathbf{y}^{T}_{u,[1:s]}\big)\nonumber\\&\stackrel{(b)}\leq sT\log(2\pi e(cP+1))-h\big(\mathbf{z}^{T}_{u,[1:s]}\big)+L\epsilon_L+\sum_{i=1}^{\ell}H\big(S_{i}|W_{[1:s]\cup [K+\ell+1:N]},\mathbf{y}^{T}_{u,[1:s]},S_{[1:i-1]}\big)\nonumber\\&\stackrel{(c)}= sT\log(cP+1)+L\epsilon_L+\sum_{i=1}^{\bar{s}}H\big(S_{i}|W_{[1:s]\cup [K+\ell+1:N]},\mathbf{y}^{T}_{u,[1:s]}, S_{[1:i-1]}\big)\nonumber\\&\quad+\sum_{i=\bar{s}+1}^{\ell}H\big(S_{i}|W_{[1:s]\cup [K+\ell+1:N]},\mathbf{y}^{T}_{u,[1:s]}, S_{[1:i-1]}\big)\nonumber\\&\leq sT\log(cP+1)+L\epsilon_L+\sum_{i=1}^{\bar{s}}H\big(S_{i}|W_{[1:s]\cup [K+\ell+1:N]}\big)+\sum_{i=\bar{s}+1}^{\ell}H\big(S_{i}|W_{[1:s]\cup [K+\ell+1:N]},\mathbf{y}^{T}_{u,[1:s]}, S_{[1:i-1]}\big)\nonumber\\&\stackrel{(d)}\leq sT\log(cP+1)+L\epsilon_L+\sum_{i=1}^{\bar{s}}\sum_{j=s+1}^{K+\ell}H\big(S_{i,j}\big)+\sum_{i=\bar{s}+1}^{\ell}H\big(S_{i}|W_{[1:s]\cup [K+\ell+1:N]},\mathbf{y}^{T}_{u,[1:s]}, S_{[1:i-1]}\big)\nonumber\\&\stackrel{(e)}\leq sT\log(cP+1)+L\epsilon_L+\bar{s}(K+\ell-s)\mu L+\sum_{i=\bar{s}+1}^{\ell}H\big(S_{i}|W_{[1:s]\cup [K+\ell+1:N]},\mathbf{y}^{T}_{u,[1:s]}, S_{[1:i-1]}\big)\nonumber\\
&\stackrel{(f)}= sT\log(cP+1)+L\epsilon_L+\bar{s}(K+\ell-s)\mu L\nonumber\\&\quad+\sum_{i=\bar{s}+1}^{\ell}H\big(S_{i}|W_{[1:s]\cup [K+\ell+1:N]},\mathbf{y}^{T}_{u,[1:s]}, S_{[1:i-1]},\mathbf{x}_{r,[1:i-1]}[1],\mathbf{z}^{T}_{r,[1:i-1]}\big)\nonumber
%&\stackrel{(f)}= sT\log(cP+1)+L\epsilon_L+\bar{s}(K+\ell-s)\mu L\nonumber\\&\quad+\sum_{i=\bar{s}+1}^{\ell}H\big(S_{i}|W_{[1:s]\cup [K+\ell+1:N]},\mathbf{y}^{T}_{u,[1:s]}, S_{[1:i-1]},\mathbf{x}_{r,[1:i-1]}[1],\mathbf{z}^{T}_{u,[s+1:K]},\mathbf{z}^{T}_{r,[1:i-1]}\big)\nonumber
\\&\leq sT\log(cP+1)+L\epsilon_L+\bar{s}(K+\ell-s)\mu L\nonumber\\&+\sum_{i=\bar{s}+1}^{\ell}H\big(S_{i},W_{[s+1:K+i-1]},\mathbf{z}^{T}_{u,[1:s]}|W_{[1:s]\cup [K+\ell+1:N]},\mathbf{y}^{T}_{u,[1:s]}, S_{[1:i-1]},\mathbf{x}_{r,[1:i-1]}[1],\mathbf{z}^{T}_{r,[1:i-1]}\big)\nonumber
%\\&\leq sT\log(cP+1)+L\epsilon_L+\bar{s}(K+\ell-s)\mu L\nonumber\\&+\sum_{i=\bar{s}+1}^{\ell}H\big(S_{i},W_{[s+1:K+i-1]}|W_{[1:s]\cup [K+\ell+1:N]},\mathbf{y}^{T}_{u,[1:s]}, S_{[1:i-1]},\mathbf{x}_{r,[1:i-1]}[1],\mathbf{z}^{T}_{u,[s+1:K]},\mathbf{z}^{T}_{r,[1:i-1]}\big)\nonumber
\\&\stackrel{(g)}\leq sT\log(cP+1)+L\epsilon_L+\bar{s}(K+\ell-s)\mu L+T\epsilon_P\log(P)\nonumber\\&\quad+\sum_{i=\bar{s}+1}^{\ell}H\big(S_i|W_{[1:K+i-1]\cup [K+\ell+1:N]},\mathbf{y}^{T}_{u,[1:s]}, S_{[1:i-1]},\mathbf{x}_{r,[1:i-1]}[1],\mathbf{z}^{T}_{u,[s+1:K]},\mathbf{z}^{T}_{r,[1:i-1]}\big)\nonumber
%&\stackrel{(g)}\leq sT\log(cP+1)+L\epsilon_L+\bar{s}(K+\ell-s)\mu L+To(\log(P))\nonumber\\&\quad+\sum_{i=\bar{s}+1}^{\ell}H\big(S_i|W_{[1:K+i-1]\cup [K+\ell+1:N]},\mathbf{y}^{T}_{u,[1:s]}, S_{[1:i-1]},\mathbf{x}_{r,[1:i]}[1],\mathbf{z}^{T}_{u,[s+1:K]},\mathbf{z}^{T}_{r,[1:i-1]}\big)\nonumber
\\&\stackrel{(d)}\leq sT\log(cP+1)+L\epsilon_L+\bar{s}(K+\ell-s)\mu L+T\epsilon_P\log(P)+\sum_{i=\bar{s}+1}^{\ell}\sum_{j=K+i}^{K+\ell}H\big(S_{i,j}\big)\nonumber\\&\stackrel{(e)}\leq sT\log(cP+1)+L\epsilon_L+\bar{s}(K+\ell-s)\mu L+T\epsilon_P\log(P)+\sum_{i=\bar{s}+1}^{\ell}(\ell-i+1)\mu L\nonumber\\&=sT\log(P)\Bigg[1+\frac{\log(c+1/P)}{\log(P)} +\frac{\epsilon_P}{s}\Bigg]+\Bigg[\bar{s}\bigg(K-s+\frac{\bar{s}-1}{2}\bigg)+\frac{\ell(\ell+1)}{2}\Bigg]\mu L+L\epsilon_L,
%\\&\stackrel{(d)}\leq sT\log(cP+1)+L\epsilon_L+\bar{s}(K+\ell-s)\mu L+To(\log(P))+\sum_{i=\bar{s}+1}^{\ell}\sum_{j=K+i}^{K+\ell}H\big(S_{i,j}\big)\nonumber\\&\stackrel{(e)}\leq sT\log(cP+1)+L\epsilon_L+\bar{s}(K+\ell-s)\mu L+To(\log(P))+\sum_{i=\bar{s}+1}^{\ell}(\ell-i+1)\mu L\nonumber\\&= sT\log(cP+1)+L\epsilon_L+To(\log(P))+\Bigg[\bar{s}\bigg(K-s+\frac{\bar{s}-1}{2}\bigg)+\frac{\ell(\ell+1)}{2}\Bigg]\mu L,
\end{flalign} 
where the steps in \eqref{eq:conv_term1} are explained as follows:
\begin{itemize}
	\item Step (a) follows from the fact that dropping the conditioning on the first term does not increase entropy. Further, we apply Fano's inequality to the third term. Hereby, $\epsilon_L$ is a function, independent of $P$ which vanishes in the limit as $L\rightarrow\infty$. \alert{The fourth and sixth term are zero, respectively, because $W_{[1:s]}\subseteq W_{[1:N]}$ and $S_{[1:\ell]}$ is a deterministic function of $W_{[1:N]}$.}    
	\item Step (b) follows by applying \cite[Lemma 1]{avik} on the first differential entropy term. Additionally, we use the chain rule to rewrite the fourth term. 
	\item Step (c) is due to the fact that the channel noise $\mathbf{z}^{T}_{u,[1:s]}$ is i.i.d. across time and has a Gaussian distribution with zero mean and unit variance. 
	\item \alert{Step (d) follows from the fact that in a discrete entropy of the form $H(S_{i}|\tilde{\mathcal{W}}),\tilde{\mathcal{W}}\subseteq\mathcal{W}$ with $\mathcal{W}=\{W_1,\ldots,W_N\}$, only the files $\mathcal{W}\setminus\tilde{\mathcal{W}}$ have non-zero contribution. This is because the cached content at the $i$-th RN of the $j$-th file is solely a function of $W_j$, i.e., $S_{i,j}=\phi_{i,j}(W_j)$.}
	\item Step (e) is since the entropy of each local file cache content is upper bounded according to $H(S_{i,j})\leq\mu L$.   
	\item In step (f), we use the fact that in the first channel use ($t=1$), the transmit signal $x_{r,m}[1]$ at RN$_m$ depends (apart from CSI) solely on the cached content $S_m$. Further, we note that the 
	%noise terms $\mathbf{z}^{T}_{u,[s+1:K]}$ and $\mathbf{z}^{T}_{r,[1:i-1]}$ 
	\alert{noise term $\mathbf{z}^{T}_{r,[1:i-1]}$ is independent of all the remaining random variables in the conditional entropy term.} 
	\item \alert{In step (g), we upper bound $H\big(W_{[s+1:K+i-1]},\mathbf{z}^{T}_{u,[1:s]}|W_{[1:s]\cup [K+\ell+1:N]},\mathbf{y}^{T}_{u,[1:s]}, S_{[1:i-1]},\mathbf{x}_{r,[1:i-1]}[1],$\newline $\mathbf{z}^{T}_{r,[1:i-1]}\big)$ by $T\epsilon_P\log(P)+L\epsilon_L$, where $\epsilon_P$ is any function in $P$ which satisfies $\lim_{P\rightarrow\infty}\epsilon_P=0$. This is because 
	\begin{align*}
	h(\mathbf{z}^{T}_{u,[1:s]}|W_{[1:s]\cup [K+\ell+1:N]},\mathbf{y}^{T}_{u,[1:s]}, S_{[1:i-1]},\mathbf{x}_{r,[1:i-1]}[1],\mathbf{z}^{T}_{r,[1:i-1]}\big))\leq h(\mathbf{z}^{T}_{u,[1:s]})=sT\log(2\pi e)%T\epsilon_P\log(P)\nonumber
	\end{align*} and 
	\begin{align*}
	H\big(W_{[s+1:K+i-1]}|W_{[1:s]\cup [K+\ell+1:N]},\mathbf{y}^{T}_{u,[1:s]}, \mathbf{z}^{T}_{u,[1:s]} S_{[1:i-1]},\mathbf{x}_{r,[1:i-1]}[1],\mathbf{z}^{T}_{r,[1:i-1]}\big)\leq L\epsilon_L
	\end{align*} due to Fano's inequality. In the nutshell, this bound states that the files $W_{[s+1:K+i-1]}$ can be resolved reliably when knowing $\mathbf{z}^{T}_{u,[1:s]}$ in addition to $S_{[1:i-1]}$ and $\mathbf{y}^{T}_{u,[1:s]}$ (if $i\geq M+2-s=\bar{s}+1$) (cf. Fig. \ref{fig:lw_bound}). This works since the Markov chain $\mathbf{z}^{T}_{u,[1:s]}\rightarrow\Big(\mathbf{y}^{T}_{u,[s+1:K]},\mathbf{x}^{T}_{s},S_{[1:i-1]},W_{[s+1:K]}\Big)\rightarrow W_{[K+1:K+i-1]}$ is applicable to the network.}   
%	\item Step (g) follows from \cite[Lemma 2]{avik}. In the nutshell, this lemma states that the files $W_{[s+1:K+i-1]}$ can be resolved from $S_{[1:i-1]}$ and $\mathbf{y}^{T}_{u,[1:s]}$ (if $i-1+s\geq M+1$) within bounded noise distortion $To(\log(P))$, where $\frac{o(\log(P))}{\log(P)}\rightarrow 0$ as $P\rightarrow\infty$. This works since the following Markov chains are applicable to the network:
%	\newline
%	\vspace{-2em}
%	\begin{enumerate}[leftmargin=3\parindent] 
%		\item[1a)] $S_{[1:i-1]}\rightarrow\mathbf{x}_{r,[1:i-1]}[1]$  
%		\item[1b)] $\big(S_{[1:i-1]},\mathbf{y}_{r,[1:i-1]}[t]\big)\rightarrow\mathbf{x}_{r,[1:i-1]}[t],\forall t\in[2:T]$ 
%		\item[2)] $\big(\mathbf{x}_{r,[1:i-1]}[t],\mathbf{y}_{u,[1:s]}[t],\mathbf{z}_{u,[1:K]}[t],\mathbf{z}_{r,[1:i-1]}[t]\big)\rightarrow\big(\mathbf{y}_{u,[s+1:K]}[t],\mathbf{y}_{r,[1:i-1]}[t]\big)$ if $i-1\geq\bar{s}\triangleq M+1-s,\forall t\in[T]$
%		\item[3a)]
%		$\mathbf{y}^{T}_{u,[s+1:K]}\rightarrow W_{[s+1:K]}$
%		\item[3b)]
%		$\Big(\mathbf{y}^{T}_{r,[1:i-1]},S_{[1:i-1]}\Big)\rightarrow W_{[K+1:K+i-1]}$   	
% \end{enumerate}	  
\end{itemize}
\alert{Now we consider the second term of \eqref{eq:conv_1}. Using the chain rule of discrete entropies for this term leads to
\begin{align}
\label{eq:conv_term2_intermediate_step1}
&H\big(W_{[1:K+\ell]}|\mathbf{y}^{T}_{u,[1:s]},S_{[1:\ell]},W_{[K+\ell+1:N]}\big)\nonumber\\&\qquad=H\big(W_{[1:s]}|\mathbf{y}^{T}_{u,[1:s]},S_{[1:\ell]},W_{[K+\ell+1:N]}\big)+H\big(W_{[s+1:K+\ell]}|\mathbf{y}^{T}_{u,[1:s]},S_{[1:\ell]},W_{[1:s]\cup[K+\ell+1:N]}\big)
\end{align} 
Next, we use the aforementioned steps $(a),(f)$ and $(g)$, respectively, to upper bound \eqref{eq:conv_term2_intermediate_step1}}.
\begin{flalign}
\label{eq:conv_term2}
&H\big(W_{[1:s]}|\mathbf{y}^{T}_{u,[1:s]},S_{[1:\ell]},W_{[K+\ell+1:N]}\big)+H\big(W_{[s+1:K+\ell]}|\mathbf{y}^{T}_{u,[1:s]},S_{[1:\ell]},W_{[1:s]\cup[K+\ell+1:N]}\big)\nonumber\\&\stackrel{(a)}\leq L\epsilon_L+H\big(W_{[s+1:K+\ell]}|\mathbf{y}^{T}_{u,[1:s]},S_{[1:\ell]},W_{[1:s]\cup[K+\ell+1:N]}\big)\nonumber\\&\stackrel{(f)}= L\epsilon_L+H\big(W_{[s+1:K+\ell]}|\mathbf{y}^{T}_{u,[1:s]},\mathbf{z}^{T}_{r,[1:\ell]},S_{[1:\ell]},\mathbf{x}_{r,[1:\ell]}[1],W_{[1:s]\cup[K+\ell+1:N]}\big)\nonumber\\&\leq L\epsilon_L+H\big(W_{[s+1:K+\ell]},\mathbf{z}^{T}_{u,[1:s]}|\mathbf{y}^{T}_{u,[1:s]},\mathbf{z}^{T}_{r,[1:\ell]},S_{[1:\ell]},\mathbf{x}_{r,[1:\ell]}[1],W_{[1:s]\cup[K+\ell+1:N]}\big)\nonumber\\&\stackrel{(g)}\leq L\epsilon_L+T\epsilon_P\log(P) 
\end{flalign} 
%\begin{flalign}
%\label{eq:conv_term2}
%&H\big(W_{[1:K+\ell]}|\mathbf{y}^{T}_{u,[1:s]},S_{[1:\ell]},W_{[K+\ell+1:N]}\big)\nonumber\\&=H\big(W_{[1:s]}|\mathbf{y}^{T}_{u,[1:s]}\big)+H\big(W_{[s+1:K]}|\mathbf{y}^{T}_{u,[1:s]},S_{[1:\ell]},W_{[1:s]\cup[K+\ell+1:N]}\big)\nonumber\\&\quad+H\big(W_{[K+1:K+\ell]}|\mathbf{y}^{T}_{u,[1:s]},S_{[1:\ell]},W_{[1:K]\cup[K+\ell+1:N]}\big)\nonumber\\&\stackrel{(a)}\leq L\epsilon_L+H\big(W_{[s+1:K]}|\mathbf{y}^{T}_{u,[1:s]},S_{[1:\ell]},W_{[1:s]\cup[K+\ell+1:N]}\big)+H\big(W_{[K+1:K+\ell]}|\mathbf{y}^{T}_{u,[1:s]},S_{[1:\ell]},W_{[1:K]\cup[K+\ell+1:N]}\big)\nonumber\\&\stackrel{(e)}= L\epsilon_L+H\big(W_{[s+1:K]}|\mathbf{y}^{T}_{u,[1:s]},\mathbf{z}^{T}_{u,[s+1:K]},\mathbf{z}^{T}_{r,[1:i-1]},S_{[1:\ell]},\mathbf{x}_{r,[1:\ell]}[1],W_{[1:s]\cup[K+\ell+1:N]}\big)\nonumber\\&\quad+H\big(W_{[K+1:K+\ell]}|\mathbf{y}^{T}_{u,[1:s]},\mathbf{z}^{T}_{u,[s+1:K]},\mathbf{z}^{T}_{r,[1:i-1]},S_{[1:\ell]},\mathbf{x}_{r,[1:\ell]}[1],W_{[1:K]\cup[K+\ell+1:N]}\big)\nonumber\\&\stackrel{(f)}\leq L\epsilon_L+T o(\log(P)) 
%\end{flalign} 
Combining \eqref{eq:conv_term1} and \eqref{eq:conv_term2} in \eqref{eq:conv_1}, we get the following inequality
\begin{align}
\label{eq:ineq_bound}
\hspace{-0.75em}(K+\ell)L\leq sT\log(P)\Bigg[1+\frac{\log(c+1/P)}{\log(P)} +\frac{\epsilon_P}{s}\Bigg]+\Bigg[\bar{s}\bigg(K-s+\frac{\bar{s}-1}{2}\bigg)+\frac{\ell(\ell+1)}{2}\Bigg]\mu L+L\epsilon_L.
\end{align}
Dividing both sides of \eqref{eq:ineq_bound} by $L$, letting $L\rightarrow\infty$ and $P\rightarrow\infty$, we obtain the following class of lower bounds on the minimum NDT
\begin{align}\label{eq:delta_lb_bound}
\delta^{\star}(\mu)\geq\delta_{\text{LB}}(\mu,\ell,s)\triangleq\frac{K+\ell-\mu(\bar{s}\big(K-s+\frac{(\bar{s}-1)}{2}\big)+\frac{\ell}{2}(\ell+1))}{s}.
\end{align} By optimizing the bound $\delta_{\text{LB}}(\mu,\ell,s)$ in \eqref{eq:delta_lb_bound} with respect to $\ell\in[\bar{s}:M]$ and $s\in[\min\{M+1,K\}]$, we determine one term of the NDT lower bound in \eqref{eq:NDT_lw_bound}. Additionally, we recall that the NDT is also bounded from below by the performance of the reference interference-free system which has an NDT of $1$. The maximum over these two lower bounds concludes the proof of Theorem \ref{theorem_lower_bound}. \alert{We note that the lower bound simplifies to $\delta_{\text{LB}}(0,M,1)=K+M$ for $\mu=0$, while it reduces to $\max\big\{1,\delta_{\text{LB}}(1,0,M+1)\big\}=\max\big\{1,\nicefrac{K}{(M+1)}\big\}$ for $\mu=1$. These NDT lower bounds coincide with the upper bounds one would intuitively assume to be optimal in the worst-case scenario. These are respectively, unicasting $K+M$ files for $\mu=0$ from the DeNB and zero-forcing beamforming of $K$ files to $K$ UEs from $M+1$ identical transmitters ($M$ RNs and DeNB) at $\mu=1$.}

\section{NDT-One Shot Scheme: \alert{Synergistic Integration} of Multicasting and Zero-Forcing Schemes}
\label{cha_one_shot}

We now develop a general \emph{one-shot} scheme. We refer to a scheme to be one-shot if all receiving nodes are able to decode their \emph{desired} symbols on a \emph{single} channel use basis. Such schemes explicitly preclude symbol decoding over multiple channel uses. % In each channel use, 
Hereby, our one-shot scheme uses a combination of the Maddah-Ali Niesen (MAN) scheme \cite{Maddah-Ali2} and ZF to simultaneously convey a subset of RNs and UEs with their desired symbols. The achievability for the extreme cases of zero-cache ($\mu=0$) and full-cache ($\mu=1$) are based on one-shot schemes (These are, respectively, unicasting and zero-forcing.) and readily follow from Lemma \ref{corr_mu_0_and_1}. Thus in the sequel, we consider fractional cache sizes that are strictly larger than zero and strictly less than one. Henceforth, we explicitly assume $M\geq 2$.  

To this end, let us consider the achievability at fractional cache sizes
$\mu\in\{\nicefrac{1}{M},\nicefrac{2}{M},\nicefrac{3}{M},\ldots,\nicefrac{(M-1)}{M}\}$. We split each file $W_n,\forall n\in[N],$ into $\Gamma{{M}\choose{\mu M}}$ symbols, where $\Gamma={{K}\choose{\psi}}$ and $\psi=\min\{K,\mu M\}$. Symbols of every file are labeled according to
\begin{equation*}
W_n=\big(\eta_{n,\mathcal{T},\mathcal{U}}:\mathcal{T}\subset [M],|\mathcal{T}|=\mu M,\mathcal{U}\subset [K],|\mathcal{U}|=\psi\big).
\end{equation*} In the placement phase, for each $n\in[N]$, symbol $\eta_{n,\mathcal{T},\mathcal{U}}$ is prefetched at $\text{RN}_m$ if $m\in\mathcal{T}$. Thus, each RN caches a total of $N\Gamma{{M-1}\choose{\mu M-1}}$ symbols. It is easy to see that the cache constraint at each RN is satisfied since \begin{equation*}
\frac{\#\text{ of cached symbols}}{\#\text{ of total symbols}}=\frac{N\Gamma{{M-1}\choose{\mu M-1}}}{N\Gamma{{M}\choose{\mu M}}}=\mu.
\end{equation*} 
Next, we describe the delivery phase. Consider for the sake of notational simplicity, the worst-case request vector %\begin{equation*}
%\big(\underbrace{1,2,\ldots,K}_{\text{UEs requested files}},\underbrace{K+1,K+2,\ldots,K+M}_{\text{RNs requested files}}\big)
%\end{equation*}
\begin{equation*}
\big(1,2,\ldots,K,K+1,K+2,\ldots,K+M\big)
\end{equation*} for $N\geq K+M$. At each time instant $t$, we focus on the delivery of desired symbols for a subset $\mathcal{S}_R\subset[M]$ of $|\mathcal{S}_R|=1+\mu M$ RNs and $\mathcal{S}_U\subset[K]$ of $|\mathcal{S}_U|=\psi$ UEs. Hereby, the UEs are served by a subset of $|\mathcal{S}_R'|=\psi$ RNs belonging to the set $\mathcal{S}_R'\subset[M]$. To this end, DeNB and RN transmit the following signals:
\begin{align}\label{eq:Tx_sig_DeNB_one_shot}
x_s[t]=&\sum_{m\in\mathcal{S}_R}\nu_{\eta_{K+m,\mathcal{S}_R\setminus\{m\},\mathcal{S}_U}}[t]\eta_{K+m,\mathcal{S}_R\setminus\{m\},\mathcal{S}_U}\\\label{eq:Tx_sig_RN_one_shot} x_{r,m'}[t]=&\begin{cases}\sum_{\substack{m\in\mathcal{S}_R\\m\neq m'}}\beta^{(m')}_{\eta_{K+m,\mathcal{S}_R\setminus\{m\},\mathcal{S}_U}}[t]\eta_{K+m,\mathcal{S}_R\setminus\{m\},\mathcal{S}_U}+\sum_{\substack{p\in \mathcal{S}_U}}\beta^{(m')}_{\eta_{p,\mathcal{S}_R',\mathcal{S}_U}}[t]\eta_{p,\mathcal{S}_R',\mathcal{S}_U}\qquad&\text{ if }m'\in\big(\mathcal{S}_R\cap\mathcal{S}_R'\big)\\ \sum_{\substack{m\in\mathcal{S}_R\\m\neq m'}}\beta^{(m')}_{\eta_{K+m,\mathcal{S}_R\setminus\{m\},\mathcal{S}_U}}[t]\eta_{K+m,\mathcal{S}_R\setminus\{m\},\mathcal{S}_U}&\text{ if }m'\in\big(\mathcal{S}_R\setminus\mathcal{S}_R'\big)\\ \sum_{\substack{p\in \mathcal{S}_U}}\beta^{(m')}_{\eta_{p,\mathcal{S}_R',\mathcal{S}_U}}[t]\eta_{p,\mathcal{S}_R',\mathcal{S}_U}&\text{ if }m'\in\big(\mathcal{S}_R'\setminus\mathcal{S}_R\big)\\ 0&\text{ otherwise}\end{cases}
\end{align} In these equations, the precoders for symbol $\eta_{n,\mathcal{T},\mathcal{S}_U}$ originating from the DeNB and $\text{RN}_m$ are denoted by $\nu_{\eta_{n,\mathcal{T},\mathcal{S}_U}}$ and $\beta^{(m)}_{\eta_{n,\mathcal{T},\mathcal{S}_U}}$, respectively. Decoding at the RNs in $\mathcal{S}_R$ follows along the standard MAN manner. That is, each RN $m'\in\mathcal{S}_R$ exploits its knowledge of symbols $\eta_{K+m,\mathcal{S}_R\setminus\{m\},\mathcal{S}_U}$ for all $m\in\mathcal{S}_R\setminus\{m'\}$ to recover its desired symbol $\eta_{K+m',\mathcal{S}_R\setminus\{m'\},\mathcal{S}_U}$. Thus, we shift our focus to the UEs. The received signal at $\text{UE}_k$ is specified by the equation
\begin{align}
\label{eq:Rx_sig_UEk_one_shot}y_{u,k}[t]=&\sum_{m\in\mathcal{S}_R}\eta_{K+m,\mathcal{S}_R\setminus\{m\},\mathcal{S}_U}\Big(g_k[t]\nu_{\eta_{K+m,\mathcal{S}_R\setminus\{m\},\mathcal{S}_U}}[t]+\sum_{\substack{m'\in\mathcal{S}_R\\m'\neq m}}h_{km'}[t]\beta^{(m')}_{\eta_{K+m,\mathcal{S}_R\setminus\{m\},\mathcal{S}_U}}[t]\Big)+\nonumber\\&+\sum_{p\in\mathcal{S}_U}\sum_{m'\in\mathcal{S}_R'}h_{km'}[t]\beta^{(m')}_{\eta_{p,\mathcal{S}_R',\mathcal{S}_U}}[t]\eta_{p,\mathcal{S}_R',\mathcal{S}_U}\\=&\sum_{m\in\mathcal{S}_R}\eta_{K+m,\mathcal{S}_R\setminus\{m\},\mathcal{S}_U}[t]\:\big(g_k[t],\mathbf{h}^{\dagger}_{k,\mathcal{S}_R\setminus\{m\}}[t]\big)\:\big(\nu_{\eta_{K+m,\mathcal{S}_R\setminus\{m\},\mathcal{S}_U}}[t],\boldsymbol{\beta}^{\dagger}_{\eta_{K+m,\mathcal{S}_R\setminus\{m\},\mathcal{S}_U}}[t]\big)^{\dagger}+\nonumber\\\label{eq:Rx_sig_UEk_vec_form_one_shot}&+\sum_{p\in\mathcal{S}_U}\mathbf{h}^{\dagger}_{k,\mathcal{S}_R'}[t]\boldsymbol{\beta}_{\eta_{p,\mathcal{S}_R',\mathcal{S}_U}}[t]\eta_{p,\mathcal{S}_R',\mathcal{S}_U}.
\end{align} This equation can be rewritten in a compact form by making the following definitions. 

First, we define the collection of \emph{channel coefficients} from RNs in the set $\mathcal{W}$ to $\text{UE}_k$ as the \emph{vector}
\begin{equation*}
\mathbf{h}_{k,\mathcal{W}}[t]\triangleq\{h_{km}[t]\}_{m\in\mathcal{W}},\quad\mathbf{h}_{k,\mathcal{W}}[t]\in\mathbb{C}^{|\mathcal{W}|}.
\end{equation*} Similarly, channel coefficients from RNs in the set $\mathcal{W}$ to UEs in the set $\mathcal{U}$ are denoted by the channel \emph{matrix}    
\begin{equation*}
\mathbf{H}_{\mathcal{U},\mathcal{W}}[t]\triangleq\{\mathbf{h}_{k,\mathcal{W}}^{\dagger}[t]\}_{k\in\mathcal{U}},\quad\mathbf{H}_{\mathcal{U},\mathcal{W}}[t]\in\mathbb{C}^{|\mathcal{U}|\times|\mathcal{W}|}.
\end{equation*} Second, we concatenate the precoders of the RNs for symbol $\eta_{n,\mathcal{T},\mathcal{U}}$ to the \emph{vector}
\begin{equation*}
\boldsymbol{\beta}^{\mathcal{B}}_{\eta_{n,\mathcal{T},\mathcal{U}}}[t]\triangleq\big\{\beta^{(m)}_{\eta_{n,\mathcal{T},\mathcal{U}}}[t]\big\}_{m\in\mathcal{B}},\quad\boldsymbol{\beta}^{\mathcal{B}}_{\eta_{n,\mathcal{T},\mathcal{U}}}[t]\in\mathbb{C}^{|\mathcal{B}|}.
\end{equation*} When $\mathcal{B}=\mathcal{T}$, we simply write $\boldsymbol{\beta}{\eta_{n,\mathcal{T},\mathcal{U}}}[t]$ instead of $\boldsymbol{\beta}^{\mathcal{B}}_{\eta_{n,\mathcal{T},\mathcal{U}}}[t]$. These definitions are also applicable to the channel coefficients $g_k[t]$ as well. They allow us to rewrite Eq. \eqref{eq:Rx_sig_UEk_one_shot} to \eqref{eq:Rx_sig_UEk_vec_form_one_shot}. Recall that $\text{UE}_k$ is only provided with its desired symbol $\eta_{k,\mathcal{S}_R',\mathcal{S}_U}$ as long as $k\in\mathcal{S}_U$. To this end, \emph{all} interferences in the concatenated vector $\mathbf{y}_{u,\mathcal{S}_U}[t]=\{y_{u,k}\}_{k\in\mathcal{S}_U}$ given by \begin{align}\mathbf{y}_{u,\mathcal{S}_U}[t]=&\underbrace{\sum_{m\in\mathcal{S}_R}\eta_{K+m,\mathcal{S}_R\setminus\{m\},\mathcal{S}_U}[t]\:\big(\mathbf{g}_{\mathcal{S}_U}[t],\mathbf{H}_{\mathcal{S}_U,\mathcal{S}_R\setminus\{m\}}[t]\big)\:\big(\nu_{\eta_{K+m,\mathcal{S}_R\setminus\{m\},\mathcal{S}_U}}[t],\boldsymbol{\beta}^{\dagger}_{\eta_{K+m,\mathcal{S}_R\setminus\{m\},\mathcal{S}_U}}[t]\big)^{\dagger}}_{\text{Interferences from DeNB}}+\nonumber\\\label{eq:Rx_sig_UEs_matvecform_one_shot}&+\underbrace{\sum_{p\in\mathcal{S}_U}\mathbf{H}_{\mathcal{S}_U,\mathcal{S}_R'}[t]\boldsymbol{\beta}_{\eta_{p,\mathcal{S}_R',\mathcal{S}_U}}[t]\eta_{p,\mathcal{S}_R',\mathcal{S}_U}}_{\substack{{\psi-1\text{ interferences and 1 desired }}\\\text{ component for UE}_k,k\in\mathcal{S}_U}}
\end{align} have to be zero-forced. This is equivalent to
\begin{equation}\label{eq:first_ZF_condition_one_shot_scheme}
\big(\mathbf{g}_{\mathcal{S}_U}[t],\mathbf{H}_{\mathcal{S}_U,\mathcal{S}_R\setminus\{m\}}[t]\big)\boldsymbol{\tilde{\beta}}_{\eta_{K+m,\mathcal{S}_R\setminus\{m\},\mathcal{S}_U}}[t]=\boldsymbol{0}_{|\mathcal{S}_U|},
\end{equation} $\forall m\in\mathcal{S}_R$ and
\begin{equation}
\label{eq:second_ZF_condition_one_shot_scheme}
\mathbf{H}_{\mathcal{S}_U\setminus\{p\},\mathcal{S}_R'}[t]\boldsymbol{\beta}_{\eta_{p,\mathcal{S}_R',\mathcal{S}_U}}[t]=\boldsymbol{0}_{|\mathcal{S}_U|-1},
\end{equation} $\forall p\in\mathcal{S}_U$. Note that we used \begin{align*}
\boldsymbol{\tilde{\beta}}_{\eta_{K+m,\mathcal{S}_R\setminus\{m\},\mathcal{S}_U}}[t]&=\big(\nu_{\eta_{K+m,\mathcal{S}_R\setminus\{m\},\mathcal{S}_U}}[t],\boldsymbol{\beta}^{\dagger}_{\eta_{K+m,\mathcal{S}_R\setminus\{m\},\mathcal{S}_U}}[t]\big)^{\dagger}
\end{align*} in Eq. \eqref{eq:first_ZF_condition_one_shot_scheme} for reasons of compactness. It is easy to see that
\begin{subequations}
	\begin{alignat}{2}
	\rank\Big(\big(\mathbf{g}_{\mathcal{S}_U}[t],\mathbf{H}_{\mathcal{S}_U,\mathcal{S}_R\setminus\{m\}}[t]\big)\Big)&=\psi,\\
	\rank\Big(\mathbf{H}_{\mathcal{S}_U\setminus\{p\},\mathcal{S}_R'}[t]\Big)&=\psi-1,
	\end{alignat}
\end{subequations}$\forall m\in\mathcal{S}_R,\forall p\in\mathcal{S}_U.$ The nullspace dimension for these two matrices thus become \begin{subequations}
\begin{alignat}{2}
|\mathcal{S}_R|-\psi&=1+(\mu M-K)^{+},\\
|\mathcal{S}_R'|-(\psi-1)&=1\text{ for }\psi\geq 2. 
\end{alignat}
\end{subequations} (When $\psi=1$, we note that there are no interference terms ($\psi-1=0$) in the second sum of Eq. \eqref{eq:Rx_sig_UEs_matvecform_one_shot} which makes zero-forcing for this component obsolete.) We choose the precoding vectors in Eqs. \eqref{eq:first_ZF_condition_one_shot_scheme} and \eqref{eq:second_ZF_condition_one_shot_scheme} such that   
\begin{subequations}
	\begin{alignat}{2}
	\boldsymbol{\tilde{\beta}}_{\eta_{K+m,\mathcal{S}_R\setminus\{m\},\mathcal{S}_U}}[t]&\in\mathcal{N}\Big(\big(\mathbf{g}_{\mathcal{S}_U}[t],\mathbf{H}_{\mathcal{S}_U,\mathcal{S}_R\setminus\{m\}}[t]\big)\Big),\\
	\boldsymbol{\beta}_{\eta_{p,\mathcal{S}_R',\mathcal{S}_U}}[t]&\in\mathcal{N}\Big(\mathbf{H}_{\mathcal{S}_U\setminus\{p\},\mathcal{S}_R'}[t]\Big), 
	\end{alignat}
\end{subequations} where $\mathcal{N}(\mathbf{A})$ denotes the (right) nullspace of $\mathbf{A}$. Consequently, all UEs in the subset $\mathcal{S}_U$ will be free from interference. We conclude that in a single channel use $1+\mu M$ RNs in $\mathcal{S}_R$ and $\psi$ UEs in $\mathcal{S}_U$ were able to decode their desired symbols through a combination of ZF and the MAN scheme.

Assume that we deploy this scheme for 
\begin{equation*}
T_1=\Gamma{{M}\choose{1+\mu M}}
\end{equation*} channel uses. The probability that $\text{RN}_m$ is served at the $t$-th channel use, $t\in[T_1]$, with its desired symbol $\eta_{K+m,\mathcal{S}_R\setminus\{m\},\mathcal{S}_U}$ is determined whether $m\in\mathcal{S}_R$. In $T_1$ channel uses this happens
$\nicefrac{{{M-1}\choose{\mu M}}}{{{M}\choose{1+\mu M}}}$-fraction\footnote{This fraction can be interpreted as the probability $p(m\in\mathcal{S}_R)$.} of the time $T_1$. Thus, $\text{RN}_m$, $\forall m\in[M]$, receives
\begin{equation*}
N_{\text{RN}}^{\bar{c}}=T_1\frac{{{M-1}\choose{\mu M}}}{{{M}\choose{1+\mu M}}}=\Gamma{{M-1}\choose{\mu M}}
\end{equation*} \emph{uncached} symbols $\eta_{K+m,\mathcal{S}_R\setminus\{m\},\mathcal{S}_U}$ of the requested file $W_{K+m}$. Recall that $\text{RN}_m$ has also
\begin{equation*}
N_{\text{RN}}^{c}=\Gamma{{M-1}\choose{\mu M-1}}
\end{equation*} symbols $\eta_{K+m,\mathcal{S}_R\setminus\{m'\},\mathcal{S}_U}$, $m'\in\mathcal{S}_R\setminus\{m\}$, of the requested file $W_{K+m}$ available it its cache. Using uncached and cached symbols, $\text{RN}_m$ can reconstruct file $W_{K+m}$ of size 
\begin{equation*}
|W_{K+m}|=N_{\text{RN}}^{\bar{c}}+N_{\text{RN}}^{c}=\Gamma{{M}\choose{\mu M}}
\end{equation*} symbols. With respect to the RNs, we conclude that $T_1$ channel uses suffice to allow them to retrieve their desired files $W_{K+m},\forall m\in[M]$. We now shift our focus to the $K$ UEs. In $\widetilde{T}_1\leq T_1$ channel uses, where \begin{equation*}
\widetilde{T}_1=\min\Bigg\{T_1,\Gamma\frac{K}{\psi}{{M}\choose{\mu M}}\Bigg\},
\end{equation*}		
we try to provide each UE with the \emph{same} number of desired symbols. To this end, $\text{UE}_k$ obtains its desired symbols $\eta_{k,\mathcal{S}_R',\mathcal{U}}$ $\nicefrac{{{K-1}\choose{\psi-1}}}{{{K}\choose{\psi}}}$-fraction of time duration $\widetilde{T}_1$. Hereby, $\text{RN}_m,\forall m\in[M]$, is used
\begin{equation*}
N_{\text{RN,Tx}}=\widetilde{T}_1\frac{{{M-1}\choose{\psi-1}}}{{{M}\choose{\psi}}}=\Gamma\min\Bigg\{\frac{\psi}{M}{{M}\choose{1+\mu M}},\frac{K}{M}{{M}\choose{\mu M}}\Bigg\}
\end{equation*} times for transmission\footnote{If $N_{\text{RN,Tx}}$ is not an integer, file symbols require further fragmentation and rate splitting ought to be applied.} of these symbols such that every UE receives
\begin{equation}
\label{eq:num_sym_UE_after_T1_cus}
N_{\text{UE}}=\widetilde{T}_1\frac{{{K-1}\choose{\psi-1}}}{{{K}\choose{\psi}}}=\min\Bigg\{{{M}\choose{\mu M+1}}{{K-1}\choose{\psi-1}},\Gamma{{M}\choose{\mu M}}\Bigg\}
\end{equation} symbols. Depending on whether
\begin{subequations}
	\label{eq:num_sym_UE_overall_comparison}
	\begin{equation}
	\label{eq:num_sym_UE_comparison}
	|W_k|-N_{\text{UE}}\geqq 0 ,
	\end{equation} or 
	\begin{equation}
	\frac{K}{\psi}=\frac{{{K}\choose{\psi}}}{{{K-1}\choose{\psi-1}}}\greatleq\frac{{{M}\choose{1+\mu M}}}{{{M}\choose{\mu M}}}=\frac{M-\mu M}{1+\mu M},
	\end{equation}
\end{subequations} $\widetilde{T}_1$ channel uses are sufficient or insufficient for the delivery of files $W_k$ to $\text{UE}_k,\forall k\in[K]$. We state the following conditions (see also Fig. \ref{fig:Region_plot}) with respect to the inequalities in \eqref{eq:num_sym_UE_overall_comparison}. 
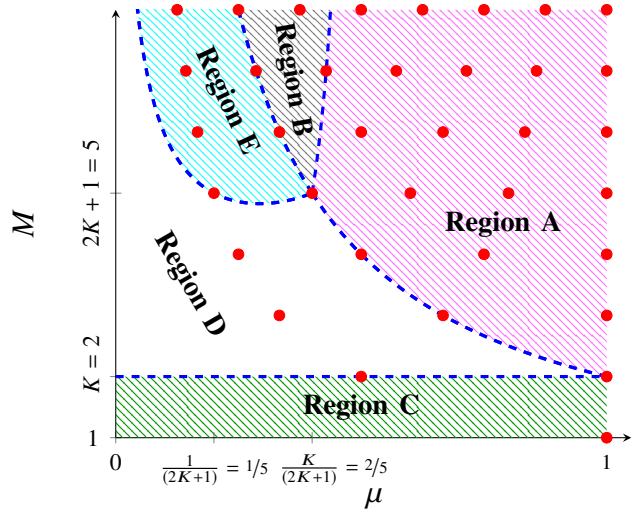
\begin{figure}
	\centering
	\begin{tikzpicture}[scale=1]
	\PlotMMuRegions
	\end{tikzpicture}
	\caption[Plot of all regions]{\small 2D $(\mu,M)$-plot of all regions described by inequalities \eqref{eq:condition_1_T1_insufficient}--\eqref{eq:condition_2_T1_sufficient} for $K=2$. Specifically, the regions described by inequalities \eqref{eq:condition_1_T1_insufficient},\eqref{eq:condition_1_T1_sufficient} and \eqref{eq:condition_2_T1_sufficient} are termed Region $A$, $B$ and $E$, respectively. The two inequalities in \eqref{eq:condition_2_T1_insufficient} are represented by Region $C$ and $D$. The discrete points illustrate the fractional cache sizes $\mu=\nicefrac{m}{M}$, $m\in[M]$ for an integer $M$.}
	\label{fig:Region_plot}
\end{figure} First, when $\psi=K$, $|W_k|>N_{\text{UE}}$ holds if
\begin{equation}\label{eq:condition_1_T1_insufficient}
\begin{cases}
K\leq\mu M<M<\nicefrac{1}{(1-2\mu)},\:\mu\leq\nicefrac{1}{2}\quad&(\text{Region A}_1)\\K\leq\mu M\leq M,\:M>\nicefrac{1}{(1-2\mu)},\:\mu>\nicefrac{1}{2}\quad&(\text{Region A}_2)
\end{cases},
\end{equation} while $|W_k|=N_{\text{UE}}$ is satisfied if
\begin{equation}\label{eq:condition_1_T1_sufficient}K\leq\mu M,\nicefrac{1}{(1-2\mu)}\leq M,\:\mu\leq\nicefrac{1}{2}\quad\text{(Region B)}.
\end{equation} We denote the union of regions A$_1$ and A$_2$ as A. Second, when $\psi=\mu M$, $|W_k|> N_{\text{UE}}$ is met if \begin{align}\label{eq:condition_2_T1_insufficient}
\begin{cases}
\mu M< M<K\quad&\text{(Region C)}\\ \max\Big\{\frac{\mu M^{2}(1-\mu)}{1+\mu M},\mu M\Big\}<K\leq M\quad&\text{(Region D)}
\end{cases},
\end{align} 
whereas on the other hand $|W_k|=N_{\text{UE}}$ if \begin{equation}\label{eq:condition_2_T1_sufficient}
\mu M<K\leq\frac{\mu M^{2}(1-\mu)}{1+\mu M}\leq M\quad\text{(Region E)}.
\end{equation} 
Thus, for cases where $|W_k|= N_{\text{UE}}$ (under conditions \eqref{eq:condition_1_T1_sufficient} or \eqref{eq:condition_2_T1_sufficient}), $T_1$ channel uses suffice to meet the demands of all $M$ RNs and $K$ UEs. The achievable NDT under theses cases results in the NDT of the MAN scheme given by
\begin{equation}\label{eq:sufficing_T1_achievable_NDT}
\delta_{\text{MAN}}(\mu)=M\cdot(1-\mu)\cdot\frac{1}{1+\mu M}
\end{equation} under conditions \eqref{eq:condition_1_T1_sufficient} or \eqref{eq:condition_2_T1_sufficient} at fractional cache sizes $\mu\in\{\nicefrac{1}{M},\nicefrac{2}{M},\ldots,\nicefrac{(M-1)}{M}\}$. 
For the cases where $|W_k|>N_{\text{UE}}$ (i.e., \eqref{eq:condition_1_T1_insufficient} or \eqref{eq:condition_2_T1_insufficient}), however, $T_2$ \emph{additional} channel uses are required to convey each $\text{UE}$ with its remaining $|W_k|-N_{\text{UE}}$ desired symbols. Hereby, in each channel use, $\psi'=\min\{K,1+\mu M\}$ UEs can be provided with their desired symbols through applying ZF beamforming for a $(\psi',K)$ MISO broadcast channel such that
\begin{equation}\label{eq:time_T2}
T_2=\frac{K\big(|W_k|-N_{\text{UE}}\big)}{\psi'}.
\end{equation} As opposed to the first block of $T_1$ channel uses, both RNs and DeNB are now involved in providing the UEs with their desired symbols. This is due to the fact that the delivery of the RNs requested files is terminated. The achievable NDT for the cases \eqref{eq:condition_1_T1_insufficient} and \eqref{eq:condition_2_T1_insufficient} are given by
\begin{equation*}
\delta(\mu)=\frac{T_1+T_2}{\Gamma{{M}\choose{\psi}}}.
\end{equation*} Simplification of this term leads to the following expression 
\begin{align}\label{eq:not_sufficing_T1_achievable_NDT}
\delta(\mu)=\begin{cases}
\delta_{\text{MISO-BC}}=1\qquad&\text{ for }\eqref{eq:condition_1_T1_insufficient}\\\frac{K+\delta_{\text{MAN}}(\mu)}{\psi'}\qquad&\text{ for }\eqref{eq:condition_2_T1_insufficient}
\end{cases},
\end{align} where $\delta_{\text{MAN}}(\mu)$ is the NDT expression given in Eq. \eqref{eq:sufficing_T1_achievable_NDT} and $\delta_{\text{MISO-BC}}$ is the NDT of the ($\psi,K$) MISO BC if $\psi=K$. Combining the two NDT expressions in Eqs. \eqref{eq:sufficing_T1_achievable_NDT} and \eqref{eq:not_sufficing_T1_achievable_NDT} to a single one generates the achievable one-shot NDT $\delta_{\text{OS}}(\mu)$ of Theorem \ref{th:one_shot_ach_NDT}.        
\section{NDT-Optimal Schemes for Special Instances: Integration of Subspace Alignment, Multicasting and Zero-Forcing}
\label{sec:ach_special}

In this section, we present our novel achievability schemes that combine the well-known schemes -- interference alignment, multicasting and zero-forcing -- for special instances of $M\in\{1,2,3\}$ and $K\in\{1,2,3\}$ that satisfy $K+M\leq 4$. For these special instances, these schemes allow us to fully characterize the optimal NDT cache-memory tradeoff for any $\mu\in[0,1]$. 

%Now, we move to the aforementioned special cases of the system, where $M\in\{1,2,3\}$, to provide a \emph{complete} characterization of the NDT-memory trade-off. 

We start with the case where $M=1$.  

\subsection{Achievability for $M=1$}\label{subsec:ach_M1}

The NDT-optimal scheme for $M=1$ and $K\leq 3$ is presented. Hereby, the following proposition quantifies the achievable NDT. 

\begin{proposition}\label{prop_M1}
	The achievable NDT of the network under study for $M=1$ RNs, $K\in\{1,2,3\}$ UEs and $\mu\in[0,1]$ is given by 
	\begin{equation}
	\label{eq:ach_NDT_M_1_K_123}
	\delta(\mu)=\begin{cases}K+1-\mu K\quad&\text{ if } K\leq 2\\\max\bigg\{K+1-\mu K,\frac{K+1-\mu}{2}\bigg\}\quad&\text{ if }K=3\end{cases}.
	\end{equation}
\end{proposition}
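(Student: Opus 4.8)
The plan is to build the piecewise‑linear curve \eqref{eq:ach_NDT_M_1_K_123} out of a few achievable operating points joined by \emph{memory sharing}, exploiting the convexity of the NDT in $\mu$ recalled after the definition of the NDT. The two endpoints are already handled by Lemma~\ref{corr_mu_0_and_1}: with $M=1$ it gives $\delta^\star(0)=K+1$, achievable by plain DeNB broadcast, and $\delta^\star(1)=\max\{1,\nicefrac{K}{2}\}$, achievable by joint DeNB--RN zero‑forcing; thus $\delta^\star(1)=1$ for $K\le 2$ and $\delta^\star(1)=\nicefrac32$ for $K=3$. So only the interior $\mu\in(0,1)$ requires work, and the shape of the target tells me how much.

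For $K\le 2$ nothing beyond the two corners is needed. By convexity, memory‑sharing the $\mu=0$ and $\mu=1$ schemes on disjoint $(1-\mu)$‑ and $\mu$‑fractions of every file achieves
\[
\delta(\mu)\le (1-\mu)\,\delta^\star(0)+\mu\,\delta^\star(1)=(1-\mu)(K+1)+\mu=K+1-\mu K,
\]
which is precisely the asserted value. (Optimality, not needed here, would follow from $\delta_{\text{LB}}(\mu,1,1)=K+1-\mu K$ in Theorem~\ref{theorem_lower_bound}, and one checks the $s=2$ bound $\tfrac{K+1-\mu}{2}$ never exceeds it for $K\le 2$.)

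For $K=3$ the two corners alone are \emph{not} enough: their convex combination $(1-\mu)\cdot4+\mu\cdot\nicefrac32=4-\tfrac{5\mu}{2}$ lies strictly above $\max\{4-3\mu,\tfrac{4-\mu}{2}\}$ on $(0,1)$. The extra ingredient is a single interior point at $\mu=\nicefrac45$ with NDT $\nicefrac85$, supplied by the subspace‑alignment/zero‑forcing construction of Section~\ref{sec:ach_special}. Granting this point, the formula assembles by two memory‑sharing steps: on $[0,\nicefrac45]$ combining $(0,4)$ and $(\nicefrac45,\nicefrac85)$ gives the line $4-3\mu$, and on $[\nicefrac45,1]$ combining $(\nicefrac45,\nicefrac85)$ and $(1,\nicefrac32)$ gives $\tfrac{4-\mu}{2}$; since $4-3\mu\ge\tfrac{4-\mu}{2}$ exactly when $\mu\le\nicefrac45$, their pointwise maximum reproduces \eqref{eq:ach_NDT_M_1_K_123}. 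The reason $\nicefrac45$ is forced is that the two lower bounds $\delta_{\text{LB}}(\mu,1,1)=4-3\mu$ and $\delta_{\text{LB}}(\mu,1,2)=\tfrac{4-\mu}{2}$ cross precisely there, so a genuinely new scheme is required at that corner.

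The main obstacle is therefore the explicit achievability of $\mu=\nicefrac45$. Here I would split each file into $L'=5$ symbols (the single RN caching $4$ of them per file, leaving $(1-\mu M)L'=1$ DeNB‑only symbol per file) and operate over finitely many channel uses. The one‑shot scheme of Theorem~\ref{th:one_shot_ach_NDT} is unavailable for $M=1$ — there is no multicasting on the lone DeNB--RN link and no spatial zero‑forcing on the single RN--UE antenna — so one must instead design beamformers jointly across time so that the DeNB‑only symbols of the three \emph{undesired} files collapse into a low‑dimensional interference subspace at each UE while that UE's five desired symbols stay resolvable, and the RN still recovers its one uncached symbol of $W_4$. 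The crux is the feasibility check: verifying the dimension count of the aligned subspaces and that generic precoders can simultaneously satisfy all the alignment and zero‑forcing constraints while keeping the desired directions independent of the interference. This is exactly what the three‑layer alignment construction of Section~\ref{sec:ach_special} is meant to certify.
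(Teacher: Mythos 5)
Your assembly of the curve is correct and is exactly the paper's: Lemma \ref{corr_mu_0_and_1} supplies the corner points $(0,K+1)$ and $\big(1,\max\{1,\nicefrac{K}{2}\}\big)$, memory sharing alone yields $K+1-\mu K$ for $K\le 2$, and for $K=3$ memory sharing among $(0,4)$, $\big(\frac{4}{5},\frac{8}{5}\big)$ and $\big(1,\frac{3}{2}\big)$ produces $\max\big\{4-3\mu,\frac{4-\mu}{2}\big\}$. Your line computations, the verification that plain corner-to-corner sharing is strictly suboptimal on $(0,1)$, and the observation that $\delta_{\text{LB}}(\mu,1,1)=4-3\mu$ and $\delta_{\text{LB}}(\mu,1,2)=\frac{4-\mu}{2}$ cross at $\mu=\frac{4}{5}$ are all accurate.

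The genuine gap is that you never prove the one point on which the entire $K=3$ branch rests: achievability of NDT $\frac{8}{5}$ at $\mu=\frac{4}{5}$. Invoking ``the three-layer alignment construction of Section \ref{sec:ach_special}'' is circular, because that construction (Subsection \ref{subsec:ach_M1}) \emph{is} the paper's proof of this very proposition --- it is precisely the content you were asked to supply. Moreover, your sketch of what such a scheme must do understates the task: it is not enough to align ``the DeNB-only symbols of the three undesired files.'' In the paper's scheme each file is split into $5$ symbols, the RN caches $\eta_{i,1},\ldots,\eta_{i,4}$ of every file, a ZF map nulls three \emph{cached} symbols of undesired files at each UE, and then the remaining \emph{eight} interfering symbols at UE$_k$ --- the three DeNB-only symbols $\eta_{4,5},\eta_{k+1,5},\eta_{k+2,5}$ \emph{and} five non-zero-forced cached symbols $\eta_{k+1,3},\eta_{k+1,4},\eta_{k+2,1},\eta_{k+2,2},\eta_{k+2,4}$ --- must be compressed into exactly $3$ of the $T=8$ signal dimensions, leaving $5$ dimensions for the desired file, while the RN must still resolve $\eta_{4,5}$ from the four symbols $\{\eta_{i,5}\}_{i=1}^{4}$ it hears after cancelling its cached side information. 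Establishing that precoders satisfying all these ZF and alignment constraints simultaneously exist (the paper constructs them explicitly along alignment chains, Eq. \eqref{eq:all_precoders_function_d5}, anchored at the choice \eqref{eq:precoder_d5}) and that the resulting $8$ observations per receiver remain linearly independent is the actual substance of the proof; without it, the $K=3$ case of the proposition is asserted, not proven.
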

In what follows the proof of this proposition is outlined. As shown in Fig. \ref{fig:NDT_M_K_1_LEQ_2}, in the case, where $K\leq 2$, the lower bound \alert{$\delta_{\text{LB}}(\mu,1,1)$} coincides with the achievable NDT. Lemma \ref{corr_mu_0_and_1} establishes the achievability for the corner points %(marked by circles in Fig. \ref{fig:NDT_M_K_1_LEQ_2}) 
at $\mu=0$ and $\mu=1$. Hereby, due to arguments of \emph{memory sharing}, \alert{intermediate} points at fractional cache sizes $0<\mu<1$ become achievable through successively time-sharing between unicasting and zero-forcing for $\mu$ and $(1-\mu)$ file fractions, respectively.

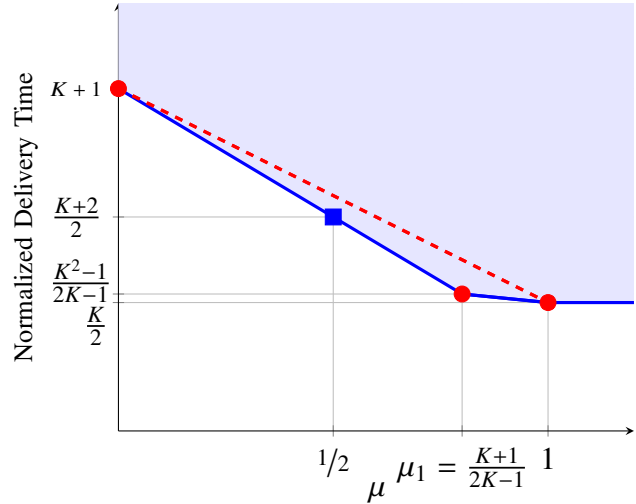
\begin{figure}
	\centering
	\begin{tikzpicture}[scale=1]
	\PlotNDTMOneKGreatTwo
	\end{tikzpicture}
	\caption[NDT lower bound for $M=1$ and $K\geq 3$]{\small The NDT lower bound for $M=1$ and $K\geq 3$ is shown by the solid line. For $K=3$, this line is in fact achievable. The dashed line shows the achievable NDT of a time-sharing based unicasting-zero-forcing scheme.} 
	\label{fig:NDT_M_K_1_3}
\end{figure}   

However, for $K=3$ such type of memory sharing scheme is in fact \emph{suboptimal}. \alert{This is indicated when comparing the achievable NDT of memory sharing between the zero-cache and full-cache scheme with the coinciding lower bound for $K=3$ illustrated in Fig. \ref{fig:NDT_M_K_1_3}. In fact,} we will show that \emph{simultaneously} combining subspace alignment with zero-forcing through appropriate precoder design allows us to show the achievability of the corner point $\Big(\frac{K+1}{2K-1},\frac{K^{2}-1}{2K-1}\Big)=\Big(\frac{4}{5},\frac{8}{5}\Big)$ for $K=3$. To this end, we describe, respectively, \begin{enumerate*}[label=(\roman*)]
	\item the \emph{RN cache placement} for $\mu=\nicefrac{4}{5}$,
	\item the \emph{encoding} at DeNB and RN ($M=1$),
	\item and finally the \emph{decoding} at RN and the three UEs ($K=3$).
\end{enumerate*} We begin with the cache placement.  
\vspace{.5em}
\subsubsection*{RN Cache Placement}

\begin{figure}
	\FilesKThree
	\caption[Requested files by $K=3$ users and $M=1$ RNs]{\small \alert{Requested files by $K=3$ users and $M=1$ RNs and the availability exclusively at the DeNB or both at the DeNB and the RN. The highlighted symbols are zero-forcing symbols, i.e., the (spatial) availability at both the DeNB and the RN is used to zero-force these symbols at one of the $K=3$ UEs.}}
	\label{fig:files_K_three}
\end{figure}

Assume without loss of generality \alert{one possible worst-case demand scenario for $N=4$. That is,} the UEs request files $W_1,W_2$ and $W_3$ while the RN is interested in file $W_4$. According to Fig. \ref{fig:files_K_three}, all files are \alert{broken into parts of $5$ symbols each.} \alert{For the sake of simplicity, it suffices to focus on single file parts. Thus, the remaining discussion focuses on $5$ symbols per file denoted by $\eta_{i,1},\eta_{i,2},\eta_{i,3},\eta_{i,4}$ and $\eta_{i,5}$ with respect to the $i$-th file. For notational simplicity, we stack these symbols to the vector $\boldsymbol{\eta}_i=(\eta_{i,1},\eta_{i,2},\ldots,\eta_{i,5})^{\dagger}$. The vector $\boldsymbol{\eta}$ on the other hand, concatenates all $\boldsymbol{\eta}_i,i=1,\ldots,4,$ to a single column vector, i.e., $\boldsymbol{\eta}=\big(\boldsymbol{\eta}^{\dagger}_{1},\boldsymbol{\eta}^{\dagger}_{2},\boldsymbol{\eta}^{\dagger}_{3},\boldsymbol{\eta}^{\dagger}_{4}\big)^{\dagger}$.} \emph{All} symbols \alert{in $\boldsymbol{\eta}$} are available at the DeNB. However, as far as the RN is concerned, only the first four symbols of all files are locally available in its cache. Since, the RN is interested in file $W_4$ and it knows $\eta_{4,1},\eta_{4,2},\eta_{4,3}$ and $\eta_{4,4}$ \alert{already}, the only missing symbol it desires is $\eta_{4,5}$. Thus, the transmission policy has to be designed \alert{such that \emph{all} symbols of files $W_1,W_2$ and $W_3$ as well as $\eta_{4,5}$ are conveyed jointly or exclusively by DeNB and/or RN.} These are in total $16$ information symbols. Next, we describe the encoding strategy for both DeNB and RN. 
\vspace{.5em}
\subsubsection*{Encoding at DeNB and RN}

\begin{figure}[h]
	\ZFKThree
	\caption[Zero-forcing map]{\small Map that assigns which symbol to zero-force at which UE.} 
	\label{fig:ZF_K_three}
\end{figure} 

The transmission strategy will exploit the correlation that arises between the availability of shared symbols at RN and DeNB by leveraging zero-forcing (ZF) opportunities while \emph{simultaneously} facilitating (subspace) interference alignment (IA) at the UEs. This is why our scheme (as shown in Fig. \ref{fig:files_K_three}) only zero-forces symbols $\eta_{1,1},\eta_{1,2},\eta_{1,3}$, $\eta_{2,1},\eta_{2,2},\eta_{2,3}$ and $\eta_{3,1},\eta_{3,2},\eta_{3,3}$. Symbols $\eta_{1,4},\eta_{2,4}$ and $\eta_{3,4}$ are \emph{not} zero-forced but are instead used to enable alignment\footnote{IA is facilitated by the fact that the DeNB does \emph{not} transmit these symbols (even though it knows them). Thus, effectively, the DeNB does not need to be aware of $\eta_{i,4},i\in[N]$. \alert{This is accounted for in Eq. \eqref{eq:some_precod_elements2} for $N=4$.}} amongst others with $\eta_{4,5}$ at the UEs. The map that assigns which symbol is zero-forced at which UE is given in Figure \ref{fig:ZF_K_three}. To this end, DeNB and RN form their transmit signals according to
\alert{
\begin{align}
\label{eq:tx_sig_DeNB}
x_s[t]&=\boldsymbol{\nu}^{\dagger}[t]\boldsymbol{\eta} \\
\label{eq:tx_sig_RN}
x_r[t]&=\boldsymbol{\beta}^{\dagger}[t]\boldsymbol{C}_{\text{RN}}\boldsymbol{\eta}
\end{align}}
%\begin{align}
%\label{eq:tx_sig_DeNB}
%x_s[t]&=\begin{bmatrix}
%1 & 1 & 1 & 0 & 1
%\end{bmatrix}\begin{bmatrix}
%\nu_{\eta_{1,1}}[t]\eta_{1,1} & \nu_{\eta_{1,2}}[t]\eta_{1,2} & \nu_{\eta_{1,3}}[t]\eta_{1,3} & \nu_{\eta_{1,4}}[t]\eta_{1,4} & \nu_{\eta_{1,5}}[t]\eta_{1,5} \\ \nu_{\eta_{2,1}}[t]\eta_{2,1} & \nu_{\eta_{2,2}}[t]\eta_{2,2} & \nu_{\eta_{2,3}}[t]\eta_{2,3} & \nu_{\eta_{2,4}}[t]\eta_{2,4} & \nu_{\eta_{2,5}}[t]\eta_{2,5} \\ \nu_{\eta_{3,1}}[t]\eta_{3,1} & \nu_{\eta_{3,2}}[t]\eta_{3,2} & \nu_{\eta_{3,3}}[t]\eta_{3,3} & \nu_{\eta_{3,4}}[t]\eta_{3,4} & \nu_{\eta_{3,5}}[t]\eta_{3,5} \\ 0 & 0 & 0 & 0 & \nu_{4,5}[t]\eta_{4,5}
%\end{bmatrix}^{\dagger}\begin{bmatrix}
%1 \\ 1 \\ 1 \\ 1
%\end{bmatrix} \\ \label{eq:tx_sig_RN}
%x_r[t]&=\begin{bmatrix}
%1 & 1 & 1 & 1 
%\end{bmatrix}\begin{bmatrix}
%\beta_{\eta_{1,1}}[t]\eta_{1,1} & \beta_{\eta_{1,2}}[t]\eta_{1,2} & \beta_{\eta_{1,3}}[t]\eta_{1,3} & \beta_{\eta_{1,4}}[t]\eta_{1,4} \\ \beta_{\eta_{2,1}}[t]\eta_{2,1} & \beta_{\eta_{2,2}}[t]\eta_{2,2} & \beta_{\eta_{2,3}}[t]\eta_{2,3} & \beta_{\eta_{2,4}}[t]\eta_{2,4} \\ \beta_{\eta_{3,1}}[t]\eta_{3,1} & \beta_{\eta_{3,2}}[t]\eta_{3,2} & \beta_{\eta_{3,3}}[t]\eta_{3,3} & \beta_{\eta_{3,4}}[t]\eta_{3,4} \\ 0 & 0 & 0 & 0
%\end{bmatrix}^{\dagger}\begin{bmatrix}
%1 \\ 1 \\ 1 \\ 1
%\end{bmatrix}
%\end{align}}
%\begin{align}
%\label{eq:tx_sig_DeNB}
%x_s[t]&=\sum_{i=1}^{3}\sum_{\substack{j=1,\\j\neq 4}}^{5}\nu_{\eta_{i,j}}[t]\eta_{i,j}+\nu_{\eta_{4,5}}[t]\eta_{4,5},\\
%\label{eq:tx_sig_RN}
%x_r[t]&=\sum_{i=1}^{3}\sum_{j=1}^{4}\beta_{\eta_{i,j}}[t]\eta_{i,j},
%\end{align} 
$\forall t\in[T]$ for $T=8$, respectively. \alert{In these two equations, the vectors $\boldsymbol{\nu}[t]$ and $\boldsymbol{\beta}[t]$ denote the precoding vectors at time instant $t$ for DeNB and RN with respect to all symbols in $\boldsymbol{\eta}$, whereas $\boldsymbol{C}_{\text{RN}}\in\mathbb{F}_2^{20\times 20}$ is a binary (caching) matrix accounting for the cache placement of the RN. The elements of the precoding vectors $\boldsymbol{\nu}[t]$ and $\boldsymbol{\beta}[t]$ comprise of complex precoding scalars $\nu_{\eta_{i,j}}[t]$ and $\beta_{\eta_{i,j}}[t]$ of symbol $\eta_{i,j}$. These elements are stacked to $\boldsymbol{\nu}_i[t]$ and $\boldsymbol{\beta}_i[t]$ which themselves are then concatenated to $\boldsymbol{\nu}[t]$ and $\boldsymbol{\beta}[t]$ in the exact same fashion as $\boldsymbol{\eta}_{i}$ in $\boldsymbol{\eta}$. In \eqref{eq:tx_sig_DeNB} and \eqref{eq:tx_sig_RN}, we have implicitly fixed the following elements of $\boldsymbol{\nu}[t]$ and $\boldsymbol{\beta}[t]$ to be
\begin{align}
\label{eq:some_precod_elements1}
\nu_{\eta_{4,j}}[t]=\beta_{\eta_{4,j}}[t]=0,\quad j=1,\ldots,4\\
\label{eq:some_precod_elements2}
\nu_{\eta_{i,4}}[t]=0,\quad i=1,\ldots,4
\end{align} $\forall t\in[T]$. Further, we fix the RN caching matrix $\boldsymbol{C}_{\text{RN}}$ in accordance to Fig. \ref{fig:files_K_three} and the definition of $\boldsymbol{\eta}$ to
\begin{align}
\label{eq:RN_caching_matrix}
\boldsymbol{C}_{\text{RN}}=
%\begin{pmatrix}
%\mathbf{I}_4 & & & & & & & \\[-0.25cm]
%& 0 & & & & & \bigzero &\\[-0.25cm]
%& & \mathbf{I}_4 & & & & & & \\[-0.25cm] 
%& & & 0 & & & & \\[-0.25cm]
%& & & & \mathbf{I}_4 & & & & \\[-0.25cm]
%& \bigzero & & & & 0 & & \\[-0.25cm]
%& & & & & & \mathbf{I}_4 & \\[-0.25cm]
%& & & & & & & 0  
%\end{pmatrix}=
\mathbf{I}_4\:\otimes\:\begin{pmatrix}
\mathbf{I}_4 & \mathbf{0} \\[-0.25cm] \mathbf{0}^{\dagger} & 0 
\end{pmatrix}.
\end{align} From Eq. \eqref{eq:some_precod_elements2}, we infer that the elements $\nu_{\eta_{i,4}}[t]$ are 0, i.e., irrelevant in $\boldsymbol{\nu}[t]$, which is equivalent to introduce an \emph{effective} caching matrix $\boldsymbol{C}_{\text{DeNB}}^{\text{eff}}$
\begin{align}
\label{eq:DeNB_eff_caching_matrix}
\boldsymbol{C}_{\text{DeNB}}^{\text{eff}}=\mathbf{I}_4\:\otimes\:\begin{pmatrix}
\mathbf{I}_3 & \mathbf{0} & \mathbf{0} \\[-0.25cm] \mathbf{0}^{\dagger} & 0 & 0 \\[-0.25cm] 
\mathbf{0}^{\dagger} & 0 & 1
\end{pmatrix}
\end{align} that ignores the availability of symbols $\eta_{i,4},i=1,\ldots,4,$ at the DeNB}.\footnote{\alert{From this argument, we infer that the \emph{effective} cache size of the DeNB is $\mu=\frac{4}{5}$.}} \alert{In consequence, we may rewrite \eqref{eq:tx_sig_DeNB} as follows
\begin{equation}
\label{eq:tx_sig_DeNB_eff}
x_s[t]=\boldsymbol{\nu}^{\dagger}[t]C_{\text{DeNB}}^{\text{eff}}\boldsymbol{\eta},
\end{equation} where we now implicitly assumed that $\nu_{\eta_{i,4}}[t]\neq 0$.} \alert{For ease of presentation, we first write the received signal at UE$_k$, $k\in[K]=[3]$ according to
\begin{align}
\label{eq:rx_sig_UE_k_M_1_K_3_vec_mac}
y_{u,k}[t]&=g_{k}[t]x_s[t]+h_{k1}[t]x_r[t]+z_{u,k}[t]\nonumber\\
&=\Big(g_{k}[t]\boldsymbol{\nu}^{\dagger}[t]C_{\text{DeNB}}^{\text{eff}}+h_{k1}[t]\boldsymbol{\beta}^{\dagger}[t]\boldsymbol{C}_{\text{RN}}\Big)\boldsymbol{\eta}+z_{u,k}[t]\nonumber\\
&=\sum_{i=1}^{3}\bigg[\Big(g_{k}[t]\boldsymbol{\nu}_{i,[1:3]}^{\dagger}[t]+h_{k1}[t]\boldsymbol{\beta}_{i,[1:3]}^{\dagger}[t]\Big)\mathbf{I}_3\boldsymbol{\eta}_{i,[1:3]}+h_{k1}[t]\beta_{\eta_{i,4}}[t]\eta_{i,4}\bigg]\nonumber\\&\qquad+\sum_{\ell=1}^{4}g_{k}[t]\nu_{\eta_{\ell,5}}[t]\eta_{\ell,5}+z_{u,k}[t].
%&=\begin{bmatrix}
%1 & 1 & 1 & 1 & 1
%\end{bmatrix}\nonumber\\&\quad\begin{bmatrix}
%e_{\eta_{1,1},k}^{\text{ZF}}[t]\eta_{1,1} & e_{\eta_{1,2},k}^{\text{ZF}}[t]\eta_{1,2} & e_{\eta_{1,3},k}^{\text{ZF}}[t]\eta_{1,3} & \beta_{\eta_{1,4}}[t]h_{k1}[t]\eta_{1,4} & \nu_{\eta_{1,5}}[t]g_{k}[t]\eta_{1,5} \\ e_{\eta_{2,1},k}^{\text{ZF}}[t]\eta_{2,1} & e_{\eta_{2,2},k}^{\text{ZF}}[t]\eta_{2,2} & e_{\eta_{2,3},k}^{\text{ZF}}[t]\eta_{2,3} & \beta_{\eta_{2,4}}[t]h_{k1}[t]\eta_{2,4} & \nu_{\eta_{2,5}}[t]g_{k}[t]\eta_{2,5} \\ e_{\eta_{3,1},k}^{\text{ZF}}[t]\eta_{3,1} & e_{\eta_{3,2},k}^{\text{ZF}}[t]\eta_{3,2} & e_{\eta_{3,3},k}^{\text{ZF}}[t]\eta_{3,3} & \beta_{\eta_{3,4}}[t]h_{k1}[t]\eta_{3,4} & \nu_{\eta_{3,5}}[t]g_{k}[t]\eta_{3,5} \\ 0 & 0 & 0 & 0 & \nu_{\eta_{1,5}}[t]g_{k}[t]\eta_{4,5}
%\end{bmatrix}^{\dagger}\begin{bmatrix}
%1 \\ 1 \\ 1 \\ 1
%\end{bmatrix},
\end{align} In \eqref{eq:rx_sig_UE_k_M_1_K_3_vec_mac}, all components for which $i,\ell\neq k$ represent interference. Further, we observe that the effective channel coefficient of the $j$-th ZF symbol of the $i$-th file at UE$_k$ corresponds to $e_{\eta_{i,j},k}^{\text{ZF}}[t]\triangleq g_{k}[t]\nu_{\eta_{i,j}}[t]+h_{k1}[t]\beta_{\eta_{i,j}}[t]$.} \alert{The vectors $\boldsymbol{\nu}[t]$ and $\boldsymbol{\beta}[t]$} are chosen such that both ZF and IA at the UEs become feasible. According to the ZF map of Fig. \ref{fig:ZF_K_three}, the ZF conditions at UE$_k$ %, $k\in[K]=[3]$, 
become 
%\begin{subequations}\label{eq:ZF_conditions}
%\begin{align}
%	&\nu_{\eta_{(k+1)\Mod{K},1}}[t]g_k[t]+\beta_{\eta_{(k+1)\Mod{K},1}}[t]h_{k1}[t]=0,\\
%	&\nu_{\eta_{(k+1)\Mod{K},2}}[t]g_k[t]+\beta_{\eta_{(k+1)\Mod{K},2}}[t]h_{k1}[t]=0,\\
%	&\nu_{\eta_{(k+2)\Mod{K},3}}[t]g_k[t]+\beta_{\eta_{(k+2)\Mod{K},3}}[t]h_{k1}[t]=0.
%	\end{align}
%\end{subequations} 
\alert{
\begin{subequations}\label{eq:ZF_conditions}
	\begin{align}
	&e_{\eta_{k+1,1},k}^{\text{ZF}}[t]\triangleq g_k[t]\nu_{\eta_{k+1,1}}[t]+h_{k1}[t]\beta_{\eta_{k+1,1}}[t]=0,\\
	&e_{\eta_{k+1,2},k}^{\text{ZF}}[t]\triangleq g_k[t]\nu_{\eta_{k+1,2}}[t]+h_{k1}[t]\beta_{\eta_{k+1,2}}[t]=0,\\
	&e_{\eta_{k+2,3},k}^{\text{ZF}}[t]\triangleq g_k[t]\nu_{\eta_{k+2,3}}[t]+h_{k1}[t]\beta_{\eta_{k+2,3}}[t]=0.
	\end{align}
\end{subequations} For the sake of compact notation, we have used modulo-$K$ indexing with respect to symbols $\eta_{k+1,1},\eta_{k+1,2}$ and $\eta_{k+2,1}$ in the formulation of the ZF conditions \eqref{eq:ZF_conditions}.} Simultaneously, we design the precoding scalars \alert{$\nu_{\eta_{i,j}}[t]$ and $\beta_{\eta_{i,j}}[t]$} such that the interference at each UE is aligned into a three-dimensional signal space. (The remaining $5$ dimensions are reserved for the $5$ symbols of the desired file.) The interference graph in Fig. \ref{fig:IA_Graph} shows which symbols align with each other at which UE. This graph consists of $3$ layers. In the first layer, two symbols, namely $\eta_{4,5}$ and $\eta_{1,4}$, $\eta_{2,4}$ or $\eta_{3,4}$ align at the three UEs. At layers two and three, on the other hand, three symbols align per UE. Symbols $\eta_{1,4},\eta_{2,4}$ and $\eta_{3,4}$ link layers $1$ and $2$, while $\eta_{1,5},\eta_{2,5}$ and $\eta_{3,5}$ connect layers $2$ and $3$. In analogy to the graph in Fig. \ref{fig:IA_Graph}, the alignment conditions at UE$_k$ can be written as 
\begin{figure*}[t]%[h]
	\centering
	\begin{tikzpicture}[scale=1]
	\AlignGraph
	\end{tikzpicture}
	\caption[Interference alignment graph for the achievability at corner point $(\frac{4}{5},\frac{8}{5})$ for $M=1$ and $K=3$]{\small Interference alignment graph for the achievability at corner point $(\frac{4}{5},\frac{8}{5})$ for $M=1$ and $K=3$. The graph consists of three (subspace) alignment chains. By definition, the first alignment chain is the path from the node $\eta_{4,5}$ to $\eta_{3,1}$, the second from $\eta_{4,5}$ to $\eta_{1,1}$ and the third from $\eta_{4,5}$ to $\eta_{2,1}$.} 
	\label{fig:IA_Graph}
\end{figure*}
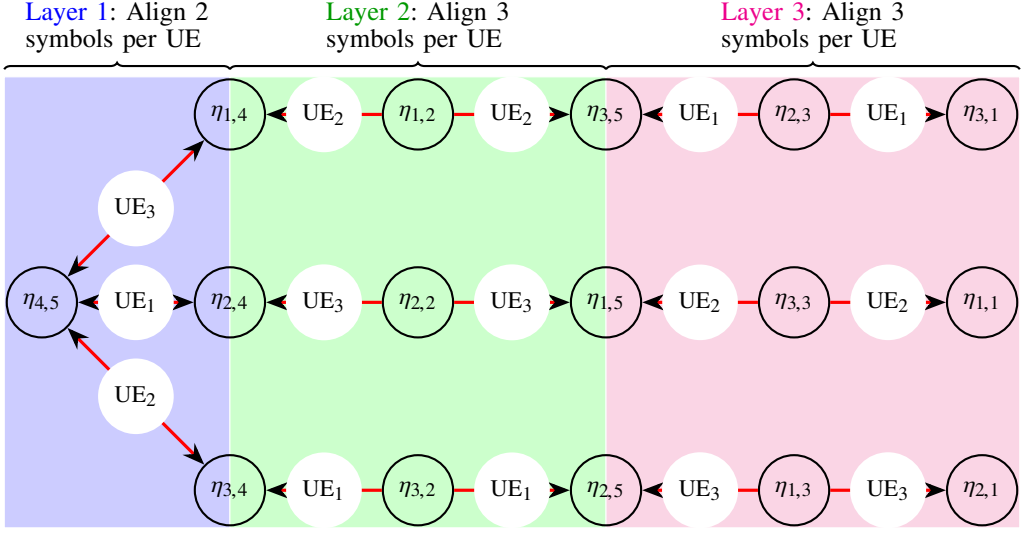 
%\begin{align}
%\label{eq:IA_condition1}
%\nu_{\eta_{4,5}}[t]g_k[t]&=\beta_{\eta_{(k+1)\Mod{K},4}}[t]h_{k1}[t]
%\end{align}
\alert{\begin{align}
\label{eq:IA_condition1}
g_k[t]\nu_{\eta_{4,5}}[t]&=h_{k1}[t]\beta_{k+1,4}[t]
\end{align}}
for Layer 1, 
%\begin{align}
%\label{eq:IA_condition2}
%\hspace{-0.75em}\beta_{\eta_{(k+2)\Mod{K},4}}[t]h_{k1}[t]=\beta_{\eta_{(k+2)\Mod{K},2}}[t]h_{k1}[t]+\nu_{\eta_{(k+2)\Mod{K},2}}[t]g_{k}[t]=\nu_{\eta_{(k+1)\Mod{K},5}}[t]g_{k}[t]
%\end{align}
\alert{\begin{align}
\label{eq:IA_condition2}
h_{k1}[t]\beta_{\eta_{k+2,4}}[t]=e_{\eta_{k+2,2},k}^{\text{ZF}}[t]=g_{k}[t]\nu_{\eta_{k+1,5}}[t]
\end{align}}for Layer 2, and 
%\begin{align}
%\label{eq:IA_condition3}
%&\nu_{\eta_{(k+2)\Mod{K},5}}[t]g_{k}[t]=\beta_{\eta_{(k+2)\Mod{K},1}}[t]h_{k1}[t]+\nu_{\eta_{(k+2)\Mod{K},1}}[t]g_{k}[t]\nonumber\\
%&=\beta_{\eta_{(k+1)\Mod{K},3}}[t]h_{k1}[t]+\nu_{\eta_{(k+1)\Mod{K},3}}[t]g_{k}[t]
%\end{align}    
\alert{\begin{align}
\label{eq:IA_condition3}
&g_{k}[t]\nu_{\eta_{k+2,5}}[t]=e_{\eta_{k+2,1},k}^{\text{ZF}}[t]=e_{\eta_{k+1,3},k}^{\text{ZF}}[t]
\end{align}} for Layer 3. Under the given ZF and IA conditions (cf.  \eqref{eq:ZF_conditions} and \eqref{eq:IA_condition1}--\eqref{eq:IA_condition3}), the precoders are functions of the channels $\mathbf{g}[t]$ and $\mathbf{H}[t]$. We fix the precoder for symbol $\eta_{4,5}$ to
\begin{equation}\label{eq:precoder_d5}
\nu_{\eta_{4,5}}[t]=j_{13}[t]j_{23}[t]j_{33}[t]g_1[t]g_2[t]g_3[t]h_{11}[t]h_{21}[t]h_{31}[t],
\end{equation} where
\begin{subequations}
	\begin{align}
	j_{13}[t]&=g_2[t]h_{31}[t]-g_{3}[t]h_{21}[t],\\
	j_{23}[t]&=g_3[t]h_{11}[t]-g_{1}[t]h_{31}[t],\\
	j_{33}[t]&=g_1[t]h_{21}[t]-g_{2}[t]h_{11}[t].
	\end{align}		
\end{subequations} The remaining precoders scalars depend on $\nu_{\eta_{4,5}}[t]$ and can be computed by using \eqref{eq:precoder_d5} in \eqref{eq:ZF_conditions} and  \eqref{eq:IA_condition1}--\eqref{eq:IA_condition3}. To this end, we compute the precoders along the $r$-th alignment chain \cite{Wang14}, $r\in[3]$, of the graph (the first chain in Fig. \ref{fig:IA_Graph} for instance being the entire path from node $\eta_{4,5}$ to $\eta_{3,1}$) as a function of $\nu_{\eta_{4,5}}[t]$. \alert{The resulting precoders under modulo-$K$ indexing then become:} 
%\begin{subequations}
%\label{eq:all_precoders_function_d5}
%	\begin{align}
%	\beta_{\eta_{r,4}}[t]&=\nu_{\eta_{4,5}}[t]\cdot\frac{g_{(r+2)\Mod{K}}[t]}{h_{(r+2)\Mod{K},1}[t]},\\\mathbf{p}_{\eta_{r,2}}[t]&\triangleq\begin{pmatrix}
%	\nu_{\eta_{r,2}}[t] \\\beta_{\eta_{r,2}}[t]
%	\end{pmatrix}\nonumber\\&=\nu_{\eta_{4,5}}[t]\cdot\frac{g_{(r+2)\Mod{K}}[t]h_{(r+1)\Mod{K},1}[t]}{j_{r,3}[t]}\begin{pmatrix}
%	1 \\ -\frac{g_{(r+2)\Mod{K}}[t]}{h_{(r+2)\Mod{K},1}[t]}
%	\end{pmatrix},\label{eq:concatenated_prec_1}\\
%	\nu_{\eta_{(r+2)\Mod{K},5}}[t]&=\nu_{\eta_{4,5}}[t]\cdot\frac{g_{(r+2)\Mod{K}}[t]h_{(r+1)\Mod{K},1}[t]}{h_{(r+2)\Mod{K},1}[t]g_{(r+1)\Mod{K}}[t]}, \\\mathbf{p}_{\eta_{(r+1)\Mod{K},3}}[t]&\triangleq\begin{pmatrix}
%	\nu_{\eta_{(r+1)\Mod{K},3}}[t] \\\beta_{\eta_{(r+1)\Mod{K},3}}[t]
%	\end{pmatrix}\nonumber\\&=\nu_{\eta_{4,5}}[t]\cdot\frac{g_{(r+2)\Mod{K}}[t]h_{(r+1)\Mod{K},1}[t]g_r[t]}{g_{(r+1)\Mod{K}}[t]j_{(r+1)\Mod{K},3}[t]}\begin{pmatrix}
%	-1 \\ \frac{g_{(r+2)\Mod{K}}[t]}{h_{(r+2)\Mod{K},1}[t]}
%	\end{pmatrix},\label{eq:concatenated_prec_2}\\\mathbf{p}_{\eta_{(r+2)\Mod{K},1}}[t]&\triangleq\begin{pmatrix}
%	\nu_{\eta_{(r+2)\Mod{K},1}}[t] \\\beta_{\eta_{(r+2)\Mod{K},1}}[t]
%	\end{pmatrix}\nonumber\\&=\nu_{\eta_{4,5}}[t]\cdot\frac{g_{(r+2)\Mod{K}}[t]h_{(r+1)\Mod{K},1}[t]g_r[t]}{h_{(r+2)\Mod{K},1}[t]j_{(r+2)\Mod{K},3}[t]}\begin{pmatrix}
%	\frac{h_{(r+1)\Mod{K},1}[t]}{g_{(r+1)\Mod{K}}[t]} \\ -1
%	\end{pmatrix}\label{eq:concatenated_prec_3}.
%	\end{align}		
%\end{subequations}
\alert{ 
\begin{subequations}
	\label{eq:all_precoders_function_d5}
	\begin{align}
	\beta_{\eta_{r,4}}[t]&=\nu_{\eta_{4,5}}[t]\cdot\frac{g_{r+2}[t]}{h_{r+2,1}[t]},\\\mathbf{p}_{\eta_{r,2}}[t]&\triangleq\begin{pmatrix}
	\nu_{\eta_{r,2}}[t] \\\beta_{\eta_{r,2}}[t]
	\end{pmatrix}=\nu_{\eta_{4,5}}[t]\cdot\frac{g_{r+2}[t]h_{r+1,1}[t]}{j_{r,3}[t]}\begin{pmatrix}
	1 \\ -\frac{g_{r+2}[t]}{h_{r+2,1}[t]}
	\end{pmatrix},\label{eq:concatenated_prec_1}\\
	\nu_{\eta_{r+2,5}}[t]&=\nu_{\eta_{4,5}}[t]\cdot\frac{g_{r+2}[t]h_{r+1,1}[t]}{h_{r+2,1}[t]g_{r+1}[t]}, \\\mathbf{p}_{\eta_{r+1,3}}[t]&\triangleq\begin{pmatrix}
	\nu_{\eta_{r+1,3}}[t] \\\beta_{\eta_{r+1,3}}[t]
	\end{pmatrix}=\nu_{\eta_{4,5}}[t]\cdot\frac{g_{r+2}[t]h_{r+1,1}[t]g_r[t]}{g_{r+1}[t]j_{r+1,3}[t]}\begin{pmatrix}
	-1 \\ \frac{g_{r+2}[t]}{h_{r+2,1}[t]}
	\end{pmatrix},\label{eq:concatenated_prec_2}\\\mathbf{p}_{\eta_{r+2,1}}[t]&\triangleq\begin{pmatrix}
	\nu_{\eta_{r+2,1}}[t] \\\beta_{\eta_{r+2,1}}[t]
	\end{pmatrix}=\nu_{\eta_{4,5}}[t]\cdot\frac{g_{r+2}[t]h_{r+1,1}[t]g_r[t]}{h_{r+2,1}[t]j_{r+2,3}[t]}\begin{pmatrix}
	\frac{h_{r+1,1}[t]}{g_{r+1}[t]} \\ -1
	\end{pmatrix}\label{eq:concatenated_prec_3}.
	\end{align}		
\end{subequations}}
Taking a closer look at the \emph{concatenated} precoding vectors in \eqref{eq:concatenated_prec_1}, \eqref{eq:concatenated_prec_2} and \eqref{eq:concatenated_prec_3}, we see that these vectors are orthogonal to 
%\begin{subequations}\label{eq:orthogonal_vectors}
%\begin{align}
%\mathbf{p}^{\perp}_{\eta_{(r+1)\Mod{K},3}}[t]=\mathbf{p}^{\perp}_{\eta_{r,2}}[t]=\mathbf{\tilde{h}}_{(r+2)\Mod{K}}[t]&\triangleq\begin{pmatrix}
%g_{(r+2)\Mod{K}}[t] \\ h_{(r+2)\Mod{K},1}[t]
%\end{pmatrix},\\\mathbf{p}^{\perp}_{\eta_{(r+2)\Mod{K},1}}[t]=\mathbf{\tilde{h}}_{(r+1)\Mod{K}}[t]&\triangleq\begin{pmatrix}
%g_{(r+1)\Mod{K}}[t] \\ h_{(r+1)\Mod{K},1}[t]
%\end{pmatrix}.
%\end{align}
%\end{subequations} 
\alert{
\begin{subequations}\label{eq:orthogonal_vectors}
	\begin{align}
	\mathbf{p}^{\perp}_{\eta_{r+1,3}}[t]=\mathbf{p}^{\perp}_{\eta_{r,2}}[t]=\mathbf{\tilde{h}}_{r+2}[t]&\triangleq\begin{pmatrix}
	g_{r+2}[t] \\ h_{r+2,1}[t]
	\end{pmatrix},\\\mathbf{p}^{\perp}_{\eta_{r+2,1}}[t]=\mathbf{\tilde{h}}_{r+1}[t]&\triangleq\begin{pmatrix}
	g_{r+1}[t] \\ h_{r+1,1}[t]
	\end{pmatrix}.
	\end{align}
\end{subequations}} 
This observation is in agreement with the ZF conditions \eqref{eq:ZF_conditions}. \alert{We may check that the alignment conditions \eqref{eq:IA_condition1}--\eqref{eq:IA_condition3} are also satisfied under the choice of the precoders.} Now we will go through the decoding from the perspective of both the RN and the UEs. 

\subsubsection*{Decoding at RN and the UEs}
\alert{Exploiting both ZF and IA conditions in \eqref{eq:rx_sig_UE_k_M_1_K_3_vec_mac}, we can write the received signal at the $k$-th UE as follows:
\begin{align}
\label{eq:rx_sig_UE_k_M_1_K_3_compact}
y_{u,k}[t]&=D_k(\boldsymbol{\eta}_{k,[1:5]})\nonumber\\&\qquad+I_k(\eta_{4,5}+\eta_{k+1,4},\eta_{k+2,4}+\eta_{k+2,2}+\eta_{k+1,5},\eta_{k+2,5}+\eta_{k+1,3}+\eta_{k+2,1})+z_{u,k}[t],
\end{align} where $D_k$ and $I_k$ are linear combinations of desired symbols $\boldsymbol{\eta}_{k,[1:5]}$ and aligned interference symbols of all three layers at UE$_k$, respectively. These two linear combinations are given by
\begin{align}
D_k(\boldsymbol{\eta}_{k,[1:5]})=\Big(g_{k}[t]\boldsymbol{\nu}_{k,[1:3]}^{\dagger}[t]+h_{k1}[t]\boldsymbol{\beta}_{k,[1:3]}^{\dagger}[t]\Big)\mathbf{I}_3\boldsymbol{\eta}_{k,[1:3]}+h_{k1}[t]\beta_{\eta_{k,4}}[t]\eta_{k,4}+g_{k}[t]\nu_{\eta_{k,5}}[t]\eta_{k,5}
\end{align} and   
\begin{align}
I_k&=e_{\eta_{k+1,3},k}^{\text{ZF}}[t](\eta_{k+2,5}+\eta_{k+1,3}+\eta_{k+2,1})+e_{\eta_{k+2,2},k}^{\text{ZF}}[t](\eta_{k+2,4}+\eta_{k+2,2}+\eta_{k+1,5})\nonumber\\&\qquad+g_{k}[t]\nu_{\eta_{4,5}}[t](\eta_{4,5}+\eta_{k+1,4}). 
\end{align}} 
At the RN, on the other hand, the knowledge of $\eta_{i,j},i,j\in[3],$ as side information prefetched in its cache is exploited to cancel the contribution of these components. Thus, at the $t$-th channel use, the RN observes 
\begin{align}\label{eq:rx_sig_RN_M1_K3}
y_{r,1}'[t]=&f_{1}[t]\bigg(\nu_{\eta_{4,5}}[t]\eta_{4,5}+\sum_{i=1}^{3}\nu_{\eta_{i,5}}[t]\eta_{i,5}\bigg)+z_{r,1}[t].
\end{align} Recall that the scheme spans over $T=8$ channel uses. Thus, due to the time-variant nature of the wireless channel, the UEs and the RN have $8$ (noise-corrupted) linear \emph{independent} observations $\{y_{u,k}[t]\}_{t=1}^{8}$ and $\{y_{r,1}'[t]\}_{t=1}^{8}$ \alert{(in the field of reals $\mathbb{R}$ or complex $\mathbb{C}$)} according to \eqref{eq:rx_sig_UE_k_M_1_K_3_compact} and \eqref{eq:rx_sig_RN_M1_K3}, respectively. In consequence, UE$_k$, $k\in[3]$, on the one hand, is able to decode its 5 \emph{desired} symbols 
\begin{itemize}
%	\item $\eta_{(\tilde{r}_k+2)\Mod{K},1}$, $\eta_{\bar{r}_k,2}$, $\eta_{(r_k'+1)\Mod{K},3}$, $\eta_{\bar{r}_k,4}$, $\eta_{(\tilde{r}_k+2)\Mod{K},5}$
\item $\eta_{k,1}$, $\eta_{k,2}$, $\eta_{k,3}$, $\eta_{k,4}$, $\eta_{k,5}$
\end{itemize} and 3 \emph{aligned} interfering symbols
\alert{\begin{itemize}
%	\item $\eta_{\tilde{r}_k,4}+\eta_{4,5}$,
%	\item $\eta_{r_k',4}+\eta_{r_k',2}+\eta_{(r_k'+2)\Mod{K},5}$
%	\item and $\eta_{(\bar{r}_k+2)\Mod{K},5}+\eta_{(\bar{r}_k+1)\Mod{K},3}+\eta_{(\bar{r}_k+2)\Mod{K},1}$,
	\item $\eta_{4,5}+\eta_{k+1,4}$,
	\item $\eta_{k+2,4}+\eta_{k+2,2}+\eta_{k+1,5}$
	\item and $\eta_{k+2,5}+\eta_{k+1,3}+\eta_{k+2,1}$,
\end{itemize} while the RN, on the other hand, decodes its desired symbol $\eta_{4,5}$ and 3 interfering (but not aligned) symbols $\eta_{1,5}$, $\eta_{2,5}$ and $\eta_{3,5}$.} 
% This is due to the fact that, the time-variant nature of the channel allows both the RN and the UEs to receive $8$ (noise-corrupted) independent observations of its received signal components. 
In fact, as far as the RN is concerned, it actually only requires $4$ channel uses to allow for the decoding of the aforementioned symbols. The UEs are the reason why the scheme spans over $8$, and not $4$, channel uses. Consequently, the achievable NDT becomes $\frac{8}{5}$. Next, we present the NDT-optimal schemes for which $M=2$ and $K\leq 2$. 

\subsection{Achievability for $M=2$}\label{subsec:ach_M2}        

In the following, we will outline the delivery time optimal schemes for $M=2$ and $K\leq 2$. The following proposition states the achievable NDT.

\begin{proposition}\label{prop_M2}
	The achievable NDT of the network under study for $M=2$ RNs, \alert{$K\in\{1,2\}$} UEs and $\mu\in[0,1]$ is given by 
	\begin{equation}
	\label{eq:ach_NDT_M_2_K_12}
	\delta(\mu)=\begin{cases}\max\Big\{1,3-4\mu\Big\}\quad&\text{ if } K= 1\\\max\bigg\{4-6\mu,\frac{4-3\mu}{2},\frac{3-\mu}{2}\bigg\}\quad&\text{ if }K=2\end{cases}.
	\end{equation}
\end{proposition}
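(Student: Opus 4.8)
The plan is to exploit the convexity of the NDT in $\mu$ (established earlier via memory sharing): since both branches of \eqref{eq:ach_NDT_M_2_K_12} are piecewise-linear and convex, it suffices to achieve their corner points and then connect the intervening segments by memory sharing. For $K=1$ the corner points are $(\mu,\delta)=(0,3)$ and $(1/2,1)$; for $K=2$ they are $(0,4)$, $(4/9,4/3)$, $(1/2,5/4)$ and $(1,1)$. I would first check directly that the affine segments joining consecutive corner points reproduce $3-4\mu$ (for $K=1$) and the three pieces $4-6\mu$, $(4-3\mu)/2$, $(3-\mu)/2$ (for $K=2$); e.g.\ the line through $(0,4)$ and $(4/9,4/3)$ has slope $-6$, and the line through $(4/9,4/3)$ and $(1/2,5/4)$ has slope $-3/2$, matching the stated pieces. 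Hence establishing achievability at the corner points is enough.

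The first, routine step dispatches the corner points already covered by earlier results. At $\mu=0$, DeNB broadcasting achieves $K+M$ (i.e.\ $3$ and $4$); at $\mu=1$, cooperative DeNB--RN zero-forcing achieves $\max\{1,K/(M+1)\}=1$; both follow from Lemma~\ref{corr_mu_0_and_1}. At $\mu=1/M=1/2$ the one-shot scheme of Theorem~\ref{th:one_shot_ach_NDT} applies: with $M=2$, $\mu=1/2$ we have $\delta_{\text{MAN}}(1/2)=1/2$ and $\mu M=1$, so \eqref{eq:NDT_OS_ach} gives $\delta_{\text{OS}}=1$ for $K=1$ and $\delta_{\text{OS}}=(2+\tfrac12)/2=5/4$ for $K=2$ (cf.\ Figs.~\ref{fig:NDT_M_K_2_1} and \ref{fig:NDT_M_K_2_2}). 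For $K=1$ this already finishes the proof: memory sharing between $(0,3)$ and $(1/2,1)$ yields $3-4\mu$ on $[0,1/2]$, and since the NDT is floored at $1$ and equals $1$ at both $\mu=1/2$ and $\mu=1$, it remains $1$ on $[1/2,1]$.

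The crux is the single corner point $(\mu,\delta)=(4/9,4/3)$ for $(K,M)=(2,2)$, which lies at $\mu<1/M$ and is therefore beyond the reach of both the one-shot scheme and pure memory sharing of the $\mu\in\{0,1/2,1\}$ schemes. Here I would construct a dedicated scheme that interlaces multicasting, zero-forcing and subspace interference alignment, in direct analogy with the $(K,M)=(3,1)$ construction of Subsection~\ref{subsec:ach_M1}. Concretely: split each of the $N\geq 4$ files into $L'=9$ symbols, let $\text{RN}_1$ cache symbols $1$--$4$ and $\text{RN}_2$ cache symbols $5$--$8$ of every file (so $\mu=4/9$), leaving exactly one symbol per file available only at the DeNB; then design the DeNB and RN precoders so that the symbols whose shared DeNB--RN availability permits it are zero-forced at the non-intended UE (via a ZF map analogous to Fig.~\ref{fig:ZF_K_three}), while every residual interferer is collapsed, through alignment chains as in Fig.~\ref{fig:IA_Graph}, into a subspace of dimension $(K+M-1)(1-\mu M)L'=3$. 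Over $T=L'+3=12$ channel uses each UE then collects $12$ linearly independent observations of its $9$ desired symbols plus a $3$-dimensional aligned interference term, while each RN (exploiting its cached symbols to cancel the components it already stores, cf.\ Fig.~\ref{fig:files_K_three}) recovers its uncached symbols in fewer channel uses; the UEs are thus the bottleneck and $\delta=T/L'=12/9=4/3$. Memory sharing between the four corner points then reconstructs the full tradeoff \eqref{eq:ach_NDT_M_2_K_12}.

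The main obstacle is this last construction. I must choose the precoders so that the zero-forcing constraints (one per ZF-symbol/UE pair) and the alignment constraints (forcing groups of interferers onto common directions) are \emph{simultaneously} satisfiable, and then show that the induced $12\times 12$ per-UE observation matrix is generically full rank when the channel coefficients are drawn i.i.d.\ from a continuous distribution, so that decoding succeeds almost surely. As in Subsection~\ref{subsec:ach_M1}, I expect to settle feasibility by fixing one free precoder and propagating the ZF and alignment relations along the alignment chains of the interference graph, after which the full-rank (decodability) property follows from a standard nonvanishing-determinant argument exploiting the continuity of the channel distribution.
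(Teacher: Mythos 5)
Your reduction to corner points and your handling of $\mu\in\{0,\tfrac12,1\}$ are exactly the paper's proof, and your plan for the crux corner point $\bigl(\tfrac49,\tfrac43\bigr)$ --- nine symbols per file, RN$_1$ caching symbols $1$--$4$, RN$_2$ caching symbols $5$--$8$, a ZF map plus alignment chains collapsing interference into $(K+M-1)(1-\mu M)L'=3$ dimensions, $T=12$ channel uses --- is also the paper's construction in outline. However, your sketch omits the one design constraint on which that construction actually hinges, and the assumption you state in its place is false: the RNs do \emph{not} recover their symbols ``in fewer channel uses,'' and the UEs are not the only bottleneck. Since an RN hears only the DeNB ($y_{r,m}=f_m x_s+z_{r,m}$), every symbol the DeNB transmits that RN$_m$ has not cached is an unknown RN$_m$ must resolve from its $T$ observations. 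If you exploit \emph{all} ZF opportunities --- i.e., transmit every DeNB--RN shared symbol jointly from the DeNB and the corresponding RN, as your ``ZF map'' phrasing suggests --- then the DeNB transmits $\boldsymbol{\eta}_{1,[1:4]}$ and $\boldsymbol{\eta}_{2,[1:4]}$, and RN$_2$ (which caches only symbols $5$--$8$) faces $5$ desired plus $11$ interfering $=16$ unknowns against only $12$ observations; the scheme would then need $T=16$ and achieve only $16/9>4/3$.

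The paper resolves this by deliberately \emph{forgoing} the ZF opportunities of $\boldsymbol{\eta}_{i,[3:4]}$ and $\boldsymbol{\eta}_{i,[7:8]}$, $i=1,2$: these symbols are transmitted only by the RN that caches them and never by the DeNB, so they are invisible at the other RN. With this routing each RN has exactly $5$ desired plus $7$ interfering $=12$ unknowns, so both RNs and UEs need all $T=12$ uses and the counting closes. Your feasibility step must therefore include, alongside the per-UE ZF/alignment conditions and the generic-rank argument, a routing constraint bounding the number of DeNB-transmitted symbols uncached at each RN; without it, the system of conditions you propose to propagate along the alignment chains is not satisfiable within $12$ channel uses --- it fails on the RN side, not the UE side where you anticipated the difficulty.
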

Now, we will present the proof of this proposition. Fig. \ref{fig:NDT_M_K_2_LEQ_2} shows that the achievable NDT in Proposition \ref{prop_M2} coincides with the lower bound. Due to Lemma \ref{corr_mu_0_and_1} and arguments of memory sharing, we will only establish the achievability of the corner point(s)
%\begin{itemize}
%	\item $\Big(\frac{1}{2},1\Big)$ 
%\end{itemize} 
$\Big(\frac{1}{2},1\Big)$
for $K=1$ and
%\begin{itemize}
%	\item $\Big(\frac{4}{9},\frac{4}{3}\Big)$, 
%	\item $\Big(\frac{1}{2},\frac{5}{4}\Big)$ 
%\end{itemize} 
$\Big(\frac{4}{9},\frac{4}{3}\Big)$, $\Big(\frac{1}{2},\frac{5}{4}\Big)$ for $K=2$, respectively. In this context, the NDT-optimal schemes for corner points $\Big(\frac{1}{2},1\Big)$ and $\Big(\frac{1}{2},\frac{5}{4}\Big)$ are based on the \emph{one-shot}\footnote{One-shot in this context is equivalent to requiring a \emph{single} channel use for conveying desired symbols to the receiving nodes.} reception of desired symbols, whereas at corner point $\Big(\frac{4}{9},\frac{4}{3}\Big)$ symbol decoding over multiple channel uses (similarly to the achievability for $(K,M)=(3,1)$ in subsection \ref{subsec:ach_M1}) is required. We start the achievability for the latter corner point $\Big(\frac{4}{9},\frac{4}{3}\Big)$.    

\subsubsection*{RNs Cache Placement}

\begin{figure}
	\FilesMTwoKThree
	\caption[Requested files by $K=2$ users and $M=2$ RNs]{\small Requested files by $K=2$ users and $M=2$ RNs and the availability illustrated by the symbols transmitted from the DeNB only or from both at the DeNB and one of the two RNs RN$_1$ and RN$_2$.}
	\label{fig:files_M_two_K_two}
\end{figure}

Again, we assume without loss of generality $N=4$ and \alert{one possible worst-case demand scenario in which} the UEs request files $W_1,W_2$ while RN$_1$ and RN$_2$ are interested in files $W_3$ and $W_4$, respectively. As shown in Fig. \ref{fig:files_M_two_K_two}, all files are comprised of $9$ symbols, i.e., the $i$-th file is composed of symbols \alert{$\boldsymbol{\eta}_i=(\eta_{i,1},\eta_{i,2},\ldots,\eta_{i,9})^{\dagger}$. Similarly to the previously described scheme, $\boldsymbol{\eta}=(\boldsymbol{\eta}_1^{\dagger},\boldsymbol{\eta}_2^{\dagger},\boldsymbol{\eta}_3^{\dagger},\boldsymbol{\eta}_4^{\dagger})^{\dagger}$ represent the column-wise concatenation of all $N=4$ files.} \emph{All} symbols \alert{in $\boldsymbol{\eta}$} are available at the DeNB. In contrast, RN$_1$ and RN$_2$ prefetch symbols $\eta_{i,j},j=1,\ldots,4$ and $\eta_{i,j},j=5,\ldots,8$, respectively in its cache\footnote{We note that at $\mu=\nicefrac{4}{9}$, the joint cache content of the two RNs cannot contain the entire library of popular files.}. \alert{For the sake of compact notation, these cached symbols are denoted by $\boldsymbol{\eta}_{i,[1:4]}$ and $\boldsymbol{\eta}_{i,[5:8]}$.} Since, RN$_1$ (RN$_2$) is, interested in file $W_3$ ($W_4$) and it knows a-priori 
%$\eta_{3,j},j=1,\ldots,4$ ($\eta_{4,l},l=5,\ldots,8$)
\alert{$\boldsymbol{\eta}_{3,[1:4]}$ ($\boldsymbol{\eta}_{4,[5:8]}$), the missing symbols RN$_1$ (RN$_2$) desires are 
% $\eta_{3,5},\eta_{3,6},\ldots,\eta_{3,9}$ ($\eta_{4,1},\ldots,\eta_{4,4}$ and $\eta_{4,9}$). 
$\boldsymbol{\eta}_{3,[5:9]}$ ($\boldsymbol{\eta}_{4,[1:4]}$ and $\eta_{4,9}$).} In consequence, we design the transmission policy such that DeNB, RN$_1$ and RN$_2$ convey all receiving nodes with their desired symbols of files $W_1$ and $W_2$ as well as % $\eta_{3,5},\eta_{3,6},\ldots,\eta_{3,9},\eta_{4,1},\ldots,\eta_{4,4}$ and $\eta_{4,9}$. 
\alert{$\boldsymbol{\eta}_{3,[5:9]}$, $\boldsymbol{\eta}_{4,[1:4]}$ and $\eta_{4,9}$.} These are in total $28$ information symbols. Next, we consider the encoding strategy for DeNB and RNs. 
\vspace{.5em}
\subsubsection*{Encoding at DeNB and RNs} 

\begin{figure}[h]
	\ZFMTwoKTwo
	\caption[Zero-forcing map]{\small Map that assigns which symbol to zero-force at which UE.} 
	\label{fig:ZF_Mtwo_Ktwo}
\end{figure} 

Similarly to the previously described scheme, we use the spatial correlation of shared symbols between RN$_1$ or RN$_2$ with the DeNB to facilitate ZF and IA at the UEs. However, leveraging all ZF opportunities at the UEs has one main drawback from the RNs perspective -- namely, it maximizes the number of interfering symbols imposed on RN$_1$ and RN$_2$. To limit this number, we do \emph{not} send symbols 
%$\eta_{i,3},\eta_{i,4},\eta_{i,7}$ and $\eta_{i,8}$, 
\alert{$\boldsymbol{\eta}_{i,[3:4]}$ and $\boldsymbol{\eta}_{i,[7:8]}$,} $i=1,2$, through the DeNB (at the cost of missing these symbols' ZF opportunities) but rather through RN$_1$ or RN$_2$ directly to the UEs. All in all, this constitutes the transmit signals  
%\begin{align}
%\label{eq:tx_sig_DeNB_M2_K2}
%x_s[t]&=\sum_{i_1=1}^{2}\sum_{\substack{j_1=1,\\j_1\neq \{3,4,7,8\}}}^{8}\nu_{\eta_{i_1,j_1}}[t]\eta_{i_1,j_1}+\sum_{j_2=5}^{8}\nu_{\eta_{3,j_2}}[t]\eta_{3,j_2}+\sum_{j_3=1}^{4}\nu_{\eta_{4,j_3}}[t]\eta_{4,j_3}+\sum_{i_2=1}^{4}\nu_{\eta_{i_2,9}}[t]\eta_{i_2,9},\\
%\label{eq:tx_sig_RN1_M2_K2}
%x_{r,m}[t]&=\sum_{\substack{i=1,\\i\neq m+2}}^{4}\sum_{j=4m-3}^{4m}\beta_{\eta_{i,j}}[t]\eta_{i,j},
%\end{align} 
\begin{align}
\label{eq:tx_sig_DeNB_M2_K2}
x_s[t]&=\boldsymbol{\nu}^{\dagger}[t]\boldsymbol{\eta}\\
\label{eq:tx_sig_RN1_M2_K2}
x_{r,m}[t]&=\boldsymbol{\beta}^{(m)^{\dagger}}[t]\boldsymbol{C}_{\text{RN}_m}\boldsymbol{\eta},
\end{align} 
$\forall m\in[2]$ and $\forall t\in[T]$ for $T=12$, respectively. In \alert{the equation above}, the notation for the precoding scalars is identical to the one utilized in the previous achievability scheme \alert{with the slight difference that $\boldsymbol{\beta}^{(m)}[t]$ is the precoding vector at the $m$-th RN with respect to symbols $\boldsymbol{\eta}$. In \eqref{eq:tx_sig_DeNB_M2_K2} and \eqref{eq:tx_sig_RN1_M2_K2}, the following precoding subvectors in $\boldsymbol{\nu}[t]$ and $\boldsymbol{\beta}^{(m)}[t]$ are set to
\begin{align}
\label{eq:some_precod_subvec_1}
\boldsymbol{\nu}_{i,[3:4]}[t]=\boldsymbol{\nu}_{i,[7:8]}[t]&=\mathbf{0},\quad i=1,2 \\ \label{eq:some_precod_subvec_2}
\boldsymbol{\nu}_{3,[1:4]}[t]=\boldsymbol{\beta}^{(1)}_{3,[1:4]}[t]&=\mathbf{0}, \\ \label{eq:some_precod_subvec_3}
\boldsymbol{\nu}_{4,[5:8]}[t]=\boldsymbol{\beta}^{(2)}_{4,[5:8]}[t]&=\mathbf{0},
\end{align}$\forall t\in[T]$. The binary caching matrices $\boldsymbol{C}_{\text{RN}_1},\boldsymbol{C}_{\text{RN}_2}\in\mathbb{F}_2^{36\times 36}$ at the two RNs can be represented by the following Kronecker products 
\begin{align}
\label{eq:RN1_caching_matrix}
\boldsymbol{C}_{\text{RN}_1}&=
\mathbf{I}_4\:\otimes\:\begin{pmatrix}
\mathbf{I}_4 & \mathbf{0}_{4\times 5} \\[-0.25cm] \mathbf{0}_{5\times 4} & \mathbf{0}_{5\times 5} 
\end{pmatrix},\\ \label{eq:RN2_caching_matrix}
\boldsymbol{C}_{\text{RN}_2}&=
\mathbf{I}_4\:\otimes\:\begin{pmatrix}
\mathbf{0}_{4\times 4} & \mathbf{0}_{4\times 4} & \mathbf{0} \\[-0.25cm] \mathbf{0}_{4\times 4} & \mathbf{I}_4 & \mathbf{0} \\[-0.25cm] \mathbf{0}^{\dagger} & \mathbf{0}^{\dagger} & 0
\end{pmatrix}.
\end{align} Similarly to the previously described scheme, zero elements in $\boldsymbol{\nu}[t]$ (as stated in Eqs. \eqref{eq:some_precod_subvec_1}--\eqref{eq:some_precod_subvec_3}) can be incorporated to the effective DeNB caching matrix\footnote{\alert{The structure of this matrix suggests that effectively the fractional cache size of the DeNB is $\mu=\frac{5}{9}$.}}
\begin{align}
\label{eq:DeNB_eff_caching_matrix_scheme2}
\boldsymbol{C}_{\text{DeNB}}^{\text{eff}}=\begin{pmatrix}
\begin{pmatrix}
1 & 0 & 0 & 0 \\ 0 & 1 & 0 & 0
\end{pmatrix}\otimes\boldsymbol{C}_1 \\ \begin{pmatrix}
0 & 0 & 1 & 0
\end{pmatrix}\otimes\boldsymbol{C}_2 \\ \begin{pmatrix}
0 & 0 & 0 & 1
\end{pmatrix}\otimes\boldsymbol{C}_3 
\end{pmatrix}
\end{align} with
\begin{align*}
\boldsymbol{C}_1=
%\begin{pmatrix}
%\mathbf{I}_2 & \mathbf{0}_{2\times 2} & \mathbf{0}_{2\times 2} & \mathbf{0}_{2\times 2} & \mathbf{0} \\[-0.25cm] \mathbf{0}_{2\times 2} & \mathbf{0}_{2\times 2} & \mathbf{0}_{2\times 2} & \mathbf{0}_{2\times 2} & \mathbf{0} \\[-0.25cm] \mathbf{0}_{2\times 2} & \mathbf{0}_{2\times 2} & \mathbf{I}_{2} & \mathbf{0}_{2\times 2} & \mathbf{0} \\[-0.25cm] \mathbf{0}_{2\times 2} & \mathbf{0}_{2\times 2} & \mathbf{0}_{2\times 2} & \mathbf{0}_{2\times 2} & \mathbf{0}\\[-0.25cm] \mathbf{0}^{\dagger} & \mathbf{0}^{\dagger} & \mathbf{0}^{\dagger} & \mathbf{0}^{\dagger} & 1
%\end{pmatrix}, 
\begin{pmatrix}
\mathbf{I}_2\otimes\begin{pmatrix}
\mathbf{I}_2 & \mathbf{0}_{2\times 2} \\ \mathbf{0}_{2\times 2} & \mathbf{0}_{2\times 2} 
\end{pmatrix} & \mathbf{0} \\ \mathbf{0}^{\dagger} & 1
\end{pmatrix},\:
\boldsymbol{C}_2=\begin{pmatrix}
\mathbf{0}_{4\times 4} & \mathbf{0}_{4\times 5} \\[-0.25cm]
\mathbf{0}_{5\times 4} & \mathbf{I}_{5}
\end{pmatrix}\text{ and } \boldsymbol{C}_3=\begin{pmatrix}
\mathbf{I}_{4} & \mathbf{0}_{4\times 4} & \mathbf{0} \\[-0.25cm]
\mathbf{0}_{4\times 4} & \mathbf{0}_{4\times 4} & \mathbf{0} \\[-0.25cm] \mathbf{0}^{\dagger} & \mathbf{0}^{\dagger} & 1
\end{pmatrix}
\end{align*} such that we can rewrite the DeNB transmit signal in Eq. \eqref{eq:tx_sig_DeNB_M2_K2} by \begin{equation}
\label{eq:tx_sig_DeNB_eff_scheme2}
x_s[t]=\boldsymbol{\nu}^{\dagger}[t]\boldsymbol{C}_{\text{DeNB}}^{\text{eff}}\boldsymbol{\eta}. 
\end{equation} Under these transmit signals, the observation at the $k$-th UE becomes
\begin{align}
\label{eq:rx_sig_UE_k_M2_K2_initial}
y_{u,k}[t]&=g_{k}[t]x_{s}[t]+h_{k1}[t]x_{r,1}[t]+h_{k2}[t]x_{r,2}[t]+z_{u,k}[t]\nonumber\\&=\Big(g_{k}[t]\boldsymbol{\nu}^{\dagger}[t]\boldsymbol{C}_{\text{DeNB}}^{\text{eff}}+h_{k1}[t]\boldsymbol{\beta}^{(1)\dagger}[t]\boldsymbol{C}_{\text{RN}_1}+h_{k2}[t]\boldsymbol{\beta}^{(2)\dagger}[t]\boldsymbol{C}_{\text{RN}_2}\Big)\boldsymbol{\eta}+z_{u,k}[t]\nonumber\\&=\sum_{i=1}^{2}\bigg[\Big(g_{k}[t]\boldsymbol{\nu}_{i,[1:2]}^{\dagger}[t]+h_{k1}[t]\boldsymbol{\beta}_{i,[1:2]}^{(1)\dagger}[t]\Big)\mathbf{I}_{2}\boldsymbol{\eta}_{i,[1:2]}+\Big(g_{k}[t]\boldsymbol{\nu}_{i,[5:6]}^{\dagger}[t]+h_{k2}[t]\boldsymbol{\beta}_{i,[5:6]}^{(2)\dagger}[t]\Big)\mathbf{I}_{2}\boldsymbol{\eta}_{i,[5:6]}\nonumber\\&\quad+h_{k1}[t]\boldsymbol{\beta}_{i,[3:4]}^{(1)\dagger}[t]\mathbf{I}_2\boldsymbol{\eta}_{i,[3:4]}+h_{k2}[t]\boldsymbol{\beta}_{i,[7:8]}^{(2)\dagger}[t]\mathbf{I}_2\boldsymbol{\eta}_{i,[7:8]}+g_{k}[t]\nu_{\eta_{i,9}}[t]\eta_{i,9}\bigg]\nonumber\\&\quad+\Big(g_{k}[t]\boldsymbol{\nu}_{3,[5:8]}^{\dagger}[t]+h_{k2}[t]\boldsymbol{\beta}_{3,[5:8]}^{(2)\dagger}[t]\Big)\mathbf{I}_{4}\boldsymbol{\eta}_{3,[5:8]}+\Big(g_{k}[t]\boldsymbol{\nu}_{4,[1:4]}^{\dagger}[t]+h_{k1}[t]\boldsymbol{\beta}_{4,[1:4]}^{(1)\dagger}[t]\Big)\mathbf{I}_{4}\boldsymbol{\eta}_{4,[1:4]}\nonumber\\&\quad+\sum_{\ell=3}^{4}g_{k}[t]\nu_{\eta_{\ell,9}}[t]\eta_{\ell,9}+z_{u,k}[t].
\end{align} All components in \eqref{eq:rx_sig_UE_k_M2_K2_initial} for which $i,\ell\neq k$ represent interference. The coefficient vector $\boldsymbol{e}^{\text{ZF}}_{\boldsymbol{\eta}_{i,[a:b]},\{k,m\}}[t]\triangleq g_{k}[t]\boldsymbol{\nu}_{i,[a:b]}[t]+h_{km}[t]\boldsymbol{\beta}_{i,[a:b]}^{(m)}[t]$ denotes the effective channel coefficients of $(b-a+1)$ ZF symbols $\eta_{i,[a:b]}$ at UE$_k$. If $a=b$, the vector becomes a scalar and we simply write it as $e^{\text{ZF}}_{\eta_{i,a},\{k,m\}}$. Further, we use modulo-$K$ indexing with respect to symbols $\eta_{k+1,\ell},\forall\ell\in[9]$ and their respective beamformers. These notations help us to state the ZF conditions (cf. ZF map in Fig. \ref{fig:ZF_Mtwo_Ktwo}) which are at UE$_k$:} 
%\begin{subequations}\label{eq:ZF_conditions_Mtwo_Ktwo}
%	\begin{align}
%	\nu_{\eta_{(k+1)\Mod{K},j}}[t]g_k[t]+\beta_{\eta_{(k+1)\Mod{K},j}}[t]h_{k1}[t]&=0,\label{eq:subseq_ZF_condition_1_M2K2}\\
%	\nu_{\eta_{(k+1)\Mod{K},4+j}}[t]g_k[t]+\beta_{\eta_{(k+1)\Mod{K},4+j}}[t]h_{k2}[t]&=0,\label{eq:subseq_ZF_condition_2_M2K2}\\
%	\nu_{\eta_{3,2(k+1)+j}}[t]g_k[t]+\beta_{\eta_{3,2(k+1)+j}}[t]h_{k2}[t]&=0,\\	\nu_{\eta_{4,2(k-1)+j}}[t]g_k[t]+\beta_{\eta_{4,2(k-1)+j}}[t]h_{k1}[t]&=0.
%	\end{align}
%\end{subequations} 
\alert{\begin{subequations}\label{eq:ZF_conditions_Mtwo_Ktwo}
	\begin{align}
	\boldsymbol{e}^{\text{ZF}}_{\boldsymbol{\eta}_{k+1,[1:2]},\{k,1\}}[t]\triangleq g_{k}[t]\boldsymbol{\nu}_{k+1,[1:2]}[t]+h_{k1}[t]\boldsymbol{\beta}_{k+1,[1:2]}^{(1)}[t]&=\mathbf{0},\label{eq:subseq_ZF_condition_1_M2K2}\\
	\boldsymbol{e}^{\text{ZF}}_{\boldsymbol{\eta}_{k+1,[5:6]},\{k,2\}}[t]\triangleq g_{k}[t]\boldsymbol{\nu}_{k+1,[5:6]}[t]+h_{k2}[t]\boldsymbol{\beta}_{k+1,[5:6]}^{(2)}[t]&=\mathbf{0},\label{eq:subseq_ZF_condition_2_M2K2}\\
	\boldsymbol{e}^{\text{ZF}}_{\boldsymbol{\eta}_{3,[2k+1:2k+2]},\{k,2\}}[t]\triangleq g_{k}[t]\boldsymbol{\nu}_{3,[2k+1:2k+2]}[t]+h_{k2}[t]\boldsymbol{\beta}_{3,[2k+1:2k+2]}^{(2)}[t]&=\mathbf{0},\\
	\boldsymbol{e}^{\text{ZF}}_{\boldsymbol{\eta}_{4,[2k-1:2k]},\{k,1\}}[t]\triangleq g_{k}[t]\boldsymbol{\nu}_{4,[2k-1:2k]}[t]+h_{k1}[t]\boldsymbol{\beta}_{4,[2k-1:2k]}^{(1)}[t]&=\mathbf{0}.
	\end{align}
\end{subequations}} In addition to the ZF conditions, we also impose additional IA conditions to the precoding design (as shown in Fig. \ref{fig:IA_Graph_M2_K2}), which are for UE$_k$:
%\begin{subequations}\label{eq:IA_conditions_Mtwo_Ktwo}
% 	\begin{align}
%	\nu_{\eta_{5-k,9}}[t]g_{k}[t]&=\beta_{\eta_{(k+1)\Mod{K},3}}[t]h_{k1}[t]=\beta_{\eta_{(k+1)\Mod{K},7}}[t]h_{k2}[t],\\
%	\nu_{\eta_{k+2,9}}[t]g_{k}[t]&=\nu_{\eta_{4,6-2k}}[t]g_{k}[t]+\beta_{\eta_{4,6-2k}}[t]h_{k1}[t]=\beta_{\eta_{(k+1)\Mod{K},8}}[t]h_{k2}[t]\nonumber\\&=\nu_{\eta_{3,10-2k}}[t]g_{k}[t]+\beta_{\eta_{3,10-2k}}[t]h_{k2}[t],\\\nu_{\eta_{(k+1)\Mod{K},9}}[t]g_{k}[t]&=\nu_{\eta_{4,5-2k}}[t]g_{k}[t]+\beta_{\eta_{4,5-2k}}[t]h_{k1}[t]=\beta_{\eta_{(k+1)\Mod{K},4}}[t]h_{k1}[t]\nonumber\\&=\nu_{\eta_{3,9-2k}}[t]g_{k}[t]+\beta_{\eta_{3,9-2k}}[t]h_{k2}[t].
% 	\end{align}
% \end{subequations} 
\alert{\begin{subequations}\label{eq:IA_conditions_Mtwo_Ktwo}
 	\begin{align}
 	g_{k}[t]\nu_{\eta_{5-k,9}}[t]&=h_{k1}[t]\beta^{(1)}_{\eta_{k+1,3}}[t]=h_{k2}[t]\beta^{(2)}_{\eta_{k+1,7}}[t],\\
 	g_{k}[t]\nu_{\eta_{k+2,9}}[t]&=e^{\text{ZF}}_{\eta_{4,6-2k},\{k,1\}}[t]=h_{k2}[t]\beta^{(2)}_{\eta_{k+1,8}}[t]=e^{\text{ZF}}_{\eta_{3,10-2k},\{k,2\}}[t],\\g_{k}[t]\nu_{\eta_{k+1,9}}[t]&=e^{\text{ZF}}_{\eta_{4,5-2k},\{k,1\}}[t]=h_{k1}[t]\beta^{(1)}_{\eta_{k+1,4}}[t]=e^{\text{ZF}}_{\eta_{3,9-2k},\{k,2\}}[t].
 	\end{align}
\end{subequations}} As opposed to all other symbols in \eqref{eq:IA_conditions_Mtwo_Ktwo}, \alert{symbols $\eta_{k+1,j}$ and $\eta_{k+1,4+j}$, $j=1,2$} (in total 8), are only subjected to ZF conditions \eqref{eq:subseq_ZF_condition_1_M2K2} and \eqref{eq:subseq_ZF_condition_2_M2K2} but not to any alignment conditions. This is due to the fact these are ZF symbols that are canceled at one of the UEs while being desired at the other UE. In consequence, we do not need to align these symbols at any UE. However, as \alert{$\eta_{k+1,1}$ and $\eta_{k+1,2}$} have \emph{identical} ZF conditions (similarly for the pair of symbols \alert{$\eta_{k+1,5}, \eta_{k+1,6}$}), we need to distinguish these symbols at UE$_k$ for the sake of reliable decodability. \alert{To achieve this}, we introduce additional complex random factors \alert{$c_{\eta_{k+1,j}}[t]$ and $c_{\eta_{k+1,4+j}}[t]$}, $j=1,2$, when fixing the precoding scalars \alert{$\nu_{\eta_{k+1,j}}[t],\beta^{(1)}_{\eta_{k+1,j}}[t]$
 and $\nu_{\eta_{k+1,4+j}}[t],\beta^{(2)}_{\eta_{k+1,4+j}}[t]$}. These random factors are \emph{a-priori} known by all transmitting and receiving nodes.  
 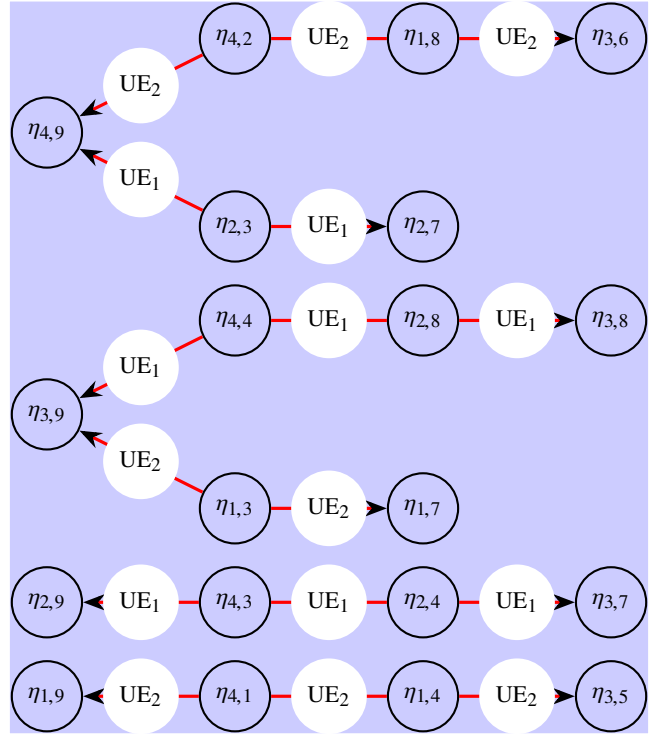
\begin{figure*}[h]
 	\centering
 	\begin{tikzpicture}[scale=1]
 	\AlignGraphMTwoKTwo
 	\end{tikzpicture}
 	\caption[Interference alignment graph for the achievability at corner point $(\frac{4}{9},\frac{4}{3})$ for $M=2$ and $K=2$]{\small Interference alignment graph from the UE perspective for the achievability at corner point $(\frac{4}{9},\frac{4}{3})$ for $M=2$ and $K=2$. The graph consists of six alignment chains and 20 nodes (i.e., symbols). \alert{The remaining 8 symbols ($\eta_{k+1,j}$, $\eta_{k+1,4+j}$, $j,k=1,2$) do not appear in the alignment graph as these symbols are zero-forced at the unwanted UE (according to the ZF conditions \eqref{eq:subseq_ZF_condition_1_M2K2} and \eqref{eq:subseq_ZF_condition_2_M2K2}) while they are desired by the other UE.}} 
 	\label{fig:IA_Graph_M2_K2}
 \end{figure*} 
 
 Now we provide the solution of the system of equations comprising of ZF and IA conditions. For this purpose, we determine the precoding scalars in all 6 alignment chains as a function of $\nu_{\eta_{1,9}}[t],\nu_{\eta_{2,9}}[t],\nu_{\eta_{3,9}}[t]$ or $\nu_{\eta_{4,9}}[t]$. Hereby, we fix these scalars to
% \begin{subequations}
% 	\begin{align}
% 	\nu_{\eta_{1,9}}[t]&=l_{13}[t]l_{23}[t]h_{21}[t],\\
% 	\nu_{\eta_{2,9}}[t]&=l_{13}[t]l_{23}[t]h_{11}[t],\\
% 	\nu_{\eta_{3,9}}[t]&=l_{13}[t]l_{23}[t]h_{12}[t]h_{21}[t]h_{22}[t],\\
% 	\nu_{\eta_{4,9}}[t]&=l_{13}[t]l_{23}[t]h_{11}[t]h_{12}[t]h_{21}[t].
% 	\end{align}
%\end{subequations}
%\begin{subequations}
% 	\begin{align}\label{eq:prec_nu_eta_i9}
% 	\nu_{\eta_{(k+1)\Mod{K},9}}[t]&=l_{13}[t]l_{23}[t]h_{k1}[t],\\
% 	\nu_{\eta_{5-k,9}}[t]=\nu_{\eta_{(k+1)\Mod{K}+2,9}}[t]&=l_{13}[t]l_{23}[t]h_{k1}[t]h_{k2}[t]h_{(k+1)\Mod{K},2}[t],
% 	\end{align}
%\end{subequations} 
\alert{\begin{subequations}
	\begin{align}\label{eq:prec_nu_eta_i9}
	\nu_{\eta_{k+1,9}}[t]&=l_{13}[t]l_{23}[t]h_{k1}[t],\\
	\nu_{\eta_{5-k,9}}[t]&=l_{13}[t]l_{23}[t]h_{k1}[t]h_{k2}[t]h_{k+1,2}[t],
	\end{align}
\end{subequations}} where
 \begin{subequations}\label{eq:l13_l23_def}
 	\begin{align}
 	l_{13}[t]&=g_{1}[t]h_{21}[t]-g_{2}[t]h_{11}[t],\\
 	l_{23}[t]&=g_{2}[t]h_{12}[t]-g_{1}[t]h_{22}[t].
 	\end{align}
 \end{subequations} As a consequence, the relevant precoders for $j=1,2$ become
\alert{\begin{subequations}\label{eq:all_precoders_function_M2_K2}
 	\begin{align}
 	\mathbf{p}_{\eta_{k+1,j}}[t]&\triangleq\begin{pmatrix}
 	\nu_{\eta_{k+1,j}}[t] \\ \beta^{(1)}_{\eta_{k+1,j}}[t]
 	\end{pmatrix}=c_{\eta_{k+1,j}}[t]\begin{pmatrix}
 	h_{k1}[t] \\ -g_{k}[t]
 	\end{pmatrix},  \label{eq:concat_bf_1}  \\
 	\beta^{(1)}_{\eta_{k+1,3}}[t]&=\nu_{\eta_{5-k,9}}[t]\frac{g_k[t]}{h_{k1}[t]}, \\ 	
 	\beta^{(1)}_{\eta_{k+1,4}}[t]&=\nu_{\eta_{k+1,9}}[t]\frac{g_k[t]}{h_{k1}[t]},\\
 	\mathbf{p}_{\eta_{k+1,4+j}}[t]&\triangleq\begin{pmatrix}
 	\nu_{\eta_{k+1,4+j}}[t] \\ \beta^{(2)}_{\eta_{k+1,4+j}}[t]
 	\end{pmatrix}=c_{\eta_{k+1,4+j}}[t]\begin{pmatrix}
 	-h_{k2}[t] \\ g_{k}[t]
 	\end{pmatrix}, \label{eq:concat_bf_2}\\
 	\beta^{(2)}_{\eta_{k+1,7}}[t]&=\nu_{\eta_{5-k,9}}[t]\frac{g_k[t]}{h_{k2}[t]}, \\
 	\beta^{(2)}_{\eta_{k+1,8}}[t]&=\nu_{\eta_{k+2,9}}[t]\frac{g_k[t]}{h_{k2}[t]}, \\
 	\mathbf{p}_{\eta_{3,9-2k}}[t]&\triangleq\begin{pmatrix}
 	\nu_{\eta_{3,9-2k}}[t] \\ \beta^{(2)}_{\eta_{3,9-2k}}[t]
 	\end{pmatrix}=(-1)^{k+1}\nu_{\eta_{k+1,9}}[t]\frac{g_k[t]}{l_{23}[t]}\begin{pmatrix}
 	-h_{k+1,2}[t] \\ g_{k+1}[t]
 	\end{pmatrix}, \label{eq:concat_bf_3}\\
 	\mathbf{p}_{\eta_{3,10-2k}}[t]&\triangleq\begin{pmatrix}
 	\nu_{\eta_{3,10-2k}}[t] \\ \beta^{(2)}_{\eta_{3,10-2k}}[t]
 	\end{pmatrix}=(-1)^{k+1}\nu_{\eta_{k+2,9}}[t]\frac{g_k[t]}{l_{23}[t]}\begin{pmatrix}
 	-h_{k+1,2}[t] \\ g_{k+1}[t]
 	\end{pmatrix},	\label{eq:concat_bf_4}\\
 	\mathbf{p}_{\eta_{4,5-2k}}[t]&\triangleq\begin{pmatrix}
 	\nu_{\eta_{4,5-2k}}[t] \\ \beta^{(1)}_{\eta_{4,5-2k}}[t]
 	\end{pmatrix}=(-1)^{k+1}\nu_{\eta_{k+1,9}}[t]\frac{g_k[t]}{l_{13}[t]}\begin{pmatrix}
 	h_{k+1,1}[t] \\ -g_{k+1}[t]
 	\end{pmatrix}, \label{eq:concat_bf_5}\\
 	\mathbf{p}_{\eta_{4,6-2k}}[t]&\triangleq\begin{pmatrix}
 	\nu_{\eta_{4,6-2k}}[t] \\ \beta^{(1)}_{\eta_{4,6-2k}}[t]
 	\end{pmatrix}=(-1)^{k+1}\nu_{\eta_{k+2,9}}[t]\frac{g_k[t]}{l_{13}[t]}\begin{pmatrix}
 	h_{k+1,1}[t] \\ -g_{k+1}[t]
 	\end{pmatrix} \label{eq:concat_bf_6}.
 	\end{align}
\end{subequations} 
Hereby, the concatenated beamforming vectors in \eqref{eq:concat_bf_1},\eqref{eq:concat_bf_2} and \eqref{eq:concat_bf_3}--\eqref{eq:concat_bf_6} are perpendicular to \begin{subequations}\label{eq:perp_vectors}
  	\begin{align}
	\mathbf{p}^{\perp}_{\eta_{k+1,j}}[t]=\mathbf{\tilde{h}}_{k1}[t]&\triangleq\begin{pmatrix}
  	g_{k}[t] \\ h_{k1}[t]
  	\end{pmatrix},\\	\mathbf{p}^{\perp}_{\eta_{k+1,4+j}}[t]=\mathbf{\tilde{h}}_{k2}[t]&\triangleq\begin{pmatrix}
  	g_{k}[t] \\ h_{k2}[t]
  	\end{pmatrix},\\
  	\mathbf{p}^{\perp}_{\eta_{3,10-2k}}[t]=\mathbf{p}^{\perp}_{\eta_{3,9-2k}}[t]=\mathbf{\tilde{h}}_{k+1,2}[t]&\triangleq\begin{pmatrix}
  	g_{k+1}[t] \\ h_{k+1,2}[t]
  	\end{pmatrix}, \\
  	\mathbf{p}^{\perp}_{\eta_{4,6-2k}}[t]=\mathbf{p}^{\perp}_{\eta_{4,5-2k}}[t]=\mathbf{\tilde{h}}_{k+1,1}[t]&\triangleq\begin{pmatrix}
  	g_{k+1}[t] \\ h_{k+1,1}[t]
  	\end{pmatrix}.
  	\end{align}
  \end{subequations}} This observation concurs with the four ZF conditions of Eq. \eqref{eq:ZF_conditions_Mtwo_Ktwo}. The decoding at RNs and UEs is described next. 
   
\subsubsection*{Decoding at the RNs and the UEs}\alert{In the following, we will formulate the received signals of the RNs and the UEs. Without loss of generality, we will focus on the received signals of UE$_1$ and RN$_1$. 
%In the equations to come, we will use the index $k'$ which satisfies
%\begin{equation}
%(k'+1)\Mod{K}=k
%\end{equation} for $k\in[K]$ to write $y_{u,k}[t]$ in a compact form. Utilizing this definition and the orthogonality observations, $y_{u,k}[t]$ becomes: 
\begin{align}\label{eq:rx_sig_UEk_M2_K2}
y_{u,1}[t]&=D_1(\boldsymbol{\eta}_{1,[1:9]})\nonumber\\&\quad+I_1(\eta_{2,9}+\eta_{4,3}+\eta_{2,4}+\eta_{3,7},\eta_{3,9}+\eta_{4,4}+\eta_{2,8}+\eta_{3,8},\eta_{4,9}+\eta_{2,3}+\eta_{2,7})+z_{u,1}[t]
\end{align} In this equation, $D_1$ denotes a linear combination of UE$_1$'s $9$ desired symbols $\boldsymbol{\eta}_{1,[1:9]}$. $I_1$, on the other hand, is a linear combination of $3$ aligned interference symbols\footnote{These symbols are specified in the argument of $I_1$.}. In explicit form, these two linear combinations are given by
\begin{align}
D_1(\boldsymbol{\eta}_{1,[1:9]})&=\Big(g_{1}[t]\boldsymbol{\nu}_{1,[1:2]}^{\dagger}[t]+h_{11}[t]\boldsymbol{\beta}_{1,[1:2]}^{(1)\dagger}[t]\Big)\mathbf{I}_{2}\boldsymbol{\eta}_{1,[1:2]}+h_{11}[t]\boldsymbol{\beta}_{1,[3:4]}^{(1)\dagger}[t]\mathbf{I}_{2}\boldsymbol{\eta}_{1,[3:4]}\nonumber\\&\quad+\Big(g_{1}[t]\boldsymbol{\nu}_{1,[5:6]}^{\dagger}[t]+h_{12}[t]\boldsymbol{\beta}_{1,[5:6]}^{(2)\dagger}[t]\Big)\mathbf{I}_{2}\boldsymbol{\eta}_{1,[5:6]}+h_{12}[t]\boldsymbol{\beta}_{1,[7:8]}^{(2)\dagger}[t]\mathbf{I}_{2}\boldsymbol{\eta}_{1,[7:8]}\nonumber\\&\quad+g_{1}[t]\nu_{\eta_{1,9}}[t]\eta_{1,9}
\end{align} and
\begin{align}
I_1&=g_{1}[t]\nu_{\eta_{2,9}}[t](\eta_{2,9}+\eta_{4,3}+\eta_{2,4}+\eta_{3,7})+g_{1}[t]\nu_{\eta_{3,9}}[t](\eta_{3,9}+\eta_{4,4}+\eta_{2,8}+\eta_{3,8})\nonumber\\&\quad+g_{1}[t]\nu_{\eta_{4,9}}[t](\eta_{4,9}+\eta_{2,3}+\eta_{2,7}).
\end{align}}
On the other hand, the received signal of RN$_1$ %and RN$_2$ 
after canceling known interfering components (by exploiting the cached content) can be written as follows:
\alert{\begin{align}\label{eq:rx_sig_RN1_M2_K2}
y_{r,1}'[t]&=f_{1}[t]\bigg(\boldsymbol{\nu}_{3,[5:8]}^{\dagger}[t]\mathbf{I}_4\boldsymbol{\eta}_{3,[5:8]}+\nu_{\eta_{3,9}}[t]\eta_{3,9}+\sum_{i=1}^{2}\Big(\boldsymbol{\nu}_{i,[5:6]}^{\dagger}[t]\mathbf{I}_2\boldsymbol{\eta}_{i,[5:6]}+\nu_{\eta_{i,9}}[t]\eta_{i,9}\Big)\nonumber\\&
\quad+\nu_{\eta_{4,9}}[t]\eta_{4,9}\bigg)+z_{r,1}[t]
\end{align}}
When choosing the precoding scalars $\nu_{\eta_{i,9}}[t],i=1,\ldots,4$ independently (e.g., as in Eq. \eqref{eq:prec_nu_eta_i9}) in Eqs. \eqref{eq:rx_sig_UEk_M2_K2} and \eqref{eq:rx_sig_RN1_M2_K2} over $T=12$ channel uses for a time-varying wireless channel, both \alert{RN$_1$ and UE$_1$} make in total $12$ noise-corrupted independent observations \alert{$\{y_{u,1}[t]\}_{t=1}^{12}$ and $\{y_{r,1}'[t]\}_{t=1}^{12}$} as a function of their desired symbols and (possibly aligned) interfering symbols. This facilitates that UE$_1$, %$k\in[2]$, 
on the one hand, is capable of decoding its $9$ desired symbols 
%\begin{itemize}
%	\item $\eta_{(k'+1)\Mod{K},j},j=1,2,\ldots,9$
%\end{itemize} and $3$ distinct sums of interfering symbols in the form 
%\begin{itemize}
%	\item $\eta_{(k+1)\Mod{K},9}+\eta_{4,5-2k}+\eta_{(k+1)\Mod{K},4}+\eta_{3,9-2k}$,
%	\item $\eta_{(k+1)\Mod{K}+2,9}+\eta_{(k+1)\Mod{K},3}+\eta_{(k+1)\Mod{K},7}$,
%	\item and  $\eta_{k+2,9}+\eta_{4,6-2k}+\eta_{(k+1)\Mod{K},8}+\eta_{3,10-2k}$. 
%\end{itemize} 
\alert{\begin{itemize}
	\item $\boldsymbol{\eta}_{1}=(\eta_{1,1},\eta_{1,2},\ldots,\eta_{1,9})^{\dagger}$
\end{itemize} and $3$ distinct sums of interfering symbols in the form 
\begin{itemize}
	\item $\eta_{2,9}+\eta_{4,3}+\eta_{2,4}+\eta_{3,7}$,
	\item $\eta_{3,9}+\eta_{4,4}+\eta_{2,8}+\eta_{3,8}$,
	\item and  $\eta_{4,9}+\eta_{2,3}+\eta_{2,7}$. 
\end{itemize}} 
\alert{RN$_1$}, on the other hand, decodes its $5$ remaining uncached desired symbols \alert{($\eta_{3,5},\eta_{3,6},\ldots,\eta_{3,9}$)} and $7$ interfering symbols \alert{($\eta_{1,5}$, $\eta_{1,6}$, $\eta_{2,5}$, $\eta_{1,6}$, $\eta_{1,9}$, $\eta_{2,9}$ and $\eta_{4,9}$)}. \alert{Due to symmetry, the observation of UE$_2$ and RN$_2$ are similar to UE$_1$ and RN$_1$, respectively.} 

As opposed to the previous scheme, both RNs and UEs require $T=12$ channel uses to determine their desired symbols. Since every file consists of $9$ symbols, the achievable NDT thus corresponds to $\frac{12}{9}=\frac{4}{3}$.

Now, we establish the achievability for corner points $\Big(\frac{1}{2},1\Big)$ (when $M=2,K=1$ as shown in Fig. \ref{fig:one_shot_M2_K1}) and $\Big(\frac{1}{2},\frac{5}{4}\Big)$ (when $M=2,K=2$ as shown in Fig. \ref{fig:one_shot_M2_K2}), respectively. These schemes are one-shot schemes that combine the Maddah-Ali Niesen (MAN) scheme with cooperative zero-forcing beamforming among DeNB and RNs. The generalized scheme is outlined in detail in Section \ref{cha_one_shot}. Thus, in the following, we will only briefly outline the achievability of these two corner points. We start with the cache placement at the RNs.

\subsubsection*{RNs cache placement}

At corner points $\Big(\frac{1}{2},1\Big)$ and $\Big(\frac{1}{2},\frac{5}{4}\Big)$, we assume that each file $W_i$ comprises of $L_1'=2$ and $L_2'=4$ symbols, respectively. Hereby, RN$_m$, $m=1,2,$ caches the $m$-th block of $\frac{L_1'}{2}$ $\Big(\frac{L_2'}{2}\Big)$ symbols. For instance, at corner point $\Big(\frac{1}{2},\frac{5}{4}\Big)$, RN$_2$ prefetches the last $\frac{L_2'}{2}=2$ symbols ($\eta_{i,3}$ and $\eta_{i,4}$) of each file $W_i$ (cf. $S_2$ in Fig. \ref{fig:one_shot_M2_K2}).      

\begin{figure}[h]
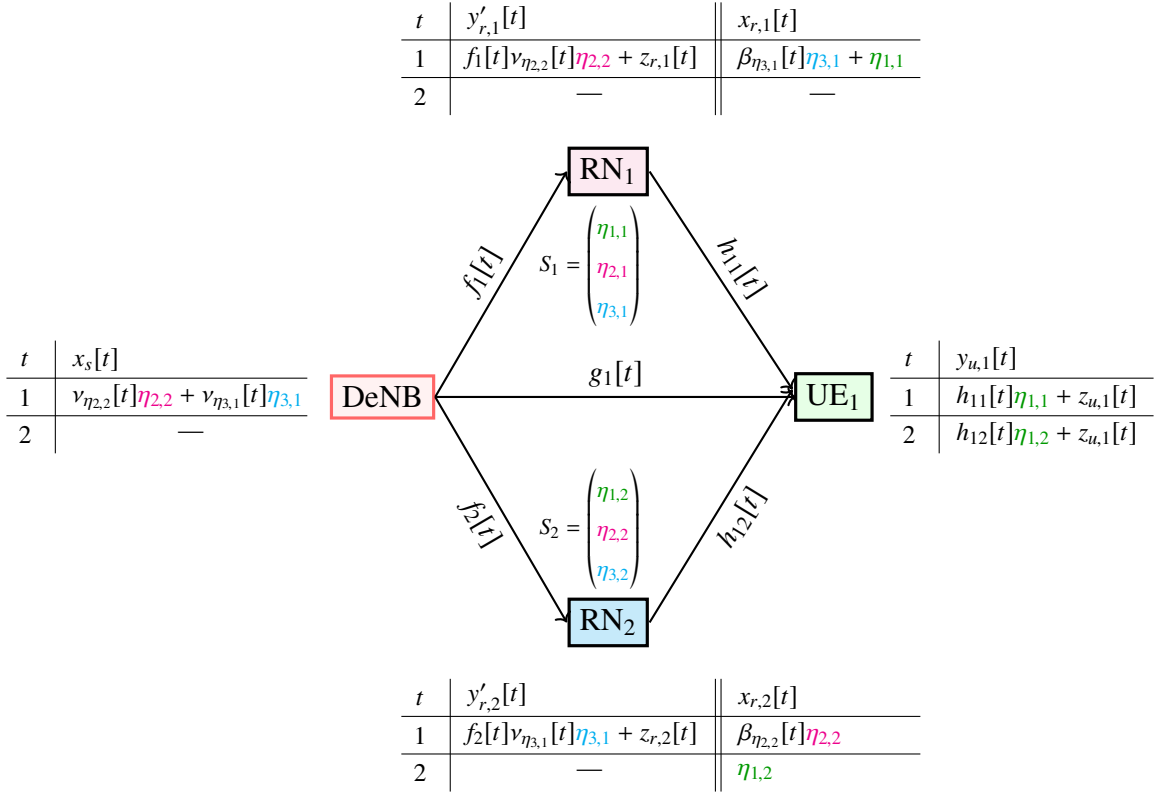

	\begin{center}
		\AchMTwoKone
		\caption{\small One-shot scheme at corner point $\Big(\frac{1}{2},1\Big)$ for $M=2$ RNs, $K=1$ UEs and $N=3$ files. Each file $W_i,i=1,2,3,$ is comprised of two symbols $\eta_{i,1}$ and $\eta_{i,2}$. The figure shows, respectively, the DeNB transmit signal $x_s[t]$, the $m$-th RNs transmit signal $x_{r,m}[t]$, the received signals of UE$_1$ $y_{u,1}[t]$ and RN$_m$ $y_{r,m}'[t]$ (when RN$_m$ exploits its cached content $S_m$ as side information) that leads to an NDT of $1$ after $T=2$ channel uses.}	
		\label{fig:one_shot_M2_K1}
	\end{center}
\end{figure} 

\begin{figure}[h]
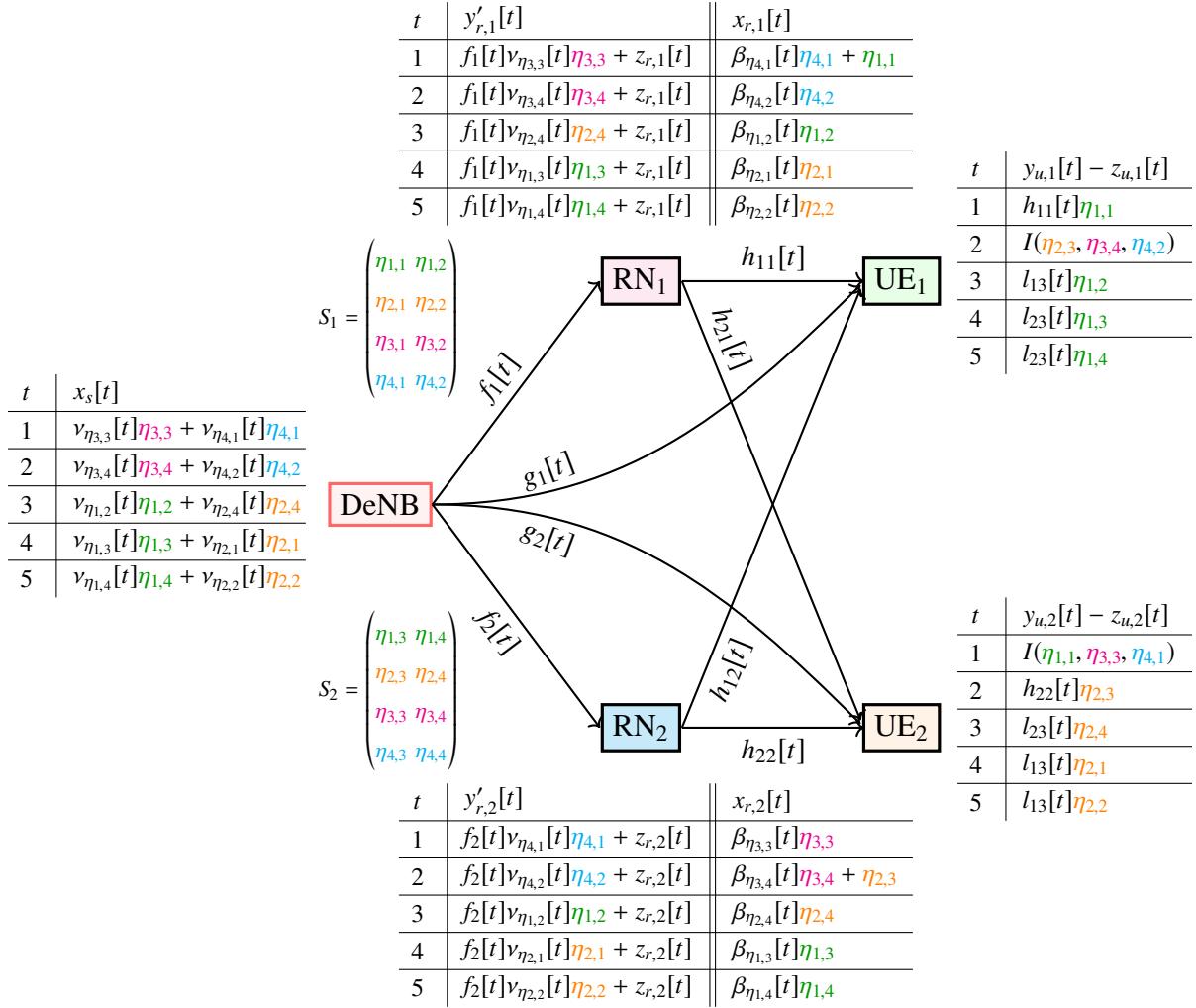

	\begin{center}
		\AchMTwoKTwo
		\caption{\small One-shot scheme at corner point $\Big(\frac{1}{2},\frac{5}{4}\Big)$ for $M=2$ RNs, $K=2$ UEs and $N=4$ files. Each file $W_i,i=1,2,3,$ is comprised of four symbols $\eta_{i,1},\eta_{i,2},\eta_{i,3}$ and $\eta_{i,4}$. The figure shows, respectively, the DeNB transmit signal $x_s[t]$, the $m$-th RNs transmit signal $x_{r,m}[t]$, the received signals of UE$_k$ $y_{u,k}[t]$ and RN$_m$ $y_{r,m}'[t]$ (when RN$_m$ exploits its cached content $S_m$ as side information) that leads to an NDT of $\frac{5}{4}$ after $T=5$ channel uses. We use $I(a_1,a_2,a_3)$ to denote a linear interference term as a function of the three symbols $a_1$, $a_2$ and $a_3$. Further, $l_{13}[t]$ and $l_{23}[t]$ are defined in Eq. \eqref{eq:l13_l23_def}.}	
		\label{fig:one_shot_M2_K2}
	\end{center}
\end{figure} 

\subsubsection*{Encoding at DeNB and RNs}

For both corner points, we use the first $T_1$\footnote{In the schemes of corner points $\Big(\frac{1}{2},1\Big)$ and $\Big(\frac{1}{2},\frac{5}{4}\Big)$, $T_1$ corresponds to $1$ and $2$ channel uses, respectively.} channel uses to multicast all symbols desired by RN$_1$ and RN$_2$ over the DeNB-RN broadcast channel. Simultaneously, we exploit the side information at the RNs caches to zero-force undesired symbols at one of the UEs and provide this particular UE with one of its desired symbols. For example, at $t=1$ for corner point $\Big(\frac{1}{2},\frac{5}{4}\Big)$, we multicast the symbols $\eta_{3,3}$ and $\eta_{4,1}$ at the DeNB in an MAN scheme manner according to:
\begin{equation}
x_{s}[1]=\nu_{\eta_{3,3}}[1]\eta_{3,3}+\nu_{\eta_{4,1}}[1]\eta_{4,1}
\end{equation} This transmit signal allows each RN to retrieve its desired symbol by exploiting its knowledge of the channel and of the undesired symbol in $x_{s}[1]$ through its cache (e.g., RN$_1$ is interested in symbol $\eta_{3,3}$ and has $\eta_{4,1}$ stored in its cache). However, neither UE$_1$ nor UE$_2$ are interested in the symbols of $x_s[1]$ ($\eta_{3,3},\eta_{4,1}$). Thus, we do not only leverage the knowledge of RN$_1$ in $\eta_{4,1}$ and of RN$_2$ in $\eta_{3,3}$ from a receiver-caching perspective, but instead we also capitalize on this cognizance from a transmitter-caching perspective by sending
\begin{subequations}
	\begin{align}
	x_{r,1}[1]&=\beta_{\eta_{4,1}}[1]\eta_{4,1}+\eta_{1,1}\\
	x_{r,2}[1]&=\beta_{\eta_{3,3}}[1]\eta_{3,3}
	\end{align}
\end{subequations} from RN$_1$ and RN$_2$, respectively, to overcome this issue. Hereby, we choose the following concatenated vectors to satisfy:
\begin{subequations}
	\begin{align}
	\begin{pmatrix}
	\nu_{\eta_{3,3}}[1] \\ \beta_{\eta_{3,3}}[1]
	\end{pmatrix}&\perp \begin{pmatrix} g_1[1] \\ h_{12}[1]
	\end{pmatrix},\\ \begin{pmatrix}
	\nu_{\eta_{4,1}}[1] \\ \beta_{\eta_{4,1}}[1]
	\end{pmatrix}&\perp \begin{pmatrix} g_1[1] \\ h_{11}[1]
	\end{pmatrix}.
	\end{align}
\end{subequations} This facilitates that UE$_1$ is free from interference due to undesired symbols $\eta_{3,3}$ and $\eta_{4,1}$. Instead, it receives a noise-corrupted signal that depends on its desired symbol $\eta_{1,1}$. UE$_2$, on the other hand, observes a signal that depends solely on undesired symbols $\eta_{1,1},\eta_{3,3}$ and $\eta_{4,1}$ denoted by $I(\eta_{1,1},\eta_{3,3},\eta_{4,1})$ in Fig. \ref{fig:one_shot_M2_K2}. In the remaining $T_2$\footnote{In the schemes of corner points $\Big(\frac{1}{2},1\Big)$ and $\Big(\frac{1}{2},\frac{5}{4}\Big)$, $T_2$ corresponds to $1$ and $3$ channel uses, respectively.} channel uses, cooperative DeNB-RN zero-forcing is applied for the case when $M=K=2$ and simple RN unicasting when $M=2,K=1$.   

\subsubsection*{Decoding at the RNs and the UEs} As opposed to all prior schemes, RNs and UEs can decode each desired symbol on a single-channel use basis, i.e., symbol decoding over multiple channel uses is not required in attaining the optimal NDT. In consequence, channel diversity over multiple channel uses is not required.  

Specifically, in the proposed scheme for corner point $\Big(\frac{1}{2},1\Big)$, UE$_1$ decodes its $L_1'=2$ desired symbols $\eta_{1,1}$ and $\eta_{1,2}$ in $T_1+T_2=2$ channel uses, namely, $1$ and $2$ (cf. Fig. \ref{fig:one_shot_M2_K1}), respectively. The RNs, on the contrary, only need the first channel use ($T_1=1$) to retrieve $(1-\mu)L_1'=1$ desired symbols. In conclusion, this conforms to an NDT of $\frac{T_1+T_2}{L_1'}=1$. 

In the other scheme, as shown in Fig. \ref{fig:one_shot_M2_K2}, UE$_1$ (UE$_2$) decodes its $L_2'=4$ desired symbols $\eta_{1,1},\eta_{1,2},\eta_{1,3}$ and $\eta_{1,4}$ ($\eta_{2,1},\eta_{2,2},\eta_{2,3}$ and $\eta_{2,4}$) in 4 out of $T_1+T_2=5$ channel uses, namely, $1,2,3$ and $4$ ($4,5,2$ and $3$), respectively. Unlike the UEs, each RN only needs the first $T_1=2$ channel uses to retrieve its $(1-\mu)L_2'=2$ desired symbols. Ultimately, the NDT becomes $\frac{T_1+T_2}{L_2'}=\frac{5}{4}$. 

\subsection{Achievability for $M=3$}\label{subsec:ach_M3}

The optimal delivery-time cache-memory tradeoff for $M=3$ and $K=1$ is presented. Hereby, the following proposition quantifies the achievable NDT.    

\begin{proposition}\label{prop_M3}
	The achievable NDT of the network under study for $M=3$ RNs, $K=1$ UEs and $\mu\in[0,1]$ is given by 
	\begin{equation}
	\label{eq:ach_NDT_M_3_K_1}
	\delta(\mu)=\max\Big\{1,4-9\mu\Big\}.
	\end{equation}
\end{proposition}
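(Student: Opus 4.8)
The plan is to prove Proposition \ref{prop_M3} by sandwiching $\delta^\star(\mu)$ between a converse that comes directly from Theorem \ref{theorem_lower_bound} and an achievable curve assembled entirely from schemes already established in Lemma \ref{corr_mu_0_and_1} and Theorem \ref{th:one_shot_ach_NDT}, glued together by memory sharing. No new alignment construction is required for this instance, which is why only a proposition (rather than a full scheme derivation) is needed.

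First I would handle the converse. Because $K=1$, the index set $s\in[\min\{M+1,K\}]$ in \eqref{eq:NDT_lw_bound} collapses to the single value $s=1$, which forces $\bar{s}=M+1-s=3$ and hence $\ell\in[\bar{s}:M]=\{3\}$. Substituting $(K,M,\ell,s)=(1,3,3,1)$ into \eqref{eq:NDT_lw_bound_inner_comp} gives
\begin{equation*}
\delta_{\text{LB}}(\mu,3,1)=\frac{1+3-\mu\big(3\cdot(1-1+\tfrac{3-1}{2})+\tfrac{3}{2}\cdot 4\big)}{1}=4-9\mu.
\end{equation*}
Together with the trivial unit bound, Theorem \ref{theorem_lower_bound} then yields $\delta^\star(\mu)\geq\max\{1,4-9\mu\}$, which already matches the claimed expression. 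Thus it remains only to exhibit a feasible policy attaining it.

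Second, for achievability I would invoke convexity of the NDT in $\mu$ (the memory-sharing argument recalled just before Section \ref{sec:main_res}), so that it suffices to achieve the three corner points $(\mu,\delta)\in\{(0,4),(\tfrac{1}{3},1),(1,1)\}$, whose lower convex envelope is exactly $\max\{1,4-9\mu\}$ on $[0,1]$. The extreme points follow from Lemma \ref{corr_mu_0_and_1}: at $\mu=0$, DeNB broadcasting to the $K+M=4$ nodes attains $\delta=4$ by \eqref{eq:opt_NDT_mu_0}; at $\mu=1$, cooperative $(M+1,K)=(4,1)$ zero-forcing attains $\delta=\max\{1,\nicefrac{K}{(M+1)}\}=1$ by \eqref{eq:opt_NDT_mu_1}. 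The intermediate point $\mu=\tfrac{1}{3}$ is where $\mu M=1$, a valid discretized cache size $\nicefrac{1}{M}$ for the one-shot scheme; evaluating \eqref{eq:NDT_OS_ach} there gives $\delta_{\text{MAN}}(\tfrac{1}{3})=M(1-\mu)/(1+\mu M)=1$, while $K=1=\mu M$ makes $\boldsymbol{1}_{K>\mu M}=0$ and $\min\{K,1+\mu M\}=1$, so the interference-limited term also equals $1$; hence $\delta_{\text{OS}}(\tfrac{1}{3})=1$. Equivalently, $(\mu,K,M)=(\tfrac{1}{3},1,3)$ satisfies condition 1a) of Corollary \ref{corr:OS_NDT_Opt} (indeed $\mu\geq\tfrac{K}{M}=\tfrac13$, $\mu M=1\geq\lceil\tfrac{M-1}{2}\rceil=1$, and $\mu\leq\tfrac12$), so the one-shot scheme is NDT-optimal at this point and meets the converse.

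Finally, memory sharing between $(0,4)$ and $(\tfrac{1}{3},1)$ gives $\delta=4-9\mu$ on $[0,\tfrac{1}{3}]$, and sharing between $(\tfrac{1}{3},1)$ and $(1,1)$ gives the constant $\delta=1$ on $[\tfrac{1}{3},1]$; the resulting achievable curve coincides with the lower bound everywhere, establishing $\delta^\star(\mu)=\max\{1,4-9\mu\}$. I anticipate essentially no obstacle beyond bookkeeping, since all heavy lifting is inherited from earlier results; the one point worth double-checking is that the one-shot scheme at $\mu=\tfrac{1}{3}$ requires no second delivery phase. This is exactly the reason the formula collapses to $1$: with $\mu M=K=1$ the system sits on the RN-standalone frontier $\mu M=K$, so the single UE is fully served within the first phase ($T_2=0$), and the delivery time is governed entirely by the broadcast-limited term $\delta_{\text{MAN}}(\tfrac{1}{3})=1$.
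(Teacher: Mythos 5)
Your proposal is correct and takes essentially the same route as the paper: the paper likewise reduces Proposition \ref{prop_M3} to exhibiting NDT $1$ at $\mu=\nicefrac{1}{3}$ via the one-shot scheme of Section \ref{cha_one_shot}, with the corner points at $\mu=0$ and $\mu=1$ supplied by Lemma \ref{corr_mu_0_and_1} and memory sharing filling in the rest. The only cosmetic discrepancy is that you certify optimality at $(\mu,K,M)=(\nicefrac{1}{3},1,3)$ via condition 1a) of Corollary \ref{corr:OS_NDT_Opt} (Region A), whereas under the paper's region definitions this point sits in Region B---which is where the paper itself directs the reader---but your direct evaluation of \eqref{eq:NDT_OS_ach}, giving $\delta_{\text{OS}}(\nicefrac{1}{3})=1$ with $T_2=0$, makes this immaterial.
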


Similarly, to the achievability of corner points when $M=2$, it suffices to establish a scheme that attains an NDT of $1$ at $\mu=\frac{1}{3}$. The NDT-optimal scheme that establishes this result is the one-shot scheme of Section \ref{cha_one_shot}. Since the main idea of this scheme has already been illustrated through explicit examples for corner points $\Big(\frac{1}{2},1\Big)$ when $(K,M)=(1,2)$ and $\Big(\frac{1}{2},\frac{5}{4}\Big)$ when $(K,M)=(2,2)$, we omit further details for the sake of brevity. The interested reader can reconstruct the explicit scheme for $(K,M)=(1,3)$ from the generalized scheme of Section \ref{cha_one_shot} for Region B.     
     
\section{\alert{Directions for Future Work}}
\label{sec:fut_work}

\alert{In this section, we discuss some of the open problems and directions for future work on the topic of cache-aided broadcast-relay wireless networks. In particular, we focus on aspects that are left open in this paper.} 

\subsection{\alert{Imperfect CSI}}

\alert{An interesting aspect is the influence of \emph{imprecise} CSI at transmitters (CSIT) on the minimum NDT. In detail, we may consider the system model of Fig. \ref{fig:HetNet} in which at channel use $t$, DeNB and/or RNs may have access to imprecise CSI of $\big\{\mathbf{f}[t],\mathbf{g}[t],\mathbf{H}[t]\big\}$. Specifically, we distinguish the following CSIT settings:}
\begin{enumerate}
	\item \alert{\emph{Delayed} CSIT: At channel use $t$, DeNB and/or RNs are aware of the previous $t-1$ time instants CSI $\big\{\mathbf{f}^{t-1},\mathbf{g}^{t-1},\mathbf{H}^{t-1}\big\}$.
	\item \emph{Mixed} CSIT: In addition to delayed CSIT, DeNB and/or RNs have at time instant $t$ access to \emph{current} CSI estimates $\big\{\mathbf{\hat{f}}[t],\mathbf{\hat{g}}[t],\mathbf{\hat{H}}[t]\big\}$ of some quality $\alpha\in[0,1]$ \cite[Chapter 4]{KakarMDPI}. The extreme cases of $\alpha=0$ and $\alpha=1$ represent, respectively, the cases of delayed CSIT only and perfect current quality CSIT.} 
\end{enumerate} 

\alert{While there is plenty of degrees-of-freedom studies on the impact of delayed and mixed CSIT on interference networks (cf. survey paper \cite{KakarMDPI}), the interplay of caching and imperfect CSI is with the exception of \cite{Zhang17} far less understood. The authors in \cite{Zhang17} identify DoF gains both due to current CSIT and coded caching for an MISO broadcast channel with Rx caching. When focusing on the extreme cases of zero-cache ($\mu=0$) and full cache ($\mu=1$), we can determine the achievable NDT for}
%\begin{itemize}
%	\item \emph{Delayed} CSIT: \begin{equation*}
%	\delta_1(\mu)=\begin{cases}
%	K+M\quad&\text{ for }\mu=0 \\ \sum_{i=1}^{K}\frac{1}{i}\quad&\text{ for }\mu=1,K\leq M+1
%	\end{cases},
%	\end{equation*} 
%	\item 
\alert{\emph{mixed} CSIT 
\begin{equation*}
	\delta_{\text{ach}}(\mu,\alpha)=\begin{cases}
	K+M\quad&\text{ for }\mu=0 \\\frac{\sum_{i=1}^{K}\frac{1}{i}}{(1-\alpha)+\alpha\sum_{i=1}^{K}\frac{1}{i}}\quad&\text{ for }\mu=1,K\leq M+1
	\end{cases}
\end{equation*}} \alert{based on results for the MISO BC \cite{Maddah-AliDCSIT,Kerret16}. The case of delayed CSIT only is given in above equation when $\alpha=0$. It is of interest to understand how cache placement and file delivery needs to be adjusted for the aforementioned CSIT models when $0<\mu<1$.}  

\subsection{\alert{Partial Connectivity}}

\alert{It is of interest to understand the implications of partial connectivity (with respect to the RN-UE links) on the NDT for cache-assisted broadcast-relay networks. In particular, one question of interest is whether the cache placement has to account for the topology. In addition, how does optimal file splitting may look like.} 

\alert{In recent works, caching has been, amongst others, applied to combination networks with receiver caches \cite{Zewail17}, partially-connected interference channels with transmitter caches under complete file placement \cite{Yi16_TCC} and interference networks with Tx and Rx-caching \cite{Xu17_Partial}. However, the impact of partial connectivity on cache-aided channels is unknown at large, especially, with respect to transceiver cache-aided networks.} 

\subsection{\alert{Proof-of-Concept Implementation}}

\alert{A practical proof-of-concept implementation allows to verify to what extent the theoretically postulated delivery times in this paper are achievable. Further, implementation issues such as large subpacketization levels \cite{Shanmugam17Pak}, practicality of centralized cache placement \cite{Maddah-Ali_decentral} and self-interference cancelation in full-duplex communication \cite{Vogt16} have to be handled.}    

\section{\alert{Conclusion}}
\label{sec:conclusion}

\alert{In this paper, we have studied the fundamental information-theoretic limits on the delivery time of a transceiver cache-aided broadcast relay network consisting of a central base station (DeNB), $M$ cache-endowed relay nodes (RNs) and $K$ mobile users (UEs). We used the normalized delivery time (NDT) as our performance metric. The NDT measures the worst-case delivery time per bit with respect to an interference-free system in the high SNR regime. We established a converse result for the NDT of a general broadcast relay network with arbitrary number of RNs and UEs. Next, we presented two achievability schemes which exploit the RNs caches and its full-duplex capability in collaboration with the DeNB. The first scheme is a \emph{one-shot} achievability scheme that synergistically interlaces zero-forcing (ZF) and multicasting strategies proposed in the framework of coded caching. The second scheme, on the other hand, integrates subspace interference alignment (IA) with zero-forcing through carefully designing ZF and IA maps. With these results, we were able to characterize the optimal NDT-cache-memory tradeoff for $K+M\leq 4$. In addition, we identify NDT-optimal regimes of the proposed one-shot scheme and also show that caching more than a fraction of $\frac{\ceil{\nicefrac{(M-1)}{2}}}{M}$ files attains a constant gap of $\frac{8}{3}$ with respect to the optimal NDT. We discussed the inadequacy of the inverse sum DoF metric in capturing the delivery time of cache-assisted broadcast relay networks. Finally, we presented directions for future work.}

\appendices
\section{Maximum Multiplicative Gap for One-Shot Scheme}
\label{cha_gap}

In this section, we present the maximum multiplicative gap of the achievable one-shot scheme presented in Section \ref{cha_one_shot}. We remind the reader about the $(\mu,K,M)$ regions presented in Fig. \ref{fig:Region_plot} where the achievable NDTs vary. To study the gap with respect to lower bounds on the NDT, we consider the pair of regions (B,E) with achievable one-shot NDT
\begin{equation}\label{eq:one_shot_ach_NDT_Reg_BE}
\delta_{\text{OS}}^{\text{(B,E)}}(\mu)=\delta_{\text{MAN}}(\mu)=M\cdot(1-\mu)\frac{1}{1+\mu M}
\end{equation} as well the pair of regions (C,D) with achievable one-shot NDT  
\begin{equation}\label{eq:one_shot_ach_NDT_Reg_CD}
\delta_{\text{OS}}^{\text{(C,D)}}(\mu)=\frac{K+\delta_{\text{MAN}}(\mu)}{1+\mu M},%=\frac{K+M+\mu M(K-1)}{(1+\mu M)^{2}},
\end{equation} separately. Finally, we consider the transitional region of regime pairs (D,E) for fractional cache sizes $\mu\in\Big[0,\frac{1}{M}\Big]$.   

\subsection{(B,E) Region Pair}
\label{subsec:be_reg_pair_gap}

For this region pair, the one-shot NDT corresponds to \eqref{eq:one_shot_ach_NDT_Reg_BE}. We focus on fractional cache sizes $\mu$ inside the interval $\mu\in\Big[\mu'(\theta),\frac{\ceil{\nicefrac{(M-1)}{2}}}{M}\Big]$, where $\theta\in\Big[1,\frac{M-3}{2}\Big]$ adjusts the left endpoint $\mu'(\theta)$ of the interval according to   
\begin{equation}\label{eq:mu_prime_Reg_BE}
\mu'(\theta)=\frac{\ceil{\theta}}{M}.
\end{equation} The NDT $\delta_{\text{OS}}^{\text{(B,E)}}(\mu)$ is decreasing in $\mu$, which is why we upper bound it (in the interval specified above) by
\begin{align}\label{eq:up_bound_delta_os_ach_BE}
\delta_{\text{OS}}^{\text{(B,E)}}(\mu)\leq\delta_{\text{OS}}^{\text{(B,E)}}\bigg(\frac{\theta}{M}\bigg)=\frac{M-\theta}{1+\theta}. 
\end{align} From Theorem \ref{theorem_lower_bound}, we infer that
\begin{align}\label{eq:lw_bound_delta_lb_BE}
\delta^{\star}(\mu)\geq 1.
\end{align} Consequently, the multiplicative gap becomes
\begin{align}\label{eq:final_gap_Reg_BE}
\frac{\delta_{\text{OS}}^{\text{(B,E)}}(\mu)}{\delta^{\star}(\mu)}\leq\frac{M-\theta}{1+\theta}.
\end{align} 

\subsection{(C,D) Region Pair}
\label{subsec:cd_reg_pair_gap}
In these two regions, the achievable NDT is given by  \eqref{eq:one_shot_ach_NDT_Reg_CD}. We consider three different intervals of fractional cache sizes: 
\begin{enumerate*}[label=(\roman*)]
	%\item the first being $\mu\in[0,\frac{1}{M}]$,
	\item the first being $\mu\in[0,\tilde{\mu}]$,
	\item the second being $\mu\in\Big(\tilde{\mu},\frac{1}{M}\Big]$ 
	\item and the third being $\mu\in\Big[\mu'(\kappa_d),\min\Big\{1,\frac{K}{M}\Big\}\Big)$.
\end{enumerate*} In the following, we treat these three cases individually.
%\vspace{0.5em}
\subsubsection*{Case (i)} For this case, the right endpoint of the interval equates to
\begin{align}\label{eq:tilde_mu_Reg_CD_Case1}
\tilde{\mu}=\min\Bigg\{\frac{1}{M},\frac{K+M+1}{(M+1)(K+M-1)}\Bigg\}.
\end{align} The achievable NDT in the interval $\Big[0,\frac{1}{M}\Big]$, which subsumes the interval of case (i), is  attainable through memory sharing of schemes at corner point $(0,K+M)$ and the one-shot scheme at $\mu=\frac{1}{M}$ with corresponding NDT
\begin{equation*}
\delta_{\text{OS}}^{(C,D)}\bigg(\mu=\frac{1}{M}\bigg)=\begin{cases}1&\text{ if }K=1\\\max\Big\{1,\frac{K}{M+1}\Big\}&\text{ if }M=1\\\frac{K+\delta_{\text{MAN}}\big(\mu=\frac{1}{M}\big)}{2}=\frac{K}{2}+\frac{M-1}{4}&\text{ if }K,M\geq 2\end{cases}.
\end{equation*} This results in the achievable NDT
\begin{align}\label{eq:up_bound_delta_os_ach_CD_mu_leq_1_over_M}
\delta_{\text{OS}}^{\text{(C,D)}}(\mu)=\begin{cases}K+M-\mu M\big(K+M-1\big)&\text{ if }K=1\\K+M-\mu M\Big(K+M-\max\Big\{1,\frac{K}{M+1}\Big\}\Big)&\text{ if }M=1\\K+M-\mu M\Big(\frac{K}{2}+\frac{3M+1}{4}\Big)&\text{ if }K,M\geq 2\end{cases}
\end{align} in the interval $\Big[0,\frac{1}{M}\Big]$. By substituting $\ell=M$ and $s=1$ in Eq. \eqref{eq:NDT_lw_bound} of Theorem \ref{theorem_lower_bound}, we get
\begin{align}\label{eq:lw_bound_delta_lb_CD_Case1}
\delta^{\star}(\mu)\geq \delta_{\text{LB}}(\mu,M,1)=K+M-\mu M(K+M-1).
\end{align} We now combine \eqref{eq:up_bound_delta_os_ach_CD_mu_leq_1_over_M} and \eqref{eq:lw_bound_delta_lb_CD_Case1} to upper bound the multiplicative gap according to:
\begin{align}\label{eq:temp_gap_Reg_CD_Case1_Meq1_trivial_case}
\frac{\delta_{\text{OS}}^{\text{(C,D)}}(\mu)}{\delta^{\star}(\mu)}&\leq 1
\end{align} \vspace{0.5em} if $K=1$, 
\begin{align}\label{eq:temp_gap_Reg_CD_Case1_Meq1}
\frac{\delta_{\text{OS}}^{\text{(C,D)}}(\mu)}{\delta^{\star}(\mu)}&\leq\frac{K+M-\mu M\Big(K+M-\max\Big\{1,\frac{K}{M+1}\Big\}\Big)}{K+M-\mu M(K+M-1)}\nonumber\\&\stackrel{(a)}=1+\frac{\mu M \Big(\frac{K}{M+1}-1\Big)^{+}}{K+M-\mu M(K+M-1)}\nonumber\\&\stackrel{(b)}\leq 1+\frac{\tilde{\mu}M\Big(\frac{K}{M+1}-1\Big)^{+}}{\max\Big\{1,\frac{K}{M+1}\Big\}}=1+\frac{M}{K}\cdot\frac{(K+M+1)}{(K+M-1)}\Bigg(\frac{K}{M+1}-1\Bigg)^{+}\nonumber\\&=\begin{cases}
1&\text{ if }K\leq 2\\%1+\frac{1}{K}\cdot\frac{(K+2)}{K}\Big(\frac{K}{2}-1\Big)&\text{ if }K>2
1+\frac{1}{K}\cdot\Big(\frac{K}{2}-\frac{2}{K}\Big)&\text{ if }K>2
\end{cases}\nonumber\\&\leq\begin{cases}
1&\text{ if }K\leq 2\\\frac{3}{2}&\text{ if }K>2
\end{cases}
\end{align} \vspace{0.5em} if $M=1$ and  
\vspace{0.5em}
\begin{align}\label{eq:temp_gap_Reg_CD_Case1}
\frac{\delta_{\text{OS}}^{\text{(C,D)}}(\mu)}{\delta^{\star}(\mu)}&\leq\frac{K+M-\mu M\Big(\frac{K}{2}+\frac{3M+1}{4}\Big)}{K+M-\mu M(K+M-1)}\nonumber\\&\stackrel{(a)}=1+\frac{\mu M \Big(\frac{K}{2}+\frac{M-5}{4}\Big)}{K+M-\mu M(K+M-1)}\nonumber\\&\stackrel{(b)}\leq 1+\frac{\tilde{\mu}M\Big(\frac{K}{2}+\frac{M-5}{4}\Big)}{\max\Big\{1,\frac{K}{M+1}\Big\}}=1+\bigg(\frac{K}{2}+\frac{M-5}{4}\bigg)\cdot\min\Bigg\{1,\frac{M}{K}\cdot\frac{(K+M+1)}{(K+M-1)}\Bigg\}
\end{align} if $K,M\geq 2$. Hereby, step $(a)$ follows by addition and subtraction of the denominator $\delta_{\text{LB}}(\mu,M,1)$ from the numerator and step $(b)$ from the fact that the rational function is maximized by one of the endpoints of the domain of fractional cache sizes (in this specific case, $\mu^{\star}=\tilde{\mu}$ in the domain $0\leq\mu\leq\tilde{\mu}$). 
\vspace{0.5em}
\subsubsection*{Case (ii)} Now we examine fractional cache sizes $\mu\in\Big(\tilde{\mu},\frac{1}{M}\Big]$. From the definition of $\tilde{\mu}$, we determine that $\tilde{\mu}=\frac{1}{M}$ if $K\leq M$ and $\tilde{\mu}<\frac{1}{M}$ if $K>M$. Consequently, it suffices to restrict the focus for this case to $K>M$. However, we may still use the expressions on the achievable NDT of Eq. \eqref{eq:up_bound_delta_os_ach_CD_mu_leq_1_over_M}. In addition, we deduce from Theorem \ref{theorem_lower_bound} that for $K>M$
\begin{align}\label{eq:lower_bound_Reg_CD_Case_2_for_RegC}
\delta^{\star}(\mu)\geq \frac{K}{M+1}.
\end{align} Using \eqref{eq:up_bound_delta_os_ach_CD_mu_leq_1_over_M} and \eqref{eq:lower_bound_Reg_CD_Case_2_for_RegC}, we get for $K>M$ and $M\geq 2$
\begin{align}\label{eq:temp_gap_Reg_CD_Case3}
\frac{\delta_{\text{OS}}^{\text{(C,D)}}(\mu)}{\delta^{\star}(\mu)}&\leq\frac{K+M-\mu M\Big(\frac{K}{2}+\frac{3M+1}{4}\Big)}{\frac{K}{M+1}}\nonumber\\&\stackrel{(b)}\leq\frac{K+M-\tilde{\mu} M\Big(\frac{K}{2}+\frac{3M+1}{4}\Big)}{\frac{K}{M+1}}\nonumber\\&=1+\bigg(\frac{K}{2}+\frac{M-5}{4}\bigg)\cdot\frac{M}{K}\cdot\frac{(K+M+1)}{(K+M-1)}
\end{align}
\vspace{0.5em}
\subsubsection*{Case (iii)}
Consider all fractional cache sizes $\mu\in\Big[\mu'(\kappa_d),\min\Big\{1,\frac{K}{M}\Big\}\Big)$, where we choose $\mu'(\kappa_d)$ according to
\begin{equation}\label{eq:mu_prime_Reg_CD}
\mu'(\kappa_d)=\frac{\ceil{\kappa_d}}{M}
\end{equation} and the numerator $\kappa_d$ in \eqref{eq:mu_prime_Reg_CD} parametrized by $d$ to
\begin{equation}\label{eq:alpha_d_Reg_CD}
\kappa_d=\frac{M+1-d}{d},\quad \forall d\in\Bigg[\frac{M+1}{\min\{K,M\}},\frac{M+1}{2}\Bigg].
\end{equation} Due to the decreasing monotony of $\delta_{\text{OS}}^{\text{(C,D)}}(\mu)$, we may upper bound the achievable NDT for any $\mu$ in the interval $\Big[\mu'(\kappa_d),\min\Big\{1,\frac{K}{M}\Big\}\Big)$ by
\begin{align}\label{eq:up_bound_delta_os_ach_CD}
\delta_{\text{OS}}^{\text{(C,D)}}(\mu)\leq\delta_{\text{OS}}^{\text{(C,D)}}\bigg(\frac{\kappa_d}{M}\bigg)=\frac{K(1+\kappa_{d})+M-\kappa_d}{(1+\kappa_d)^{2}}. 
\end{align} The lower bound of Theorem \ref{theorem_lower_bound} can be further bounded from below as follows:
\begin{align}\label{eq:lw_bound_delta_lb_CD}
\delta^{\star}(\mu)\geq\max\Big\{1,\max_{\substack{\ell\in[\bar{s}:M],\\s\in[\min\{M+1,K\}]}}\delta_{\text{LB}}(\mu,\ell,s)\Big\}\geq\max\Bigg\{1,\frac{K}{M+1}\Bigg\}
\end{align} With \eqref{eq:up_bound_delta_os_ach_CD} and \eqref{eq:lw_bound_delta_lb_CD}, we are able to upper bound the multiplicative gap.
\begin{align}\label{eq:temp_gap_Reg_CD}
\frac{\delta_{\text{OS}}^{\text{(C,D)}}(\mu)}{\delta^{\star}(\mu)}\leq\frac{\delta_{\text{OS}}^{\text{(C,D)}}\Big(\frac{\kappa_d}{M}\Big)}{\max\Big\{1,\frac{K}{M+1}\Big\}}=\frac{K(M+1)(1+\kappa_{d})+(M-\kappa_{d})(M+1)}{\max\{M+1,K\}(1+\kappa_{d})^{2}}
\end{align} When substituting \eqref{eq:alpha_d_Reg_CD} in \eqref{eq:temp_gap_Reg_CD}, we have
\begin{align*}
\frac{\frac{K(M+1)^{2}}{d}+\frac{(M+1)^{2}(d-1)}{d}}{\frac{\max\{M+1,K\}(M+1)^{2}}{d^{2}}}=\frac{dK}{\max\{M+1,K\}}+\frac{d(d-1)}{\max\{M+1,K\}}.
\end{align*} Thus, we conclude
\begin{align}\label{eq:final_gap_Reg_CD}
\frac{\delta_{\text{OS}}^{\text{(C,D)}}(\mu)}{\delta^{\star}(\mu)}\leq\frac{dK}{\max\{M+1,K\}}+\frac{d(d-1)}{\max\{M+1,K\}}.
\end{align}

\subsection{Transition (D,E) Region Pair for $\mu\leq\frac{1}{M}$}
\label{subsec:de_reg_pair_gap}
At this transitional region, the one-shot NDT for arbitrary fractional cache sizes $\mu<\frac{1}{M}$ is achievable through memory sharing of the broadcasting scheme at corner point $(0,K+M)$ and the OS-scheme at $\mu=\frac{1}{M}$ with NDT
\begin{equation*}
\delta_{\text{MAN}}\bigg(\mu=\frac{1}{M}\bigg)=\frac{M-1}{2}.
\end{equation*} This results in the following memory-sharing based transitional NDT for the region pair (D,E):
\begin{align}\label{eq:up_bound_delta_os_ach_DE}
\delta_{\text{OS}}^{\text{(D,E)}}(\mu)=K+M-\mu M\bigg(K+\frac{M+1}{2}\bigg)
\end{align} We substitute $\ell=M$ and $s=1$ in Eq. \eqref{eq:NDT_lw_bound} of Theorem \ref{theorem_lower_bound} to get
\begin{align}\label{eq:lw_bound_delta_lb_DE}
\delta^{\star}(\mu)\geq \delta_{\text{LB}}(\mu,M,1)=K+M-\mu M(K+M-1).
\end{align} Now, from \eqref{eq:up_bound_delta_os_ach_DE} and \eqref{eq:lw_bound_delta_lb_DE}, we have
\begin{align}\label{eq:temp_gap_Reg_DE}
\frac{\delta_{\text{OS}}^{\text{(D,E)}}(\mu)}{\delta^{\star}(\mu)}&\leq\frac{K+M-\mu M\Big(K+\frac{M+1}{2}\Big)}{K+M-\mu M(K+M-1)}\nonumber\\&\stackrel{(a)}=1+\frac{\mu M \frac{(M-3)}{2}}{K+M-\mu M(K+M-1)}\nonumber\\&\stackrel{(b)}\leq 1+\frac{M-3}{2}=\frac{M-1}{2},
\end{align} %where step $(a)$ follows by addition and subtraction of the denominator $\delta_{\text{LB}}(\mu,M,1)$ from the numerator and step $(b)$ from the fact that the rational function is maximized by $\mu^{\star}=\frac{1}{M}$ in the domain $0\leq\mu\leq\frac{1}{M}$.
where the reasoning for the steps $(a)$ and $(b)$ are in agreement with the aforementioned steps of the other region pairs.

\ifCLASSOPTIONcaptionsoff
  \newpage
\fi

\bibliographystyle{IEEEtran}
\bibliography{content/bibliography}

\iffalse
\begin{IEEEbiography}{Jaber Kakar}
Biography text here.
\end{IEEEbiography}

% if you will not have a photo at all:
\begin{IEEEbiographynophoto}{Anas Chabaan}
Biography text here.
\end{IEEEbiographynophoto}

\begin{IEEEbiographynophoto}{Aydin Sezgin}
Biography text here.
\end{IEEEbiographynophoto}
\fi

\end{document}